\documentclass[a4paper,12pt,reqno]{article}
\usepackage{a4wide,amssymb,amsmath,amsthm}
\usepackage{color,mathrsfs}
\usepackage{todonotes}
\usepackage{hyperref}
\usepackage{tikz}
\usetikzlibrary{shapes,arrows}

\newcommand{\beq}{\begin{equation}}
\newcommand{\eeq}{\end{equation}}

\newcommand{\ben}{\begin{enumerate}}
\newcommand{\een}{\end{enumerate}}

\newcommand{\M}{\mathcal{M}}
\newcommand{\N}{\mathcal{N}}
\newcommand{\Ms}{\textbf{\textit{M}}}
\newcommand{\Ns}{\textbf{\textit{N}}}
\newcommand{\g}{\textbf{\textit{g}}}
\newcommand{\h}{\textbf{\textit{h}}}
\newcommand{\Mgot}{(\M,\g,\mathfrak{o},\mathfrak{t})}

\newcommand{\EM}{E_\Ms}
\newcommand{\PM}{P_\Ms}
\newcommand{\EMh}{E_{\Ms[\h]}}
\newcommand{\PMh}{P_{\Ms[\h]}}
\newcommand{\EMsh}{E_{\Ms[s\h]}}
\newcommand{\PMsh}{P_{\Ms[s\h]}}
\newcommand{\A}{\mathscr{A}}
\newcommand{\Ak}{\mathscr{A}^{\text{\textnormal{kin}}}}
\newcommand{\Ad}{\mathscr{A}^{\text{\textnormal{dyn}}}}
\newcommand{\Ab}{\A^\bullet}
\newcommand{\W}{\mathscr{W}}
\newcommand{\Wb}{\mathscr{W}^\bullet}
\newcommand{\Wk}{\mathscr{W}^{\text{\textnormal{kin}}}}
\newcommand{\Wd}{\mathscr{W}^{\text{\textnormal{dyn}}}}
\newcommand{\wkin}{\omega^\text{\textnormal{kin}}}
\newcommand{\wdyn}{\omega^\text{\textnormal{dyn}}}
\newcommand{\F}{\mathscr{F}}
\newcommand{\T}{\mathscr{T}}
\newcommand{\J}{\mathscr{J}}
\newcommand{\JJ}{\widetilde{\mathscr{J}}}
\newcommand{\K}{\mathscr{K}}
\renewcommand{\H}{\mathscr{H}}
\newcommand{\rce}{\operatorname{rce}}
\newcommand{\tenpow}[2]{{#1}^{\otimes#2}}
\renewcommand{\bar}{\overline}
\newcommand{\supp}{\operatorname{supp}}
\newcommand{\cadv}{\chi^{\text{\textnormal{adv}}}}
\newcommand{\cret}{\chi^{\text{\textnormal{ret}}}}
\newcommand{\Sadv}{\Sigma^\text{\textnormal{adv}}}
\newcommand{\Sret}{\Sigma^\text{\textnormal{ret}}}
\newcommand{\Cs}{C^\infty}
\newcommand{\Ccs}{\Cs_0}
\newcommand{\id}{\mathbf{1}}
\newcommand{\sM}{\sim_\Ms}
\newcommand{\Loc}{\textnormal{\textsf{Loc}}}
\newcommand{\Alg}{\textnormal{\textsf{Alg}}}
\newcommand{\dds}{\frac{d}{ds}}
\newcommand{\Zeta}{Z}
\newcommand{\tl}{\widetilde{\lambda}}
\newcommand{\cH}{\check{H}}
\newcommand{\hEM}{\hat{E}_\Ms}
\newcommand{\cEM}{\check{E}_\Ms}

\newtheorem{theorem}{Theorem}
\newtheorem{corollary}[theorem]{Corollary}
\newtheorem{lemma}[theorem]{Lemma}

\numberwithin{theorem}{section}
\numberwithin{equation}{section}

\begin{document}

\title{Dynamical locality of the nonminimally coupled scalar field and enlarged algebra of Wick polynomials}
\author{Matthew Ferguson\footnote{\textsf{mtf500@york.ac.uk}}\\Department of Mathematics, University of York}
\date{March 9, 2012}
\maketitle

\begin{abstract}
We discuss dynamical locality in two locally covariant quantum field theories, the nonminimally coupled scalar field and the enlarged algebra of Wick polynomials. We calculate the relative Cauchy evolution of the enlarged algebra, before demonstrating that dynamical locality holds in the nonminimally coupled scalar field theory. We also establish dynamical locality in the enlarged algebra for the minimally coupled massive case and the conformally coupled massive case.
\end{abstract}

\section{Introduction}
\label{sect:intro}

The concept of Axiomatic Quantum Field Theory has traditionally been explored only in Minkowski space: in particular, the Wightman axioms \cite{pctwightman} and the Haag-Kastler axioms \cite{haaglocalqf} outline ways of providing a set of axioms for a quantum field theory to obey. Over the past decade, advances have been made in the area of Axiomatic Quantum Field Theory in curved spacetimes. In particular, the work by Brunetti, Fredenhagen and Verch \cite{bfv} outlined a set of axioms, similar to the Haag-Kastler axioms for Quantum Field Theory on Minkowski space, that should be obeyed by any QFT that can be defined on curved spacetimes. The Haag-Kastler axiomatic framework is often described as \emph{Algebraic Quantum Field Theory}; the axioms lay out certain properties that should be held by any legitimate assignment of an algebra of observables to each arbitrary region of Minkowski space. Extending algebraic QFT to curved spacetime involves examining the ways one might amend these axioms to define the properties held by a suitable assignment of an algebra of observables to arbitrary regions of arbitrary spacetimes. In practice, though, to achieve meaningful results we have to apply some restrictions to the type of region and the type of spacetime we are allowed to choose. The axioms proposed in \cite{bfv} use the tools of category theory; the allowed regions in this case are open, globally hyperbolic subregions of globally hyperbolic spacetimes (definitions are given in section \ref{sect:lcdl}), and allow us to think of a particular quantum field theory as a functor between the category \Loc\ whose objects are globally hyperbolic spacetimes, and the category \Alg\ whose objects are at the very least $\ast$-algebras, but may possibly possess additional structure.

	However, it turns out that these axioms alone allow for some rather undesirable pathological theories. In particular, some very recent work by Fewster and Verch \cite{dynloc} has shown that certain theories may satisfy the BFV axioms despite in some sense describing different physics depending on the spacetime to which it assigns the algebra of observables. The question of precisely what is meant by a theory representing the same physics in all possible spacetimes is still, by and large, an open one. While it is desirable to find a condition on theories that somehow formalises this property, this question is more easily answered by comparing theories with one other, and so the  \emph{SPASs} (Same Physics in All Spacetimes) condition proposed in the aforementioned paper is a condition on \emph{classes} of theories. It is intended to be a necessary condition for such a class to comprise theories, 
each of which represents the same physics in all spacetimes, according to some common definition of the term. In this paper we are concerned with the class of \emph{dynamically local} theories, which is shown in \cite{dynloc} to satisfy the SPASs condition.

The property of dynamical locality has other desirable consequences such as a no-go theorem for natural states, and dynamical locality has so far been demonstrated for some linear theories, including the minimally coupled massive Klein-Gordon field and the massless current algebra. However, it fails in the case of the minimally coupled massless Klein-Gordon field. It is therefore desirable to find further examples of well-known theories that can be constructed in a locally covariant way that either satisfy or violate dynamical locality.
We will prove in this paper that the nonminimally coupled Klein-Gordon scalar field is dynamically local in both the massive and massless case, and also that the extended algebra of noninteracting Wick polynomials can be shown to be dynamically local in the minimally coupled massive and conformally coupled massive cases; however, it fails to be dynamically local in the minimally coupled massless case.


\section{Local covariance and Dynamical Locality}
\label{sect:lcdl}

We are using the prescription in \cite{bfv} for the construction of locally covariant theories, in which a theory is considered to be functor from a category of spacetimes to a category of algebras. We must therefore first define the categories we will be using. We will follow the definitions and notation in \cite{dynloc} for the category of globally hyperbolic spacetimes. This category is denoted \Loc; its objects are quadruples $\Ms=\Mgot$ where $\M$ is a smooth paracompact orientable nonempty $n$-dimensional manifold with finitely many connected components, $\g$ is a smooth time-orientable metric for $\M$ with signature $+{}-\cdots-{}$, and $\mathfrak{o}$ and $\mathfrak{t}$ are choices of orientation and time-orientation respectively for $\M$. These spacetimes must also satisfy global hyperbolicity: there can be no closed causal curves in $\M$, and for each pair $p,q\in\M$ the intersection $J_\Ms^-(p)\cap J_\Ms^+(q)$ must be compact, where $J_\Ms^\pm(p)$ denotes the causal future ($+$) or past ($-$) of $p$ in $\M$.

An arrow of \Loc\ from an object $\Ns=(\N,\g_\N,\mathfrak{o}_\N,\mathfrak{t}_\N)$ to a second object $\Ms=(\M,\g_\M,\mathfrak{o}_\M,\mathfrak{t}_\M)$ is a smooth embedding $\psi:\N\hookrightarrow\M$ that is isometric (i.e.\ $\psi^\ast\g_\M=\g_\N$) and orientation- and time-orientation-preserving (i.e.\ $\psi^\ast\mathfrak{o}_\M=\mathfrak{o}_\N$, $\psi^\ast\mathfrak{t}_\M=\mathfrak{t}_\N$). It must also respect the causal structure: the image $\psi(\N)\subseteq\M$ must be \emph{causally convex} in $\M$, i.e.\ each causal curve in $\M$ with both endpoints lying within $\psi(\N)$ must be entirely contained within $\psi(\N)$.

A \emph{Cauchy surface} $\Sigma$ for a spacetime $\Ms$ is a subset of $\M$ that is intersected by every inextendible timelike curve in $\M$ exactly once. Clearly no Cauchy surface can have a timelike tangent at any point, but this definition does allow a Cauchy surface to have a null tangent; consequently we will refer to a Cauchy surface whose tangents are all spacelike as a \emph{spacelike Cauchy surface}. Global hyperbolicity of $\Ms$ is equivalent to $\Ms$ containing a smooth spacelike Cauchy surface \cite{bernalsanchez03}. An arrow in $\Loc$ whose image contains a Cauchy surface for its target is called a \emph{Cauchy arrow}. We may safely blur the distinction between a spacetime and its underlying manifold, so in the remainder of this paper we may use the same notation (e.g.\ $\Ms$) for both; for example, we will denote by $\Ccs(\Ms)$ the space of compactly supported smooth functions on the underlying manifold $\M$.

The category whose objects are candidates for the algebras of observables of a theory is denoted \Alg. The objects of \Alg\ are unital $\ast$-algebras, and the morphisms are unit-preserving $\ast$-monomorphisms.


\subsection{Locally covariant theories}
\label{sect:lc}
A \emph{locally covariant quantum field theory} is defined to be a covariant functor from \Loc\ to \Alg\ \cite{bfv}. That is, a theory $\A$ maps objects of \Loc\ to objects of \Alg, and arrows of \Loc\ to arrows of \Alg, such that:
\begin{itemize}
	\item for any \Loc-arrow $\psi:\Ns\hookrightarrow\Ms$, the arrow $\A(\psi)$ has domain $\A(\Ns)$ and codomain 
	$\A(\Ms)$,
	\item for any two \Loc-arrows $\psi_1:\textbf{\textit{O}}\hookrightarrow\Ns,$ $\psi_2:\Ns\hookrightarrow\Ms$, 
	we have $\A(\psi_2\circ\psi_1)=\A(\psi_2)\circ\A(\psi_1)$,
	\item for any spacetime $\Ms$, we have $\A(\text{id}_\Ms)=\text{id}_{\A(\Ms)}$.
\end{itemize}
While this is the only property a theory needs to satisfy to be locally covariant, we generally wish to apply some further conditions on the theories we work with. In particular, there is no condition pertaining to causality in the basic definition of a locally covariant theory. A locally covariant theory $\A$ is said to be \emph{causal} if it has the following property: let $\psi_1:\Ns_1\hookrightarrow\Ms,\ \psi_2:\Ns_2\hookrightarrow\Ms$ be arrows in \Loc\ such that the images $\psi_1(\Ns_1)$ and $\psi_2(\Ns_2)$ are causally disjoint in $\Ms$. Then $[\A(\psi_1)A_1,\A(\psi_2)A_2]=0$ for any $A_1\in\A(\Ns_1)$, $A_2\in\A(\Ns_2)$.

We will also generally require our theories to satisfy the \emph{timeslice axiom}. Suppose $\psi:\Ns\hookrightarrow\Ms$ is an arrow in \Loc; a locally covariant theory $\A$ obeys the timeslice axiom if the \Alg-arrow $\A(\psi)$ is an isomorphism whenever the image of $\Ns$ in $\Ms$ under $\psi$ contains a Cauchy surface for $\Ms$ (alternatively, $\A(\psi)$ is an isomorphism whenever $\psi$ is a Cauchy arrow). The timeslice axiom allows us to define an automorphism of an algebra $\A(\Ms)$ called the relative Cauchy evolution, which is defined as follows.

For any spacetime $\Ms=\Mgot$, we define $\h\in\Ccs(T^0_2\M)$ to be a \emph{metric perturbation} if it is symmetric and the spacetime $\Ms[\h]=(\M,\g+\h,\mathfrak{o},\mathfrak{t}')$ is also an object in \Loc\ (where $\mathfrak{t}'$ is the unique choice of  time-orientation that coincides with $\mathfrak{t}$ outside $\supp(\h)$). The set of all such metric perturbations on $\Ms$ is denoted $H(\Ms)$, and for any $O\subset\M$ we denote by $H(\Ms;O)$ all $\h\in H(\Ms)$ whose support lies within $O$.

Given some $\h\in H(\Ms)$, we pick globally hyperbolic subregions $\N^\pm$ of $\M$ such that each contains a Cauchy surface $\Sigma^\pm$ for $\Ms$, and such that $\N^\pm\subseteq\M\setminus J_\Ms^\mp(\supp(\h))$. Now, we can consider the spacetimes $\Ns^\pm=(\N^\pm,\g|_{\N^\pm},\mathfrak{o}|_{\N^\pm},\mathfrak{t}|_{\N^\pm})$ in their own right; each is a sub-spacetime of both $\Ms$ and $\Ms[\h]$, and we denote by $\iota^\pm,\iota^\pm[\h]$ the canonical embeddings of $\Ns^\pm$ respectively into $\Ms$ and $\Ms[\h]$. If a locally covariant theory $\A$ satisfies the timeslice axiom, then the arrows $\A(\iota^\pm)$ and $\A(\iota^\pm[\h])$ must be isomorphisms. It follows that we can form an automorphism $\rce_\Ms[\h]$ on $\A(\Ms)$ defined by
\[
	\rce_\Ms[\h]=\A(\iota^-)\circ\A(\iota^-[\h])^{-1}\circ\A(\iota^+[\h])\circ\A(\iota^+)^{-1},
\]
called the \emph{relative Cauchy evolution} on $\Ms$ induced by $\h$. The relative Cauchy evolution can be shown to be independent of the choice of future and past subspacetimes $\Ns^\pm$ \cite[Prop.\ 3.3]{dynloc}.


\subsection{Dynamical locality}
\label{sect:dynloc}

It is natural to ask the question of whether the condition of local covariance, with the timeslice axiom, is enough to ensure that a theory is ``physically realistic''. As discussed before, the existence of certain pathological locally covariant theories has motivated the discussion in \cite{dynloc}, where the idea of the Same Physics in All Spacetimes (SPASs) is introduced as a condition on classes of theories that is claimed to be necessary for the theories to be considered physically realistic. A class of theories $T$ has the SPASs property if, whenever
\begin{itemize}
	\item $\A,\mathscr{B}\in T$,
	\item there exists a natural transformation $\zeta:\A\stackrel\cdot\longrightarrow \mathscr{B}$, and
	\item there exists a globally hyperbolic spacetime $\Ms$ on which $\zeta_\Ms$ is an isomorphism,
\end{itemize}
then $\zeta$ is a natural isomorphism (i.e.\ $\zeta_\Ns$ is an isomorphism for each globally hyperbolic spacetime $\Ns$).
It is shown in \cite{dynloc} that one can construct a class $T$ of locally covariant causal theories that obey the timeslice axiom, but such that $T$ does not have the SPASs property. To this end, it is suggested that the additional axiom of \emph{dynamical locality}, defined below, is imposed. The class of dynamically local theories is a subclass of the locally covariant theories that obey the timeslice axiom, but it has the added advantage of satisfying the SPASs condition.

We first define the \emph{kinematic nets} and \emph{dynamical nets} of a locally covariant theory $\A$ obeying the timeslice axiom. Let $\Ms$ be a globally hyperbolic spacetime and $O$ be a globally hyperbolic open subregion of $\Ms$ with finitely many connected components, all of which are causally disjoint (we denote by $\mathscr{O}(\Ms)$ the set of possible such $O$). Clearly we can regard $\Ms|_O$ as a globally hyperbolic spacetime in its own right. We will denote the map embedding $\Ms|_O$ into $\Ms$ by $\iota_{\Ms;O}$. When we apply the functor $\A$ to $\Ms|_O$, we get the algebra $\A(\Ms|_O)$, which can be embedded in $\A(\Ms)$ by the map $\alpha^{\text{kin}}_{\Ms;O}$, defined to be the result of applying the same functor to $\iota_{\Ms;O}$. The \emph{kinematic net} is defined to be the map which assigns $O\mapsto\alpha^\text{kin}_{\Ms;O}$. The algebra obtained by applying $\A$ to the restriction $\Ms|_O$ is called the \emph{kinematic algebra} of $O$, denoted by $\Ak(\Ms;O)=\A(\Ms|_O)$.

Given such $\Ms$ and $O$, we can also define the \emph{dynamical net} as follows: given $O\in\mathscr{O}(\Ms)$, and compact $K\subset O$, we let
\[
	\A^\bullet(\Ms;K)=\{A\in\A(\Ms):\rce_\Ms[\h]A=A\text{ for all }\h\in H(\Ms;K^\perp)\}.
\]
Here $K^\perp=\Ms\setminus J_\Ms(K)$ denotes the causal complement of a compact $K\subseteq\Ms$.
We then define the \emph{dynamical algebra} as
\beq
	\label{eqn:dynalgdef}
	\Ad(\Ms;O)=\bigvee_{K\in\mathscr{K}(\Ms;O)}\A^\bullet(\Ms;K),
\eeq
where $\mathscr{K}(\Ms;O)$ is the set of compact subsets of $\Ms$ with a \emph{multi-diamond} neighbourhood based in $O$. Here a multi-diamond is a finite union of causally disjoint diamonds, where we use the following definition from \cite{brunruzz}: a \emph{diamond} is a set $D_\Ms(B)$ such that there exists a spacelike Cauchy surface $\Sigma\subset\Ms$, and a chart $(U,\phi)$ of $\Sigma$, where $\phi(B)$ is a nonempty open ball in $\mathbb{R}^{n-1}$ with closure contained in $\phi(U)$, and $D_\Ms(B)$ denotes the domain of dependence of $B$.
The inclusion
\[
	\alpha^{\text{dyn}}_{\Ms;O}:\Ad(\Ms;O)\hookrightarrow\A(\Ms)
\]
is unique (up to isomorphism), and we define the dynamical net to be the map which assigns $O\mapsto\alpha_{\Ms;O}^\text{dyn}$.

A theory is defined to be \emph{dynamically local} if for every globally hyperbolic spacetime $\Ms$ and nonempty $O\in\mathscr{O}(\Ms)$, we have $\Ak(\Ms;O)\cong\Ad(\Ms;O)$, or alternatively
\[
	\alpha^\text{dyn}_{\Ms;O}\cong\alpha^\text{kin}_{\Ms;O}.
\]
This is equivalent to demanding that for all such $O,\Ms$ we have 
\[
	\alpha^\text{dyn}_{M;O}(\Ad(M;O))=\alpha^\text{kin}_{M;O}(\Ak(M;O)).
\]
For an additive theory, that is, one in which $\Ak(M;O)$ is generated by its subalgebras
corresponding to relatively compact subregions of $O$, \cite[Prop. 6.1]{dynloc} entails that we always have $\alpha^\text{kin}_{M;O}(\Ak(M;O))\subseteq\alpha^\text{dyn}_{M;O}(\Ad(M;O))$, and therefore it is sufficient for dynamical locality to show that 
\beq
	\label{eqn:dynloccond}
	\alpha^\text{dyn}_{M;O}(\Ad(M;O))\subseteq\alpha^\text{kin}_{M;O}(\Ak(M;O)).
\eeq
This applies to all the theories we will study here.


\section{The Klein-Gordon Field and Wick Polynomials as LCTs}
\label{sect:const}

\subsection{Construction of the Klein-Gordon Theory}
\label{sect:kgfconst}

The Klein-Gordon operator on a spacetime $\Ms$ is denoted $P_\Ms =\Box_\g+\xi R_\g+m^2$. We call any solution $\phi\in\Cs(\Ms)$ to the field equation $P_\Ms \phi=0$ a \emph{classical solution} to the field equation. The coupling constant $\xi\in\mathbb{R}$ and the mass $m\geq0$ are held constant over all spacetimes. The Klein-Gordon operator has associated with it two unique continuous linear operators $E_\Ms ^\pm:\Ccs(\Ms)\to\Cs(\Ms)$ with the properties
\begin{align}
	E_\Ms ^\pm P_\Ms f&=f=P_\Ms E_\Ms ^\pm f\label{eqn:funsol1}\\
	\supp(E_\Ms ^\pm f)&\subseteq J_M^\pm(\supp(f))\label{eqn:funsol2}
\end{align}
for any $f\in\Ccs(\Ms)$ \cite{wald} (here we identify $\EM^\pm\PM f$ and $\PM\EM^\pm f$ with their preimage under the canonical embedding $\iota:\Ccs(\Ms)\hookrightarrow\Cs(\Ms)$). The operator $E_\Ms =E_\Ms ^--E_\Ms ^+$ is the \emph{(advanced-minus-retarded) fundamental solution} for the Klein-Gordon field on $\Ms$, and any classical solution $\phi$ with compact support on Cauchy surfaces is of the form $\phi=E_\Ms f$ for some $f\in\Ccs(\Ms)$. We denote by $E_\Ms (x,y)$ the antisymmetric bidistribution on test functions satisfying
\[
	\int_\Ms dy\,E_\Ms(x,y)f(y)=(E_\Ms f)(x)
\]
for each $f\in\Ccs(\Ms)$. Furthermore, we denote
\[
	E_\Ms (f,f')=\int_{\Ms}dx\,f(x)(\EM f')(x)=\int_{\Ms^{\times 2}}dx\,dy\,f(x)
	\EM(x,y)f'(y),
\]
for $f,f'\in\Ccs(\Ms)$. Note that this entails
\beq
	\label{eqn:Eantiprop}
	\int_\Ms dx\,f(x)(\EM f')(x)=-\int_\Ms dx\,(\EM f)(x)f'(x).
\eeq

Given a fixed spacetime $\Ms$, we can construct the algebra of the Klein-Gordon quantum field theory as the unital $\ast$-algebra generated by elements $\Phi_\Ms (t)$, $t\in\Ccs(\Ms)$ satisfying the following four conditions:
\begin{subequations}
\begin{align}
	&\text{The assignment }t\mapsto\Phi_\Ms (t)\text{ is linear},\label{eqn:kgcond1}\\
	&\Phi_\Ms (t)^\ast=\Phi(\bar t),\label{eqn:kgcond2}\\
	&[\Phi_\Ms (t),\Phi_\Ms (t')]=iE_\Ms (t,t')\id,\label{eqn:kgcond3}\\
	&\Phi_\Ms (P_\Ms t)=0.\label{eqn:kgcond4}
\end{align}
\end{subequations}
While it can be observed that this algebra can be represented simply as a deformation of the symmetric tensor algebra $\Gamma_\odot(E_\Ms \Ccs(\Ms))$ (see e.g.\ \cite{dynloc2}), alternative ways of constructing this algebra can be seen in \cite{bf:qftcb,chilfred}. The following treatment is based on \cite{bf:qftcb}.

If we remove the condition \eqref{eqn:kgcond4}, then the algebra generated by the other three conditions is isomorphic to the unital $\ast$-algebra $\F(\Ms)$ comprising functionals on $\Cs(\Ms)$ of the form
\beq
	\label{eqn:funcform}
	F[f]=\sum_{n=0}^N\int_{\Ms^{\times n}}d^nx\,t_n(x_1,\ldots,x_n)f(x_1)\cdots f(x_n),
\eeq
where each $t_n$ is a totally symmetric finite sum of products of test functions in one variable:
\[
	t_n(x_1,\ldots,x_n)=\mathbf{S}\sum_{j\text{ finite}}
	\prod_{k=1}^n\varphi_{jk}(x_k)
\]
for some $\varphi_{jk}\in\Ccs(\Ms)$, where $\mathbf{S}$ denotes symmetrisation. We denote the set of all such $t_n$ as $\F^n(\Ms)$; we define $\F^0(\Ms)=\mathbb{C}$, and we may note that $\F^1(\Ms)=\Ccs(\Ms)$. We will use the shorthand notation
\[
	t_n[f]=\int_{\Ms^{\times n}}d^nx\,t_n(x_1,\ldots,x_n)f(x_1)\cdots f(x_n).
\]
For each $F=\sum_{n=0}^Nt_n$ with $t_N\neq0$ we define $O(F)=N<\infty$.

The $k^\text{th}$ functional derivative of $F=\sum_{n=0}^Nt_n$ is given by
\beq
	\label{eqn:kgfuncderiv1}
	F^{(k)}[f](x_1,\ldots,x_k)=\sum_{n=k}^Nt^{(k)}_n[f](x_1,\ldots,x_k),
\eeq
where for $k\leq n$,
\beq
	\label{eqn:kgfuncderiv2}
	t^{(k)}_n[f](x_1,\ldots,x_k)=\frac{n!}{(n-k)!}\int_{\Ms^{\times(n-k)}}dx_{k+1}
	\cdots dx_n\,t_n(x_1,\ldots,x_n)f(x_{k+1})\cdots f(x_n).
\eeq
For any $f\in\Ccs(\Ms)$, we may regard the functional derivative $F^{(k)}[f](x_1,\ldots,x_k)$ of an element $F\in\F(\Ms)$ as an element of $\F^k(\Ms)$ for $k\leq O(F)$. Addition in $\F(\Ms)$ is given by addition of functionals, and products of elements are defined by
\beq
	\label{eqn:kgproddef}
	(F\star F')[f]=\sum_{k=0}^{\text{min}(O(F),O(F'))}\frac{i^k}{2^kk!}E_\Ms^k\left(F^{(k)}[f],F'^{(k)}[f]\right),
\eeq
where for $t,t'\in\F^k(\Ms)$, we have
\beq
	\label{eqn:kgproddef2}
	E_\Ms^k(t,t')=\int_{\Ms^{\times(2k)}}d^kx\,d^ky\,t(x_1,\ldots,x_k)t'(y_1,\ldots,y_k)\prod_{j=1}^k E_\Ms(x_j,y_j),
\eeq
and for $\alpha,\beta\in\F^0(\Ms)=\mathbb{C}$,
\[
	E_\Ms^0(\alpha,\beta)=\alpha\beta.
\]
The product \eqref{eqn:kgproddef} can be shown to be associative. The involution of $F=\sum_{n=0}^Nt_n\in\F(\Ms)$ is given by $F^\ast=\sum_{n=0}^N\bar t_n$, and the identity with respect to the $\star$ product is the constant functional $\id[f]\equiv 1$.


The algebra $\F(\Ms)$ is generated by elements satisfying conditions \eqref{eqn:kgcond1}--\eqref{eqn:kgcond3}, so it should be the case that we can recover the algebra $\A(\Ms)$ by reapplying condition \eqref{eqn:kgcond4}. The set $\J(\Ms)$, defined to be the set containing all elements $F\in\F(\Ms)$ satisfying $F[E_\Ms f]=0$ for all $f\in\Ccs(\Ms)$, is a two-sided $\ast$-ideal in $\F(\Ms)$ \cite{bf:qftcb};
on taking the quotient $\F(\Ms)/\J(\Ms)$ we obtain the algebra $\A(\Ms).$ We have the following result:
\begin{lemma}
	\label{lem:kgfuncpolar}
	Let $F=\sum_{n=0}^Nt_n\in\J(\Ms)$ for some spacetime $\Ms$, where $t_n\in\F^n(\Ms)$ for each $n=0,1,\ldots,N$. Then
	\[
		\tenpow{\EM} nt_n=0
	\]
	as a (nonlinear) functional on $\Ccs(\Ms)$ for all $n$.
\end{lemma}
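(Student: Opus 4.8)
The plan is to reduce the statement to a homogeneity (``polarisation'') argument in a scalar parameter multiplying the test function, using nothing about $\EM$ beyond linearity and the integration-by-parts identity \eqref{eqn:Eantiprop}.

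First I would unpack what $\tenpow{\EM}{n}t_n = 0$ means as a nonlinear functional on $\Ccs(\Ms)$: for $f\in\Ccs(\Ms)$ it is the number
\[
(\tenpow{\EM}{n}t_n)[f] = \int_{\Ms^{\times(2n)}} d^n x\, d^n y\, \Big(\prod_{j=1}^{n}\EM(x_j,y_j)\Big)\, t_n(y_1,\ldots,y_n)\, f(x_1)\cdots f(x_n),
\]
obtained by letting $\EM$ act on each of the $n$ arguments of $t_n$ and then contracting with $f$ in every slot. Carrying out the $x_j$-integrations one at a time and applying \eqref{eqn:Eantiprop} in each of them transfers $\EM$ back onto $f$ at the cost of one sign per slot, so that $(\tenpow{\EM}{n}t_n)[f] = (-1)^n\, t_n[\EM f]$. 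Hence the lemma is equivalent to the assertion that $t_n[\EM f] = 0$ for every $n$ and every $f\in\Ccs(\Ms)$.

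Next I would bring in the hypothesis $F\in\J(\Ms)$. Fixing $f\in\Ccs(\Ms)$ and noting that $\lambda f\in\Ccs(\Ms)$ for every $\lambda\in\R$, the definition of $\J(\Ms)$ gives $F[\EM(\lambda f)]=0$. Linearity of $\EM$ together with the homogeneity $t_n[\lambda g]=\lambda^n t_n[g]$ (the integrand of $t_n[\,\cdot\,]$ being an $n$-fold product of its argument) turns this into
\[
0 = F[\EM(\lambda f)] = \sum_{n=0}^{N} \lambda^n\, t_n[\EM f] \qquad \text{for all } \lambda\in\R .
\]
The right-hand side is a polynomial in $\lambda$ with fixed (in general complex) coefficients $t_n[\EM f]$ vanishing on all of $\R$, hence it is the zero polynomial, so $t_n[\EM f]=0$ for each $n$. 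Since $f$ was arbitrary, the first step then yields $\tenpow{\EM}{n}t_n = 0$ for all $n$, as claimed.

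I do not expect a genuine obstacle: the whole argument is a polarisation of the single relation $F[\EM f]=0$. The only points requiring a little care are the sign bookkeeping when moving $\EM$ off the $x_j$ variables (each transfer contributes a factor $-1$, so the overall sign is $(-1)^n$, which is irrelevant since we are proving a vanishing statement), and the remark that $t_n[\EM f]$ is well defined even though $\EM f$ is merely smooth rather than compactly supported, which holds because each $t_n$ has compact support.
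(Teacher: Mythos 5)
Your proof is correct and is essentially the paper's own argument: both use the polarisation trick $0=F[\EM(\lambda f)]=\sum_n \lambda^n t_n[\EM f]$ to isolate each $t_n[\EM f]$, then use the antisymmetry identity \eqref{eqn:Eantiprop} to convert $t_n[\EM f]=0$ into $(\tenpow{\EM}{n}t_n)[f]=0$. The only difference is the order in which you present the two steps, which is immaterial.
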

\begin{proof}
	If $F=\sum_{n=0}^N t_n\in\J(\Ms)$ with $t_n\in\F^n(\Ms)$, then for any $f\in\Ccs(\Ms)$ and $\kappa\in\mathbb{R}$ we have
\[
	0=F[\EM(\kappa f)]=\sum_{n=0}^N\kappa^nt_n[\EM f].
\]
Consequently $t_n[\EM f]=0$ for each $n$, and so by \eqref{eqn:funcform}, and using the fact that for any $g,g'\in\Ccs(\Ms)$ we have $\int_\Ms dx \,g(x)\EM g'(x)=-\int_\Ms dx\,g'(x)\EM g(x)$, it follows that
\[
	(\tenpow\EM nt_n)[f]=(-1)^nt_n[\EM f]=0.
\]
This holds for all $f\in\Ccs(\Ms)$, so the result follows.
\end{proof}

The ideal $\J(\Ms)$ generates an equivalence relation $\sM$; i.e.\ for any $F,F'\in\F(\Ms)$, $F\sM F'$ if and only if $F-F'\in\J(\Ms)$, or equivalently $F[\EM f]=F'[\EM f]$ for all $f\in\Ccs(\Ms)$. For any $F\in\F(\Ms)$, the equivalence class of $F$ under $\sM$ is denoted $[F]_\Ms$; the elements of the algebra $\A(\Ms)$ constitute the set of equivalence classes $[F]_\Ms$ with $F\in\F(\Ms)$.

Throughout this paper, we will wish to define the pullback of a \Loc-arrow $\psi:\Ns\hookrightarrow\Ms$ on a functional $F=\sum_{n=0}^Nt_n\in\F(\Ms)$, with $t_n\in\F^n(\Ms)$. Therefore, we denote
\[
	\psi^\ast F=\sum_{n=0}^N(\psi^{\otimes n})^\ast t_n.
\]
In order to construct the Klein-Gordon QFT as a locally covariant theory, we must now define the action of the \Alg-arrow $\A(\psi)$ for an arbitrary \Loc-arrow $\psi:\Ns\hookrightarrow\Ms$. Given such a $\psi$, we first define a map
\begin{align*}
	\F(\psi):\F(\Ns)&\to\F(\Ms)\\
	F&\mapsto F\circ\psi^\ast.
\end{align*}
To see that $\F(\psi)F$ is indeed an element of $\F(\Ms)$ for any $F\in\F(\Ns)$, note that for $F=\sum_{n=0}^Nt_n,$ $t_n\in\F^n(\Ns)$, we have
\begin{align*}
	(\F(\psi)F)[f]&=\sum_{n=0}^N\int_{\Ns^{\times n}}d^nx\,t_n(x_1,\ldots,x_n)f(\psi(x_1))\cdots f(\psi(x_n))\\
	&=\sum_{n=0}^N\int_{\Ms^{\times n}}d^nx\,\psi_\ast t_n(x_1,\ldots,x_n)f(x_1)\cdots f(x_n)
\end{align*}
for any $f\in\Cs(\Ms)$, where the pushforward $\psi_\ast:\F^n(\Ns)\to\F^n(\Ms)$ is defined as
\[
	\psi_\ast t_n(x_1,\ldots,x_n)=\begin{cases}t_n(\psi^{-1}(x_1),\ldots,\psi^{-1}(x_n)),&(x_1,\ldots,x_n)\in\psi(\Ns)^{\times n}\\
	0,&\text{otherwise}.
	\end{cases}
\]
Since $\psi^{-1}:\psi(\Ns)\to\Ns$ is a diffeomorphism, it follows that each $\psi_\ast t_n$ is an element of $\F^n(\Ms)$ as required. We define the action of $\psi_\ast$ on arbitrary $F\in\F(\Ns)$  by linearity, and note that $\psi^\ast\psi_\ast F=F$. For $F\in\F(\Ms)$, it also holds that $\psi_\ast\psi^\ast F=F$ if and only if the $n^\text{th}$ component of $F$ is supported in $\psi(\Ns)^{\times n}$ for $1\leq n\leq O(F)$. We may naturally define the push-forward on elements of $\Ccs(\Ns)$ by identifying it with the push-forward on $\F^1(\Ns)$.

We will now construct the map $\A(\psi):\A(\Ns)\to\A(\Ms)$ for a \Loc-arrow $\psi:\Ns\hookrightarrow\Ms$, and demonstrate that under this definition $\A$ becomes a covariant functor from \Loc\ to \Alg. 
\begin{lemma}
	\label{lem:kgfmor}
	Let $\Ns,\Ms$ be 
	objects in \Loc, and $\psi:\Ns\hookrightarrow\Ms$ be a \Loc-arrow. Then, for any $F,F'\in\F(\Ns)$ we have
	$F\sim_\Ns F'$ if and only if $\F(\psi)F\sM\F(\psi)F'$.
\end{lemma}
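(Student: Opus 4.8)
The plan is to reduce the statement about the equivalence relations $\sim_\Ns$ and $\sM$ to the concrete characterisation supplied by Lemma~\ref{lem:kgfuncpolar}: namely that $F\in\J(\Ns)$ if and only if $\tenpow{\EM[\Ns]}{n}t_n=0$ for every homogeneous component $t_n$ of $F$. Since both $\sim_\Ns$ and $\sM$ are defined by membership of a difference in the respective ideals, and since $\F(\psi)$ is linear and maps $\F^n(\Ns)$ to $\F^n(\Ms)$ component by component via the pushforward $\psi_\ast$, it suffices to show that for a single homogeneous tensor $t_n\in\F^n(\Ns)$ one has $\tenpow{\EM[\Ns]}{n}t_n=0$ as a functional on $\Ccs(\Ns)$ if and only if $\tenpow{\EM}{n}(\psi_\ast t_n)=0$ as a functional on $\Ccs(\Ms)$. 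Here I would use the identity $\psi^\ast\psi_\ast F=F$ from the preceding discussion to pass freely back and forth; the content is entirely in how $\EM[\Ns]$ and $\EM$ interact with $\psi$.

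The key geometric input is the intertwining relation between the fundamental solutions: for a \Loc-arrow $\psi:\Ns\hookrightarrow\Ms$ one has $\psi^\ast(\EM f)=\EM[\Ns](\psi^\ast f)$ for all $f\in\Ccs(\Ms)$, equivalently $\EM(\psi_\ast g)=\psi_\ast(\EM[\Ns]g)$ on $\psi(\Ns)$ for $g\in\Ccs(\Ns)$. This follows because $P_\Ms$ restricts to $P_\Ns$ under $\psi$ (the Klein--Gordon operator is built from the metric, scalar curvature, orientation and time-orientation, all of which $\psi$ preserves), because $\psi(\Ns)$ is causally convex so the support condition \eqref{eqn:funsol2} is compatible with restriction, and because $\EM^\pm$ is uniquely characterised by \eqref{eqn:funsol1}--\eqref{eqn:funsol2}. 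I would state this as a short preliminary observation (or cite it if it appears in \cite{bf:qftcb}), then feed it into the computation: writing out $(\tenpow{\EM}{n}(\psi_\ast t_n))[f]$ as an integral over $\Ms^{\times n}$, the pushforward $\psi_\ast t_n$ vanishes off $\psi(\Ns)^{\times n}$, so only the $n$ copies of $\EM$ restricted to $\psi(\Ns)$ contribute, and by the intertwining relation the whole expression collapses to $(\tenpow{\EM[\Ns]}{n}t_n)[\psi^\ast f]$. Since $\psi$ is an embedding, $\psi^\ast$ maps $\Ccs(\Ms)$ onto a subspace of $\Ccs(\Ns)$ large enough to detect whether $\tenpow{\EM[\Ns]}{n}t_n$ vanishes --- more carefully, every $g\in\Ccs(\Ns)$ arises as $\psi^\ast f$ for a suitable $f$ supported near $\psi(\Ns)$ (extend $\psi_\ast g$ by zero), so $(\tenpow{\EM}{n}(\psi_\ast t_n))[f]=0$ for all $f$ is equivalent to $(\tenpow{\EM[\Ns]}{n}t_n)[g]=0$ for all $g$. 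Combining this equivalence over all components $n$ with Lemma~\ref{lem:kgfuncpolar} applied on both $\Ns$ and $\Ms$, and using linearity of $\F(\psi)$ on $F-F'$, yields $F-F'\in\J(\Ns)\iff\F(\psi)(F-F')\in\J(\Ms)$, which is the claim.

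The main obstacle I anticipate is the bookkeeping around supports and the surjectivity-type argument in the last step: one must be sure that testing $\tenpow{\EM}{n}(\psi_\ast t_n)$ against functions $f\in\Ccs(\Ms)$ of the form $\psi^\ast f$ loses no information, which requires knowing that $t_n$ itself is (without loss of generality, after applying $\psi_\ast\psi^\ast$ to the original $F$) supported inside $\psi(\Ns)^{\times n}$ --- but since we only care about $\F(\psi)F = F\circ\psi^\ast$, which by construction already has its $n$th component equal to $\psi_\ast t_n$ and hence supported in $\psi(\Ns)^{\times n}$, this is automatic, and the reverse direction needs only the intertwining identity plus injectivity of $\psi_\ast$ on tensors. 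The ``only if'' direction is then essentially immediate and the ``if'' direction is where the extend-by-zero argument does its work; neither is deep, but care with the domains of $\EM^\pm$ (continuous linear maps $\Ccs\to\Cs$, so the distributional identities must be justified by density or by the explicit bidistribution kernel $\EM(x,y)$) is where a careless proof would go wrong.
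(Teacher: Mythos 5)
Your reduction to Lemma~\ref{lem:kgfuncpolar} is a reasonable idea, but the ``intertwining relation'' you invoke does not hold in the form you need, and this is precisely where the difficulty of the lemma lies. For $f\in\Ccs(\Ms)$, the pullback $\psi^\ast f$ is merely smooth on $\Ns$, not compactly supported (the intersection of $\supp f$ with the open set $\psi(\Ns)$ need not be compact), so $E_\Ns(\psi^\ast f)$ is not defined. What \emph{is} true, and what the paper uses, is the identity $E_\Ns g=\psi^\ast\EM\psi_\ast g$ for $g\in\Ccs(\Ns)$. Consequently your claimed ``collapse'' $(\tenpow{\EM}{n}\psi_\ast t_n)[f]=(\tenpow{E_\Ns}{n}t_n)[\psi^\ast f]$ is not available; the computation actually gives $(\tenpow{\EM}{n}\psi_\ast t_n)[f]=(-1)^n t_n[\psi^\ast\EM f]$, and $\psi^\ast\EM f$ need not lie in the range of $E_\Ns$ on $\Ccs(\Ns)$, since its Cauchy data on a Cauchy surface of $\Ns$ need not be compactly supported. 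You have also labelled the wrong direction as hard: the easy direction is $\F(\psi)F\sM\F(\psi)F'\Rightarrow F\sim_\Ns F'$, where the push-forward/extend-by-zero passage from $\Ccs(\Ns)$ to $\Ccs(\Ms)$ works; the hard direction is $F\sim_\Ns F'\Rightarrow\F(\psi)F\sM\F(\psi)F'$, where one must contend with $\psi^\ast\EM f$ for arbitrary $f\in\Ccs(\Ms)$.

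The paper closes this gap by exploiting that the components of $F$ and $F'$ are supported in a common compact set $K^{\times n}\subset\Ns^{\times n}$. Fixing a Cauchy surface $\Sigma_\Ns$ of $\Ns$, the set $S=J_\Ns(K)\cap\Sigma_\Ns$ is compact, and one chooses compactly supported smooth Cauchy data on $\Sigma_\Ns$ agreeing with the data of $\psi^\ast\EM f$ on $S$; this produces $g\in\Ccs(\Ns)$ with $E_\Ns g=\psi^\ast\EM f$ on $K$, which suffices because $F$ and $F'$ only see the restriction to $K$. Without this step (or a substitute, such as the kernel characterisation $\ker\tenpow{E_\Ns}{n}=\{\sum_k(P_\Ns)_ku_k\}$ proved later for the Wick case, which would let you kill $t_n[(\psi^\ast\EM f)^{\otimes n}]$ by integration by parts against $P_\Ns$), your argument does not close.
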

\begin{proof}
	If $\F(\psi)F\sim_\Ms \F(\psi)F'$ then we have $(\F(\psi)F)[\EM g]=(\F(\psi)F')[\EM g]$ for every $g\in\Ccs(\Ms)$. Now, for every $f\in\Ccs(\Ns)$ it holds that $E_\Ns f=\psi^\ast\EM\psi_\ast f$; since $\psi_\ast f\in\Ccs(\Ms)$, it follows that 
\[
F[E_\Ns f]=(\F(\psi)F)[\EM\psi_\ast f]=(\F(\psi)F')[\EM\psi_\ast f]=F'[E_\Ns f].
\]
Therefore $F\sim_\Ns F'$.
	
	  Now suppose that $F\sim_\Ns F'$. Since $O(F),$ $O(F')$ are finite, it follows that there is a compact region $K\subset\Ns$ with the property that the support of the $n^\text{th}$ components of both $F$ and $F'$ lie within $K^{\times n}$ for $1\leq n\leq\max(O(F),O(F'))$. Let $\Sigma_\Ns$ be a Cauchy surface for $\Ns$, and consider the intersection $S=J_\Ns(K)\cap\Sigma_\Ns$; for any classical solution $\EM f,$ $f\in\Ccs(\Ms)$, it will always be possible to pick a smooth pair of functions $(\varphi_f,\pi_f)$ on $\Sigma_\Ns$ which are compactly supported and coincide with the Cauchy data for $\psi^\ast\EM f$ on $S$ (even if $\psi(\Sigma_\Ns)$ cannot be extended to a Cauchy surface for $\Ms$). But since $(\varphi_f,\pi_f)$ are compactly supported they provide data for a solution $E_\Ns g$, for some $g\in\Ccs(\Ns)$. It then holds that $E_\Ns g$ must coincide with $\psi^\ast\EM f$ on the domain of determinacy of $S$; since this region contains $K$, it holds that $(E_\Ns g)|_K=(\psi^\ast\EM f)|_K$. It follows that
\begin{align*}
	(\F(\psi)F)[\EM f]&=(\F(\psi)F)[\psi_\ast E_\Ns g]=F[E_\Ns g]\\
	&=F'[E_\Ns g]=(\F(\psi)F')[\psi_\ast E_\Ns g]=(\F(\psi)F')[\EM f].
\end{align*}	  
Since the choice of $f\in\Ccs(\Ms)$ was arbitrary, we may conclude that $\F(\psi)F\sim_\Ms\F(\psi)F'$.
\end{proof}
\begin{lemma}
	\label{lem:fstarmon}
	Let $\Ns,\Ms$ be 
	objects in \Loc, and $\psi:\Ns\hookrightarrow\Ms$ be a \Loc-arrow. Then $\F(\psi)$ is a $\ast$-monomorphism.
\end{lemma}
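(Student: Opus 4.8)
The plan is to verify the three defining properties of a $\ast$-monomorphism in turn: that $\F(\psi)$ is an algebra homomorphism for the $\star$-products on $\F(\Ns)$ and $\F(\Ms)$, that it intertwines the involutions, and that it is injective. Injectivity is immediate and I would dispose of it first: if $\F(\psi)F = 0$ then $(\F(\psi)F)[f] = 0$ for all $f\in\Cs(\Ms)$, and taking $f$ supported in $\psi(\Ns)$ and pulling back through the diffeomorphism $\psi:\Ns\to\psi(\Ns)$ forces $F[h]=0$ for all $h\in\Cs(\Ns)$, hence $F=0$. (Equivalently one can invoke $\psi^\ast\psi_\ast F = F$, which was noted in the excerpt, to see that $\F(\psi)$ has a left inverse.) Compatibility with the involution is also routine: since $\F(\psi)$ acts componentwise as the pushforward $\psi_\ast$, and complex conjugation of test functions commutes with $\psi_\ast$, we get $(\F(\psi)F)^\ast = \F(\psi)(F^\ast)$ directly from the formula $F^\ast = \sum \bar t_n$. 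Unit preservation is clear since $\psi_\ast$ fixes the constant functional $\id$.

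The substantive point is that $\F(\psi)$ respects the $\star$-product \eqref{eqn:kgproddef}. Here the key input is the relation $E_\Ns f = \psi^\ast \EM \psi_\ast f$ for $f\in\Ccs(\Ns)$, already used in the proof of Lemma \ref{lem:kgfmor}; at the level of bidistributions this says $E_\Ns(x,y) = \EM(\psi(x),\psi(y))$ for $x,y\in\Ns$, which follows from \eqref{eqn:funsol1}--\eqref{eqn:funsol2} together with causal convexity of $\psi(\Ns)$ (the advanced/retarded solutions on $\Ns$ are the restrictions of those on $\Ms$). I would first record the behaviour of functional derivatives under $\F(\psi)$: from \eqref{eqn:kgfuncderiv1}--\eqref{eqn:kgfuncderiv2} one checks that $(\F(\psi)F)^{(k)}[\psi_\ast f] = \psi_\ast\bigl(F^{(k)}[f]\bigr)$ as elements of $\F^k(\Ms)$, since pushing $f$ forward and integrating over $\Ms^{\times(n-k)}$ is the same as integrating over $\Ns^{\times(n-k)}$ after pulling back. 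Then, evaluating $\bigl(\F(\psi)F \star \F(\psi)F'\bigr)$ on an arbitrary $f\in\Cs(\Ms)$, I would use the fact that $E_\Ms^k$ applied to test functions supported in $\psi(\Ns)^{\times k}$ reduces to $E_\Ns^k$ via the identity for $E_\Ns(x,y)$ above, so that each term of the sum \eqref{eqn:kgproddef} matches the corresponding term of $\F(\psi)(F\star F')$ evaluated at $f$.

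The main obstacle I anticipate is a bookkeeping one rather than a conceptual one: when evaluating $(\F(\psi)F)\star(\F(\psi)F')$ at a general $f\in\Cs(\Ms)$ that is \emph{not} supported in $\psi(\Ns)$, the functional derivatives $(\F(\psi)F)^{(k)}[f]$ are a priori only equal to $\psi_\ast(F^{(k)}[\psi^\ast f])$ after one observes that the non-$\psi(\Ns)$ part of $f$ never contributes — because every $t_n$ component of $F\in\F(\Ns)$ pushes forward to something supported in $\psi(\Ns)^{\times n}$, so only $f|_{\psi(\Ns)}$ ever enters. Making this restriction argument precise, and likewise checking that $E_\Ms^k\bigl(\psi_\ast t,\psi_\ast t'\bigr) = E_\Ns^k(t,t')$ term by term in \eqref{eqn:kgproddef2} using only the coincidence of $\EM$ and $E_\Ns$ on $\psi(\Ns)^{\times 2}$, is where the care is needed; once that is done the homomorphism property drops out of comparing \eqref{eqn:kgproddef} on both sides.
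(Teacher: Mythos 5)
Your proposal is correct and follows essentially the same route as the paper: it verifies injectivity via $\psi^\ast\psi_\ast = \mathrm{id}$, handles the involution componentwise, and reduces the $\star$-product computation to the identities $E_\Ns^k(t,t') = \EM^k(\psi_\ast t,\psi_\ast t')$ and the compatibility of $\F(\psi)$ with functional derivatives. The ``bookkeeping obstacle'' you flag (that $(\F(\psi)F)^{(k)}[f]$ depends only on $f|_{\psi(\Ns)}$) is exactly the content of the paper's identity $\psi_\ast F^{(k)}[\psi^\ast f] = (\F(\psi)F)^{(k)}[f]$, which the paper uses to move between the two sides, so the two arguments differ only in the direction they run this comparison.
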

\begin{proof}
	Let $F\in\F(\Ns)$ and $f\in\Cs(\Ms)$. Writing $F=\sum_{n=0}^Nt_n$ with $t_n\in\F^n(\Ns)$, we have
\[
	\F(\psi)(F^\ast)=\sum_{n=0}^N\psi_\ast \bar t_n=\sum_{n=0}^N\bar{\psi_\ast t_n}=(\F(\psi)F)^\ast.
\]
Now let $F,F'\in\F(\Ns)$; we have
\begin{align*}
	\F(\psi)(F\star F')&=\sum_k\frac{i^k}{2^kk!}\F(\psi)
	\left[E_\Ns^k\left(F^{(k)}[\,\cdot\,],F'^{(k)}[\,\cdot\,]\right)\right]\\
	&=\sum_k\frac{i^k}{2^kk!}E_\Ns^k\left(F^{(k)}[\psi^\ast\,\cdot\,],F'^{(k)}[\psi^\ast\,\cdot\,]\right).
\end{align*}
But for any distributions $t,t'\in \F^k(\Ns)$, we have
\[
	E_\Ns^k(t,t')=\EM^k(\psi_\ast t,\psi_\ast t'),
\]
and it is also easy to see that for any $F\in\F(\Ns)$, we have $F^{(k)}[\psi^\ast f]=\psi_\ast(\F(\psi)F)^{(k)}[f]$. It follows that
\begin{align*}
	\F(\psi)(F\star F')&=\sum_k\frac{i^k}{2^kk!}\EM^k\left((\F(\psi)F)^{(k)}[\,\cdot\,],
	(\F(\psi)F')^{(k)}[\,\cdot\,]\right)\\
	&=(\F(\psi)F)\star(\F(\psi)F').
\end{align*}
It remains to show that $\F(\psi)$ is injective. If $F,F'\in\F(\Ns)$ with $F\neq F'$, there exists some $f\in\Ccs(\Ns)$ with $F[f]\neq F'[f]$; it follows that $(\F(\psi)F)[\psi_\ast f]\neq(\F(\psi)F')[\psi_\ast f]$, and therefore $\F(\psi)F\neq\F(\psi)F'$.
\end{proof}
The final result to prove for $\F(\psi)$ is that it is indeed a covariant functor:
\begin{lemma}
	\label{lem:fisfunctor}
	The map $\F:\Loc\to\Alg$ which maps an object $\Ms$ to $\F(\Ms)$ and an arrow $\psi$ to $\F(\psi)$ is a covariant functor.
\end{lemma}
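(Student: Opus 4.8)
The plan is to verify the two remaining functor axioms, since the preceding discussion together with Lemmas \ref{lem:kgfmor} and \ref{lem:fstarmon} already establishes that $\F$ sends each object $\Ms$ of $\Loc$ to an object $\F(\Ms)$ of $\Alg$ and each $\Loc$-arrow $\psi$ to a well-defined $\Alg$-arrow $\F(\psi)$: by construction $\F(\psi)$ maps $\F(\Ns)$ into $\F(\Ms)$, and by Lemma \ref{lem:fstarmon} it is a $\ast$-monomorphism. The only point not yet recorded is unit-preservation, which I would dispatch first: the $\star$-identity $\id\in\F(\Ms)$ is the constant functional $\id[f]\equiv1$, and $(\F(\psi)\id)[f]=\id[\psi^\ast f]=1$ for every $f\in\Cs(\Ms)$, so $\F(\psi)\id=\id$; hence $\F(\psi)$ is indeed an $\Alg$-morphism.

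Next I would check the identity axiom: for any spacetime $\Ms$ and $F\in\F(\Ms)$ we have $\F(\mathrm{id}_\Ms)F=F\circ(\mathrm{id}_\Ms)^\ast=F\circ\mathrm{id}_{\Cs(\Ms)}=F$, so $\F(\mathrm{id}_\Ms)=\mathrm{id}_{\F(\Ms)}$. For the composition axiom, the key observation is that pullback is contravariant: given $\Loc$-arrows $\psi_1:\textbf{\textit{O}}\hookrightarrow\Ns$ and $\psi_2:\Ns\hookrightarrow\Ms$, one has $(\psi_2\circ\psi_1)^\ast=\psi_1^\ast\circ\psi_2^\ast$ as maps $\Cs(\Ms)\to\Cs(\textbf{\textit{O}})$. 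Therefore, for any $F\in\F(\textbf{\textit{O}})$,
\[
	\F(\psi_2\circ\psi_1)F=F\circ(\psi_2\circ\psi_1)^\ast=F\circ\psi_1^\ast\circ\psi_2^\ast=\bigl(\F(\psi_1)F\bigr)\circ\psi_2^\ast=\F(\psi_2)\bigl(\F(\psi_1)F\bigr),
\]
so $\F(\psi_2\circ\psi_1)=\F(\psi_2)\circ\F(\psi_1)$.

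This exhausts the definition of a covariant functor, so there is no substantive obstacle here — the argument is essentially bookkeeping. The only point meriting a moment's care is the reversal of the order of composition when passing from $\psi_2\circ\psi_1$ to $\psi_1^\ast\circ\psi_2^\ast$, and noting for consistency that the composite $\F(\psi_2)\circ\F(\psi_1)$ of two $\ast$-monomorphisms (each handled by Lemma \ref{lem:fstarmon}) is again a $\ast$-monomorphism, in agreement with $\F(\psi_2\circ\psi_1)$ being an $\Alg$-arrow in its own right.
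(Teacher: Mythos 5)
Your proof is correct and takes essentially the same route as the paper: verify the identity and composition axioms via $\mathrm{id}_\Ms^\ast=\mathrm{id}$ and $(\psi_2\circ\psi_1)^\ast=\psi_1^\ast\circ\psi_2^\ast$, with Lemma~\ref{lem:fstarmon} supplying the $\ast$-monomorphism property of each $\F(\psi)$. Your explicit check of unit-preservation ($\F(\psi)\id=\id$) is a small point the paper leaves implicit in Lemma~\ref{lem:fstarmon}, but it is not a different argument.
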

\begin{proof}
	Lemma \ref{lem:fstarmon} shows that for a \Loc-arrow $\psi:\Ns\hookrightarrow\Ms$, the map $\F(\psi)$ is indeed an arrow from $\F(\Ns)$ to $\F(\Ms)$. All that remains to prove is that $\F(\text{id}_\Ms)=\text{id}_{\F(\Ms)}$ for any spacetime $\Ms$, and that $\F(\psi_2)\circ\F(\psi_1)=\F(\psi_2\circ\psi_1)$ for any composable \Loc-arrows $\psi_1,\psi_2$. These result directly from the observations that $\text{id}_\Ms^\ast f=f$ for any $f\in\Cs(\Ms)$, and that $\psi_1^\ast\circ\psi_2^\ast=(\psi_2\circ\psi_1)^\ast$.
\end{proof}

We now define the map
\begin{align*}
	\A(\psi):\A(\Ns)&\to\A(\Ms)\\
	[F]_\Ns&\mapsto[\F(\psi)F]_\Ms.
\end{align*}
We can see from lemma \ref{lem:kgfmor} that this map is well defined, and indeed injective; it must also be a $\ast$-homomorphism, as a direct result of the properties of $\F$ proved in lemma \ref{lem:fstarmon}. We can therefore prove the following:
\begin{corollary}
	\label{cor:aisfunctor}
	The map $\A:\Loc\to\Alg$ which maps an object $\Ms$ to $\A(\Ms)$ and an arrow $\psi$ to $\A(\psi)$ is a covariant functor.
\end{corollary}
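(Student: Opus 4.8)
The plan is to verify directly that $\A$ satisfies the three defining properties of a covariant functor from $\Loc$ to $\Alg$, leaning heavily on the corresponding facts already established for $\F$ in Lemma \ref{lem:fisfunctor} and on the well-definedness of $\A(\psi)$ provided by Lemma \ref{lem:kgfmor}. The essential point is that the quotient map $F\mapsto[F]_\Ms$ intertwines $\F(\psi)$ and $\A(\psi)$ by construction, so functoriality of $\A$ should descend from functoriality of $\F$ almost mechanically.

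First I would recall that the preceding discussion already shows $\A(\psi)$ is a well-defined map $\A(\Ns)\to\A(\Ms)$ (Lemma \ref{lem:kgfmor} gives both well-definedness and injectivity) and that it is a unit-preserving $\ast$-homomorphism (from the $\ast$-algebra properties of $\F(\psi)$ in Lemma \ref{lem:fstarmon}, noting that $\F(\psi)\id=\id$ since the constant functional pulls back to itself). Hence each $\A(\psi)$ is genuinely an $\Alg$-morphism, i.e.\ a unit-preserving $\ast$-monomorphism, with the correct domain $\A(\Ns)$ and codomain $\A(\Ms)$; this disposes of the first bullet in the definition of a locally covariant theory.

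Next I would check the composition and identity laws. For composable $\Loc$-arrows $\psi_1:\textbf{\textit{O}}\hookrightarrow\Ns$ and $\psi_2:\Ns\hookrightarrow\Ms$, and any $F\in\F(\textbf{\textit{O}})$, we compute
\[
	\A(\psi_2)\circ\A(\psi_1)\,[F]_{\textbf{\textit{O}}}=\A(\psi_2)\,[\F(\psi_1)F]_\Ns=[\F(\psi_2)\F(\psi_1)F]_\Ms=[\F(\psi_2\circ\psi_1)F]_\Ms=\A(\psi_2\circ\psi_1)\,[F]_{\textbf{\textit{O}}},
\]
using the definition of $\A$ on morphisms at each quotient step and invoking $\F(\psi_2)\circ\F(\psi_1)=\F(\psi_2\circ\psi_1)$ from Lemma \ref{lem:fisfunctor}. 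Similarly, $\A(\text{id}_\Ms)[F]_\Ms=[\F(\text{id}_\Ms)F]_\Ms=[F]_\Ms$ because $\F(\text{id}_\Ms)=\text{id}_{\F(\Ms)}$, so $\A(\text{id}_\Ms)=\text{id}_{\A(\Ms)}$. Since the elements of $\A(\Ms)$ are exactly the classes $[F]_\Ms$, these identities of maps follow from their agreement on all such classes.

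The only genuine subtlety — the step I expect to require the most care — is confirming that all of these quotient-level computations are legitimate, namely that $\F(\psi)$ maps $\J(\Ns)$ into $\J(\Ms)$ so that the induced map on quotients is well-defined; but this is precisely the content of Lemma \ref{lem:kgfmor} (taking $F'\in\J(\Ns)$, equivalently $F'\sim_\Ns 0$), so no new work is needed. Everything else is a routine transfer of structure through the quotient, and I would present it as such, concluding that $\A$ is a covariant functor from $\Loc$ to $\Alg$.
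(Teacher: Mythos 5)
Your argument is correct and follows the same route as the paper's own proof: the paper likewise observes that $\A(\psi)$ has already been shown to be a well-defined $\Alg$-arrow and that the identity and composition laws descend directly from Lemma~\ref{lem:fisfunctor}. You merely spell out the quotient-level computations the paper leaves implicit, which is fine.
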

\begin{proof}
	We have already shown that for any \Loc-arrow $\psi:\Ns\hookrightarrow\Ms$, the map $\A(\psi)$ is an \Alg-arrow from $\A(\Ns)$ to $\A(\Ms)$. The required properties for $\A$ to be a covariant functor follow directly from lemma \ref{lem:fisfunctor}.
\end{proof}

\begin{lemma}
	\label{lem:aalgimage}
	Let $\psi:\Ns\hookrightarrow\Ms$ be an arrow in \Loc. Then $A\in\A(\psi)(\A(\Ns))$ if and only if there exists $F\in\F(\Ms)$ such that $A=[F]_\Ms$, and $F[\EM f]=F[0]$ for every $f\in\Ccs(\Ms)$ such that $\supp(f)\cap J_\Ms(\Ns)=\emptyset$. Moreover, the theory $\A$ is causal.
\end{lemma}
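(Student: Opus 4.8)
The plan is to prove the two directions of the equivalence, and then causality, separately.

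\emph{Forward direction.} If $A=\A(\psi)[G]_\Ns$, put $F=\F(\psi)G=G\circ\psi^\ast$, so that $A=[F]_\Ms$. For $f\in\Ccs(\Ms)$ with $\supp(f)\cap J_\Ms(\Ns)=\emptyset$ the solution $\EM f$ vanishes on $\psi(\Ns)$: indeed $\supp(\EM f)\subseteq J_\Ms(\supp f)$ by~\eqref{eqn:funsol2}, and a point $y\in\psi(\Ns)\cap J_\Ms(\supp f)$ would force a point of $\supp f$ into $J_\Ms(y)\subseteq J_\Ms(\psi(\Ns))$. Hence $\psi^\ast(\EM f)=0$ in $\Ccs(\Ns)$, and $F[\EM f]=G[\psi^\ast\EM f]=G[0]=F[0]$, which is the required property of the representative.

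\emph{Converse direction.} Given a representative $F=\sum_n t_n$ with the stated property, I would first apply the scaling argument of Lemma~\ref{lem:kgfuncpolar} ($f\mapsto\kappa f$) to rewrite the hypothesis degree-by-degree, so that $t_n[\EM f]=0$ for every $n\geq1$ whenever $\supp(f)\cap J_\Ms(\Ns)=\emptyset$; equivalently, by~\eqref{eqn:Eantiprop}, $\tenpow{\EM}{n}t_n$ vanishes on $(\psi(\Ns)^\perp)^{\times n}$, where $\psi(\Ns)^\perp=\Ms\setminus J_\Ms(\psi(\Ns))$. The aim is then to manufacture $G\in\F(\Ns)$ with $\F(\psi)G\sim_\Ms F$, i.e.\ with $F[\EM f]=G[\psi^\ast\EM f]$ for all $f\in\Ccs(\Ms)$. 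The construction of $G$ should follow the pattern of the second half of Lemma~\ref{lem:kgfmor}: one reads off the on-shell functional $\hat F(\EM f):=F[\EM f]$, observes that — once the well-definedness below is established — it depends on $\EM f$ only through $\psi^\ast\EM f\in E_\Ns\Ccs(\Ns)$, transfers it using the compactly supported Cauchy-data matching of Lemma~\ref{lem:kgfmor}, and checks (using $O(F)<\infty$ and the product-of-test-functions form of the $t_n$) that the result is represented by an honest $G\in\F(\Ns)$; then $\F(\psi)G\sim_\Ms F$ follows verbatim as there.

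The step I expect to be the main obstacle is the well-definedness just invoked: that $\EM f_1=\EM f_2$ on $\psi(\Ns)$ implies $F[\EM f_1]=F[\EM f_2]$. Putting $u=\EM(f_1-f_2)$, which vanishes on $\psi(\Ns)$, and polarising, this reduces to the vanishing of the ``mixed'' integrals $\int t_n\,u^{\otimes k}\otimes(\EM f_2)^{\otimes(n-k)}$ for $k\geq1$, whereas the hypothesis only yields the ``pure'' vanishing of $\int t_n\,v_1\otimes\cdots\otimes v_n$ with all $v_i\in\EM\Ccs(\psi(\Ns)^\perp)$. The bridge is the geometric claim that a solution vanishing on the globally hyperbolic, causally convex region $\psi(\Ns)$ can be re-sourced from the causal complement — $u\in\EM\Ccs(\Ms\setminus J_\Ms(\psi(\Ns)))$ — whence $\supp(u)\subseteq J_\Ms(\psi(\Ns)^\perp)$; this is the delicate point, and essentially the only place where causal convexity of $\psi(\Ns)$ in $\Ms$ really enters (I would attack it by a cutoff $\PM(\chi u)$ across a Cauchy surface together with the support bounds~\eqref{eqn:funsol2}). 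Granting it, each mixed integral factorises through arguments lying in $\psi(\Ns)^\perp$ on the $u$-slots and is killed by the pure vanishing, so $\hat F$ descends and $G$ exists.

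\emph{Causality.} Let $\psi_i:\Ns_i\hookrightarrow\Ms$ have causally disjoint images and $A_i=[F_i]_{\Ns_i}$, so $\A(\psi_i)A_i=[\F(\psi_i)F_i]_\Ms$. The $n$-th component of $\F(\psi_i)F_i$ is supported in $\psi_i(\Ns_i)^{\times n}$, hence by~\eqref{eqn:kgfuncderiv2} so is every $(\F(\psi_i)F_i)^{(k)}[f]\in\F^k(\Ms)$, slot by slot. In the $\star$-product~\eqref{eqn:kgproddef}--\eqref{eqn:kgproddef2}, the order-$k$ term with $k\geq1$ carries $\prod_{j=1}^k\EM(x_j,y_j)$ with $x_j\in\psi_1(\Ns_1)$ and $y_j\in\psi_2(\Ns_2)$, and each factor vanishes since $\EM(x_j,y_j)\neq0$ would require $x_j\in J_\Ms(y_j)\subseteq J_\Ms(\psi_2(\Ns_2))$, contradicting causal disjointness; the order-$0$ term is the pointwise product $(\F(\psi_1)F_1)[f]\,(\F(\psi_2)F_2)[f]$, which is symmetric. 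Hence $\F(\psi_1)F_1$ and $\F(\psi_2)F_2$ commute in $\F(\Ms)$, and therefore $[\A(\psi_1)A_1,\A(\psi_2)A_2]=0$ in $\A(\Ms)$.
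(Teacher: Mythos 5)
Your treatment of the forward implication and of causality is correct, and both follow the same route as the paper: for the forward direction you pass to the representative $\F(\psi)G$, whose $n$th component is compactly supported in $\psi(\Ns)^{\times n}$; for causality you observe that every $k\geq1$ term of the $\star$-product dies because $\EM(x,y)=0$ for spacelike-separated $x,y$, leaving the symmetric $k=0$ term.

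The converse is where the work lies, and here your proposal diverges from the paper and has a genuine gap. The paper takes a much shorter route: it identifies $\A(\psi)(\A(\Ns))$ as exactly the set of classes admitting a representative with each $t_n$ compactly supported in $\psi(\Ns)^{\times n}$, from which the forward property is immediate, and then essentially asserts the biconditional. You instead try to manufacture $G\in\F(\Ns)$ explicitly, and you correctly isolate the crux of that route --- one must know that $\psi^\ast\EM f_1=\psi^\ast\EM f_2$ forces $F[\EM f_1]=F[\EM f_2]$ --- but the bridge you propose is not established, and the cutoff you gesture at does not deliver it. The source $\PM(\chi u)$ is supported in $\supp(u)\cap\supp(\nabla\chi)$; although $u$ vanishes on $\psi(\Ns)$, its support generically meets $J_\Ms(\psi(\Ns))\setminus\psi(\Ns)$ (the light cones issuing from the spatial edges of $\psi(\Ns)$), so $\supp\PM(\chi u)$ lands partly inside $J_\Ms(\psi(\Ns))$ no matter where the time-slab cutoff is placed. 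Worse, the re-sourcing claim fails outright at the boundary: in $2$-dimensional Minkowski space with $\psi(\Ns)$ an open diamond $D$, a solution whose Cauchy data on a surface through $D$ vanishes on the closure of the base of $D$ but is nonzero arbitrarily close to its endpoints vanishes on $D$ and on $\Ms\setminus J_\Ms(D)$, yet admits no compactly supported source in $\Ms\setminus J_\Ms(D)$ (nor, for that matter, in $D$ itself), which is also a warning that the ``only if'' half of the lemma carries a boundary subtlety that neither the paper's terse argument nor yours squarely addresses. Finally, even granting re-sourcing, you cannot go from the ``pure'' vanishing of $\int t_n\,v_1\otimes\cdots\otimes v_n$ with every $v_i$ sourced in $\Ms\setminus J_\Ms(\psi(\Ns))$ to the vanishing of the mixed term $\int t_n\,u^{\otimes k}\otimes(\EM f_2)^{\otimes(n-k)}$ with $f_2$ unrestricted; that step amounts to the slot-by-slot support statement $\supp(\tenpow{\EM}{n}t_n)\subseteq J_\Ms(\psi(\Ns))^{\times n}$, which is essentially the substance of the converse and needs an argument of its own (compare the work done in Lemma~\ref{lem:kgsuppprop}, which proves precisely this kind of improvement and does so only by invoking the dynamics).
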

\begin{proof}
Note that $\A(\psi)(\A(\Ns))$ comprises those elements $A\in\A(\Ms)$ that can be represented by those $F\in\F(\Ms)$ with $F=\sum_{n=0}^Nt_n,$ $t_n\in\F^n(\Ms)$, with the property that each $t_n$ can be written as $\psi_\ast t_n'$ for some $t_n'\in\F^n(\Ns)$. But these are precisely those $F=\sum_{n=0}^Nt_n$ for which $\supp(t_n)\subseteq\Ns^{\times n}$ for $n\geq1$, and so for such an $F$ we have $F[f]=F[0]$ for all $f\in\Ccs(\Ms)$ with $\supp(f)\cap\Ns=\emptyset$. Since $F\sim_\Ms F'$ if and only if $F[\EM f]=F'[\EM f]$ for all $f\in\Ccs(\Ms)$, it follows that $F$ represents an element of $\A(\psi)(\A(\Ns))$ if and only if $F[\EM f]=F[0]$ for all $f\in\Ccs(\Ms)$ with $\supp(f)\cap J_\Ms(\Ns)=\emptyset$.

Now suppose that $\Ns_1$ and $\Ns_2$ are spacetimes embedded in $\Ms$ by \Loc-arrows $\psi_1,\psi_2$ respectively, and that $\psi_1(\Ns_1)$ and $\psi_2(\Ns_2)$ are causally disjoint in $\Ms$. It follows that if $A_i\in\A(\psi_i)(\A(\Ns_i))$, $i=1,2$, we may pick $F_1,F_2\in\F(\Ms)$ such that $[F_i]_\Ms=A_i$, and that the $n^\text{th}$ component of $F_i$ is supported in $(\Ns_i)^{\times n}$. It is then clear from \eqref{eqn:kgproddef},\eqref{eqn:kgproddef2} that $(F\star F')[f]=F[f]F'[f]$ for any $f\in\Ccs(\Ms)$. It follows that $[A_1,A_2]=0$, and therefore the theory is causal.
\end{proof}

As a final note on this construction, we remark that a different construction of the Klein-Gordon scalar field theory is given in \cite{dynloc2}, where dynamical locality is proved in the massive minimally coupled case. The construction given above has the advantage that one is able to easily work with the elements of the algebra $\A(\Ms)$ themselves, rather than its generators only; this makes it easy to compute the relative Cauchy evolution for an arbitrary element directly. There is also a natural extension of this construction to the theory of Wick polynomials.

\subsection{Construction of the Theory of Wick Polynomials}
\label{sect:wpconst}

We can extend the construction of the Klein-Gordon theory to a larger theory containing the Wick polynomials. The general aim is to include in the algebras of functionals previously denoted $\F(\Ms)$ a greater range of distributions. The resulting enlarged theory will be denoted $\W$. The following construction follows \cite{bf:qftcb} and \cite{chilfred}. 

We first need to establish the behaviour of the fundamental solution $\EM$ and the Klein-Gordon operator $\PM$ on distributions. For a distribution $t\in\mathcal{D}'(\Ms)$ (resp. $\mathcal{E}'(\Ms)$, i.e.\ compactly supported distributions), and arbitrary $f\in\Ccs(\Ms)$ (resp.\ $\Cs(\Ms)$), we simply define
\[
	\left\langle\PM t,f\right\rangle=\left\langle t,\PM f\right\rangle.
\]
Since $\PM$ is a formally self-adjoint linear differential operator, the restriction of the map $\PM:\mathcal{D}'(\Ms)\to\mathcal{D}'(\Ms)$ to $\Cs(\Ms)$ is compatible with the previous definition of $\PM$ on smooth functions.

Now, analogously to the case for smooth functions, we now wish to construct maps $\bar{\EM^\pm}:\mathcal{E}'\to\mathcal{D}'$ satisfying
\begin{align}
	\bar{\EM^\pm}\PM t&=t=\PM\bar{\EM^\pm} t\label{eqn:Eprop1dist}\\
	\supp(\bar{\EM^\pm} t)&\subseteq J_\Ms^\pm(\supp(t)).\label{eqn:Eprop2dist}
\end{align}
We therefore let $\bar{\EM^\pm} t=(\EM^\mp)'t$: this expression is clearly a well-defined element of $\mathcal{D}'(\Ms)$ for any $t\in\mathcal{E}'(\Ms)$, and this definition ensures that \eqref{eqn:Eprop1dist} is satisfied. Moreover, we may see that \eqref{eqn:Eprop2dist} is satisfied by noting that for any $t\in\mathcal{E}'(\Ms)$, $f\in\Ccs(\Ms)$, we have $J_\Ms^\pm(\supp(t))\cap\supp(f)=\emptyset$ if and only if $\supp(t)\cap J^\mp_\Ms(\supp(f))=\emptyset$. We know that the maps $\EM^\pm$ satisfying \eqref{eqn:funsol1}, \eqref{eqn:funsol2} are unique, so the restrictions of $\bar{\EM^\pm}$ to $\Ccs(\Ms)$ must coincide with $\EM^\pm$.
As before, we let the fundamental solution $\EM:\mathcal{E}'(\Ms)\to\mathcal{D}'(\Ms)$ be defined by $\bar{\EM}=\bar{\EM^-}-\bar{\EM^+}$, and therefore $\bar{\EM}=-(\EM)'$, as would be expected from the relation \eqref{eqn:Eantiprop}.
 From now on, we will drop the bar from the notation and simply write $\EM^{(\pm)} t$ for a distribution $t\in\mathcal{E}'(\Ms)$.

Recall that for any spacetime $\Ms$, the algebra of functionals $\F(\Ms)$ consists of elements of the form $F=\sum_{n=0}^Nt_n$, with each $t_n\in\F^n(\Ms)$ being a finite sum of finite products of test functions of one variable. We wish to include a much wider range of allowed distributions into the new theory $\W$, but we must apply enough restrictions to ensure that the resulting expressions are well defined. We might na\"{i}vely assume that we can use the same product as defined in \eqref{eqn:kgproddef} for distributions, but this is not the case. For example, consider two elements $t,t'\in\F^1(\Ms)$; we see that for any $f\in\Cs(\Ms)$,
\[
	(t\star t')[f]=\int_{\Ms^{\times 2}}dx\,dy\,t(x)t'(y)\left(f(x)f(y)+\frac i2\EM(x,y)\right);
\]
again, for $t\in\mathcal{E}'(\Ms^{\times n})$ we use the notation
\[
	t[f]=\left\langle t,\tenpow fn\right\rangle
	=\int_{\Ms^{\times n}}d^{n-1}x\,t(x_1,\ldots,x_n)f(x_1)\cdots f(x_n),
\]
so for any $f\in\Ccs(\Ms)$ we have $t[\EM f]=(-1)^n(\tenpow\EM nt)[f].$
When $t$ and $t'$ are test functions the second term above is well defined, but pointwise products of distributions are not always so, and we require both a condition on the existence of such pointwise products and a deformation of the product to ensure that all the expressions are well defined. We can find a suitable condition for existence of pointwise products in \cite{hormander1}, namely \emph{H\"ormander's criterion}: If $t$ and $t'$ are distributions, then the pointwise product $t(x)t'(x)$ is a well-defined distribution if the set
\[
	\{(x,k+k'):(x,k)\in WF(t),\ (x,k')\in WF(t')\}
\]
contains no element of the form $(x,0)$.

It is well known (see e.g.\ \cite{duistermaathormander}) that the wavefront set of the distribution $\EM(x,y)$ satisfies
\[
	WF(\EM)\subset \bigcup_{\substack{x,y\in\Ms\\x\leftrightarrow y}}
	 (V^+_{\Ms;x}\times V^-_{\Ms;y})\cup (V^-_{\Ms;x}\times V^+_{\Ms;y}),
\]
where $V^\pm_{\Ms;x}\subset T^\ast_x\Ms$ is the forward/backward light cone at $x$, and $x\leftrightarrow y$ indicates that $x$ and $y$ are connected by a null geodesic. We denote by $V^\pm_\Ms$ the union $\bigcup_{x\in\Ms}V^\pm_{\Ms;x}$. We then define for $n\geq1$ (cf.\ \cite{chilfred})
\[
	\T^n(\Ms)=\{t\in\mathcal{E}'(\Ms^{\times n}):t\text{ totally symmetric},\ WF(t)\cap \bar{(V^+_\Ms)^{\times n}\cup (V_\Ms^-)^{\times n}}=\emptyset\}.
\]
As before we also define $\T^0(\Ms)=\mathbb{C}$.
Such a definition ensures that the expression $\int_{\Ms^{\times2}}dx\,dy\,t(x)t'(y)\EM(x,y)$ for $t,t'\in\T^1(\Ms)$ is well defined (and more generally, that
\[
	\int_{\Ms^{\times2}}dx_1\,dy\,t_n(x_1,\ldots,x_n)t'(y)\EM(x_1,y)
\]
for $t_n\in\T^n(\Ms)$, $t\in\T^1(\Ms)$ is always a well defined element of $\T^{n-1}(\Ms)$). Analogously to the previous case, we wish to define an algebra $\T(\Ms)$ comprising elements of the form
\beq
	\label{eqn:thdef}
	T=\sum_{n=0}^Nt_n
\eeq
with $t_n\in\T^n(\Ms)$. For any $f\in\Cs(\Ms)$ and $T$ of the above form we define the functional derivative $\T^{(k)}[f]$ in the same way as detailed in \eqref{eqn:kgfuncderiv1} and \eqref{eqn:kgfuncderiv2}. It is clear from the definition of $\T^k(\Ms)$ that the functional derivative $\T^{(k)}[f]$ is an element of $\T^k(\Ms)$.

It is shown in \cite{chilfred} that for any $t\in\T^n(\Ms)$, the wavefront set of $(\EM^\pm)_k t$ has the property that $WF((\EM^\pm)_k t) \cap\bar{(V_\Ms^+)^{\times n}\cup(V_\Ms^-)^{\times n}}=\emptyset$, where $(\EM^\pm)_k=\tenpow\id{k-1}\otimes\EM^\pm\otimes\tenpow\id{n-k}$. Since differential operators and multiplication by smooth functions cannot enlarge the wavefront set of a distribution, it follows that any element of $\mathcal{E}'(\Ms^{\times n})$ which is obtained via application of any such operators and $(\EM^\pm)_k$ on an element of $\T^n(\Ms)$ must itself be an element of $\T^n(\Ms)$.

Unfortunately, the restriction on elements of $\T^n(\Ms)$ alone does not solve the problem of ill-defined distributions. Note that for any $g\in\Ccs(\Ms)$, the distribution $t_2(x,y)=g(x)\delta(x-y)$ has empty wavefront set, and is therefore an element of $\T^2(\Ms)$; however
\[
	(t_2\star t_2)[f]=\int_{\Ms^{\times 2}}dx\,dy\,t(x)t(y)\left(f(x)^2f(y)^2+2i
	\EM(x,y)f(x)g(y)-\frac12\EM(x,y)^2\right),
\]
and the distribution $\EM(x,y)^2$ is ill-defined since it does not obey H\"ormander's criterion. A solution to this problem is given in \cite{bf:qftcb}: on each spacetime $\Ms$, it is possible to find symmetric distributions $H$ which satisfy the properties
\[
	WF(\EM+2iH)=WF(\EM)\cap(V^+_\Ms\times V^-_\Ms)
\]
and
\beq
	\label{eqn:hbisol}
	H(\PM f,f')=0
\eeq
for all $f,f'\in\Ccs(\Ms)$.
There is no unique choice for $H$, and we denote by $\H(\Ms)$ the set of all such distributions. It follows that the distribution $(\EM+2iH)^k$ is well defined for any $k\geq1$ and $H\in\H(\Ms)$, and consequently we define a new product $\star_H$ that acts on distributions as
\[
	(T\star_H T')[f]=\sum_{k=0}^{\text{min}(O(T),O(T'))}\frac{i^k}{2^kk!}E_{\Ms;H}^k\left(F^{(k)}[f],F'^{(k)}[f]\right),
\]
where for $t,t'\in\T^k(\Ms)$, we define
\[
	E_{\Ms;H}^k(t,t')=\int_{\Ms^{\times(2k)}}d^kx\,d^ky\,t(x_1,\ldots,x_k)t'(y_1,\ldots,y_k)\prod_{j=1}^k 
	(E_\Ms(x_j,y_j)+2iH(x_j,y_j))
\]
for $k\geq1$. As before, we define $E_{\Ms;H}^0(\alpha,\beta)=\alpha\beta$. One can show that this product is still associative. We then denote by $\T_H(\Ms)$ the algebra comprising elements of the form given in \eqref{eqn:thdef} with product $\star_H$. Addition and involution on $\T_H(\Ms)$ are again given by addition and complex conjugation of distributions respectively.

It is possible to show that for any pair $H,H'\in\H(\Ms)$, the difference $H-H'$ is smooth \cite[Theorem 6]{bf:qftcb}, and also that the algebras $\T_H(\Ms)$ and $\T_{H'}(\Ms)$ are isomorphic; if we define the map
\begin{align}
	\lambda_{H,H'}:\T_H(\Ms)&\to\T_{H'}(\Ms)\notag\\
	T&\mapsto\sum_{n=0}^{\lfloor O(T)/2\rfloor}\frac1{n!}\left\langle 
	\tenpow{(H-H')}n,T^{(2n)}\right\rangle\label{eqn:lambdadef}
\end{align}
where for $t\in\T^{2n}(\Ms)$,
\beq
	\left\langle\tenpow{(H-H')}n,t\right\rangle=\int_{\Ms^{\times(2n)}}d^{2n}x\,t(x_1,\ldots,x_{2n})\prod_{j=1}^n
	(H(x_{2j-1},x_{2j})-H'(x_{2j-1},x_{2j})),
	\label{eqn:anglefuncdef}
\eeq
then this is an isomorphism satisfying $\lambda_{H,H'}=\lambda_{H',H}^{-1}$, $\lambda_{H',H''}\circ\lambda_{H,H'}=\lambda_{H,H''}$ and
\[
	T\star_{H}T'=\lambda_{H,H'}^{-1}(\lambda_{H,H'}(T)\star_{H'}\lambda_{H,H'}(T')).
\]

In exactly the same way that the set $\J(\Ms)$ is an ideal for $\F(\Ms)$, it also holds that the analogous set
\[
	\JJ(\Ms)=\{T\in\T_H(\Ms):T[\EM f]=0\text{ for all }f\in\Ccs(\Ms)\}
\]
(which is independent of the choice of $H\in\H(\Ms)$) is an ideal for $\T_H(\Ms)$. We therefore define the algebra $\W_H(\Ms)=\T_H(\Ms)/\JJ(\Ms).$ Since the equivalence class of an element $T\in\T_H(\Ms)$ does not depend on $H$, we will denote it unambiguously by $[T]_\Ms$, and if $T-T'\in\JJ(\Ms)$ we will write $T\sim_\Ms T'$ as before. It follows from \eqref{eqn:hbisol} that $T\sim_\Ms T'$ if and only if $\lambda_{H,H'}T\sim_\Ms\lambda_{H,H'}T'$, so the isomorphism
\begin{align*}
	\tl_{H,H'}:\W_H(\Ms)&\to\W_{H'}(\Ms)\\
	[T]_{\Ms}&\mapsto[\lambda_{H,H'}T]_{\Ms}
\end{align*}
is well defined. We also note that the reasoning used to show lemma \ref{lem:kgfuncpolar} can be similarly used to show the corresponding result; that if $T\in\T_H(\Ms)$ can be written $T=\sum_{n=0}^Nt_n$ with $t_n\in\T^n(\Ms)$ for each $n$, then $T\in\JJ(\Ms)$ if and only if $t_0=0$ and
\beq
	\label{eqn:wpfuncpolar}
	\tenpow{\EM} n t_n=0
\eeq
 for all $n=1,\ldots,N$.

Since there is no preferred method of uniquely specifying some $H\in\H(\Ms)$ for each spacetime $\Ms$, the above construction does not constitute a locally covariant theory, as we have not yet defined a unique algebra for each $\Ms$. We therefore wish to construct an algebra $\W(\Ms)$ which is independent of the choice of $H$. Again following \cite{bf:qftcb}, we do this by letting $\W(\Ms)$ comprise families of elements indexed by choice of $H\in\H(\Ms)$, as follows:
\[
	\W(\Ms)=\{(W_H)_{H\in\H(\Ms)}:\tl_{H,H'}W_H=W_{H'}\text{ for all }H,H'\in\H(\Ms)\}.
\]
Given $W=(W_H)_{H\in\H(\Ms)}$, $W'=(W'_H)_{H\in\H(\Ms)}$, we define $(W+W')_H=W_H+W'_H$, $(W\star W')_H=W_H\star_H W'_H$ and $(W^\ast)_H=W_H^\ast$. These operations are clearly consistent with the compatibility condition $\tl_{H,H'}W_H=W_{H'}$. Since this condition also ensures that each family $W=(W_H)_{H\in\H(\Ms)}\in\W(\Ms)$ is completely defined by any single entry $W_H$, it follows that $\W(\Ms)\cong\W_H(\Ms)$ for any $H\in\H(\Ms)$.

Having given a prescription for defining $\W(\Ms)$, we must now find a suitable definition for the \Alg-arrow $\W(\psi)$ corresponding to a \Loc-arrow $\psi:\Ns\hookrightarrow\Ms.$ Throughout this section we will use the same notation as before for the definition of the pullback and pushforward of a \Loc-arrow on an arbitrary functional.
\begin{lemma}
Let $\Ns$, $\Ms$ be locally covariant theories, and let $\psi:\Ns\hookrightarrow\Ms$ be an arrow in \Loc. Then for any $H\in\H(\Ms),$ we have $\psi^\ast H\in\H(\Ns)$.
\end{lemma}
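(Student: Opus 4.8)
The plan is to verify that $\psi^\ast H$ satisfies the two defining properties of an element of $\H(\Ns)$: the wavefront set condition $WF(E_\Ns + 2i\,\psi^\ast H) = WF(E_\Ns) \cap (V^+_\Ns \times V^-_\Ns)$, and the bisolution property $(\psi^\ast H)(P_\Ns f, f') = 0$ for all $f, f' \in \Ccs(\Ns)$.

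First I would dispose of the bisolution property, which is the easy half. For $f, f' \in \Ccs(\Ns)$ we have $\langle \psi^\ast H, (P_\Ns f)\otimes f'\rangle = \langle H, \psi_\ast(P_\Ns f) \otimes \psi_\ast f'\rangle$, and since $\psi$ is an isometric embedding the Klein--Gordon operator is natural, so $\psi_\ast(P_\Ns f) = P_\Ms(\psi_\ast f)$ as distributions on $\Ms$ (using that $\psi(\Ns)$ is open in $\Ms$ and $\psi_\ast f$ is compactly supported there). Then $\langle H, P_\Ms(\psi_\ast f)\otimes \psi_\ast f'\rangle = H(P_\Ms(\psi_\ast f), \psi_\ast f') = 0$ by \eqref{eqn:hbisol}, since $\psi_\ast f, \psi_\ast f' \in \Ccs(\Ms)$. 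A small point to check is that $\psi_\ast(P_\Ns f)$ really is $P_\Ms \psi_\ast f$ and not merely their restrictions agree on $\psi(\Ns)$; but both sides are supported in $\psi(\Ns)$ (for the left side by definition of pushforward, for the right side because $P_\Ms$ is a differential operator and $\psi(\Ns)$ is open), so the identity holds globally.

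Next I would handle the wavefront set condition. The key tool is that pullback of a distribution by an embedding acts on wavefront sets by $WF(\psi^\ast t) \subseteq \psi^\ast WF(t) = \{(x, \psi^\ast_x k) : (\psi(x), k) \in WF(t)\}$, with equality when $\psi$ is an embedding (an immersion suffices for the inclusion, and a diffeomorphism onto its image gives equality — here $\psi$ is a diffeomorphism onto the open set $\psi(\Ns)$). Since $\psi$ is isometric and time-orientation preserving, the codifferential $\psi^\ast_x$ carries $V^\pm_{\Ms;\psi(x)}$ bijectively onto $V^\pm_{\Ns;x}$. Applying this to $t = E_\Ms + 2iH$ on the open set $\psi(\Ns)$, and using that $\psi^\ast E_\Ms = E_\Ns$ (which follows from the relation $E_\Ns f = \psi^\ast E_\Ms \psi_\ast f$ established in the proof of Lemma \ref{lem:kgfmor}, together with uniqueness of the fundamental solutions and the support properties), we get $WF(E_\Ns + 2i\,\psi^\ast H) = \psi^\ast\big(WF(E_\Ms + 2iH)\big) = \psi^\ast\big(WF(E_\Ms) \cap (V^+_\Ms \times V^-_\Ms)\big) = WF(E_\Ns) \cap (V^+_\Ns \times V^-_\Ns)$, where in the last step I use the bijectivity of $\psi^\ast$ on the cone bundles and again $\psi^\ast E_\Ms = E_\Ns$ (with the wavefront set of the pullback computed over the open subset $\psi(\Ns)$, where everything is a genuine diffeomorphism).

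The main obstacle is keeping straight the bookkeeping between $\psi^\ast$ acting on test functions, on distributions, and on covectors, and in particular making sure that $WF(E_\Ns)$ really is the full pullback $\psi^\ast WF(E_\Ms)$ rather than a proper subset — i.e.\ that restricting $E_\Ms$ to $\psi(\Ns)^{\times 2}$ does not lose wavefront directions relevant to the cone intersection. This is fine because $\psi(\Ns)$ is causally convex and open in $\Ms$, so null geodesics with both endpoints in $\psi(\Ns)$ stay inside, and the characterisation of $WF(E_\Ms)$ in terms of pairs connected by null geodesics pulls back cleanly; but it is the step that requires genuine care rather than formal manipulation. Everything else is a routine application of the functoriality already set up in the excerpt.
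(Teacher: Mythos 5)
Your proof is correct and follows essentially the same route as the paper: for the wavefront condition both you and the paper invoke $WF(\psi^\ast t)\subseteq\psi^\ast WF(t)$ with equality when $\psi$ is a local diffeomorphism (the paper cites H\"ormander's Theorem 2.5.11$'$), together with $\psi^\ast \EM = E_\Ns$ and the isometric, time-orientation-preserving nature of $\psi$ to identify the pulled-back light cones; and for the bisolution condition both reduce to $H(\PM\psi_\ast f,\psi_\ast f')=0$ via naturality of $\PM$. Your extra caution at the end about whether restriction of $E_\Ms$ to $\psi(\Ns)^{\times 2}$ could lose wavefront directions is resolved, as you note, once one knows $\psi^\ast\EM=E_\Ns$ and that the wavefront set of a pullback by a diffeomorphism onto an open set is exactly the pullback of the wavefront set over that set — the paper simply states this without the additional remark on causal convexity.
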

\begin{proof}
	We have $WF(\phi^\ast t)\subseteq \phi^\ast WF(t)$ for any smooth $\phi:\Ns\to\Ms$ and distribution $t$ on $\Ms$ \cite[Theorem 2.5.11$'$]{hormanderfio1}. It is a clear consequence that we have equality whenever $\phi$ is a local diffeomorphism; this entails that when $\psi:\Ns\hookrightarrow\Ms$ is an arrow in \Loc, we have $WF(\psi^\ast T)=\psi^\ast WF(T)$ for any $T\in\mathcal{D}'(\Ms^{\times n})$. 
 Therefore
\[
	WF(E_\Ns+2i\psi^\ast H)=\psi^\ast WF(\EM+2iH)=WF(E_\Ns)\cap(V_\Ns^+\times V_\Ns^-).
\]
Moreover, if $H(\PM f,f')=0$ for all $f,f'\in\Ccs(\Ms)$, it follows that $\psi^\ast H(P_\Ns f,f')=H(\PM\psi_\ast f,\psi_\ast f')=0$ for all $f,f'\in\Ccs(\Ns)$. Therefore $\psi^\ast H\in\H(\Ns)$.
\end{proof}
Note that for any \Loc-arrow $\psi:\Ns\hookrightarrow\Ms$, we may also say that $WF(\psi_\ast U)=\psi_\ast WF(U)$ for $U\in\mathcal{E}'(\Ns^{\times n})$.\footnote{We require compact support of $U$ here; if $U\in\mathcal{D}'(\Ns)$, then we might not have equality, although $(x_1,\ldots,x_n;k_1,\ldots,k_n)\in WF(\psi_\ast U)\setminus\psi_\ast WF(U)$ only if $x_k\in\partial(\psi(\Ns))$ for each $k$.}

Now, for any $H\in\H(\Ms)$ we define the map
\begin{align*}
	\T_H(\psi):\T_{\psi^\ast H}(\Ns)&\to\T_H(\Ms)\\
	T&\mapsto T\circ\psi^\ast.
\end{align*}
For any $T=\sum_{n=0}^Nt_n\in\T_{\psi^\ast H}(\Ns)$, $t_n\in\T^n(\Ns)$, we have 
\[
	(\T_H(\psi)T)[f]=\sum_{n=0}^N\psi_\ast t_n[f]
\]
as before, and since $t_n$ is compactly supported for each $n\geq1$, it follows that $WF(\psi_\ast t_n)= \psi_\ast WF(t_n)$. Thus $\T_H(\psi)T$ is an element of $\T_H(\Ms)$ as required. We can also use the same argument as for lemma \ref{lem:kgfmor} to see that for any $T,T'\in\T_{\psi^\ast H}(\Ns)$, it holds that $T\sim_\Ns T'$ if and only if $\T_H(\psi)T\sim_\Ms\T_H(\psi)T'$. Moreover, the result of lemma \ref{lem:fstarmon} extends directly to $\T_H(\psi)$, so it is indeed a $\ast$-monomorphism. Therefore the map
\begin{align}
	\W_H(\psi):\W_{\psi^\ast H}(\Ns)&\to\W_H(\Ms)\notag\\
	[T]_\Ns&\mapsto[\T_H(\psi)T]_\Ms\label{eqn:WHpsidef}
\end{align}
is a well-defined $\ast$-monomorphism. From this, we define the map	$\W(\psi):\W(\Ns)\to\W(\Ms)$ by 
\beq
	(\W(\psi)W)_H=\W_H(\psi)W_{\psi^\ast H},
	\label{eqn:Wpsidef}
\eeq
where $H\in\H(\Ms)$.
It is easy to show that this definition is consistent with the compatibility condition: i.e.\ $\tl_{H,H'}(\W(\psi)W)_H=(\W(\psi)W)_{H'}$ for all $H,H'\in\H(\Ms)$. We then have:
\begin{lemma}
	The map $\W:\Loc\to\Alg$ which maps spacetimes $\Ms$ to $\W(\Ms)$ and \Loc-arrows $\psi$ to $\W(\psi)$ is a covariant functor.
\end{lemma}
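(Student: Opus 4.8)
The plan is to verify the two defining properties of a covariant functor: that $\W(\mathrm{id}_\Ms)=\mathrm{id}_{\W(\Ms)}$ for every spacetime $\Ms$, and that $\W(\psi_2)\circ\W(\psi_1)=\W(\psi_2\circ\psi_1)$ for every composable pair of \Loc-arrows $\psi_1:\Ns_1\hookrightarrow\Ns_2$, $\psi_2:\Ns_2\hookrightarrow\Ms$. The bulk of the work has already been distributed across the preceding lemmas: we know that $\W_H(\psi)$ is a well-defined $\ast$-monomorphism for each $H\in\H(\Ms)$, that $\psi^\ast H\in\H(\Ns)$, and that the family $(\W(\psi)W)_H$ defined by \eqref{eqn:Wpsidef} satisfies the compatibility condition $\tl_{H,H'}(\W(\psi)W)_H=(\W(\psi)W)_{H'}$, so $\W(\psi)$ is genuinely an \Alg-arrow from $\W(\Ns)$ to $\W(\Ms)$. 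What remains is purely the functoriality bookkeeping, and it should essentially reduce to the analogous facts for the maps $\T_H(\psi)$, and ultimately to the elementary identities $\mathrm{id}_\Ms^\ast f=f$ and $\psi_1^\ast\circ\psi_2^\ast=(\psi_2\circ\psi_1)^\ast$, exactly as in the proof of Lemma~\ref{lem:fisfunctor}.

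First I would handle the identity. For a fixed $H\in\H(\Ms)$ we have $\mathrm{id}_\Ms^\ast H=H$, so $\T_H(\mathrm{id}_\Ms):\T_H(\Ms)\to\T_H(\Ms)$ acts by $T\mapsto T\circ\mathrm{id}_\Ms^\ast=T$, hence $\W_H(\mathrm{id}_\Ms)=\mathrm{id}_{\W_H(\Ms)}$; feeding this into \eqref{eqn:Wpsidef} gives $(\W(\mathrm{id}_\Ms)W)_H=W_H$ for all $H$, i.e.\ $\W(\mathrm{id}_\Ms)=\mathrm{id}_{\W(\Ms)}$. Next, for composition, fix $H\in\H(\Ms)$ and unwind \eqref{eqn:Wpsidef} twice: $(\W(\psi_2)(\W(\psi_1)W))_H=\W_H(\psi_2)\bigl((\W(\psi_1)W)_{\psi_2^\ast H}\bigr)=\W_H(\psi_2)\bigl(\W_{\psi_2^\ast H}(\psi_1)W_{\psi_1^\ast\psi_2^\ast H}\bigr)$. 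Here one uses that $\psi_2^\ast H\in\H(\Ns_2)$ (so the inner application is legitimate) and that $\psi_1^\ast(\psi_2^\ast H)=(\psi_2\circ\psi_1)^\ast H$. It then suffices to check that $\W_H(\psi_2)\circ\W_{\psi_2^\ast H}(\psi_1)=\W_H(\psi_2\circ\psi_1)$ as maps $\W_{(\psi_2\circ\psi_1)^\ast H}(\Ns_1)\to\W_H(\Ms)$, which in turn follows from the corresponding identity $\T_H(\psi_2)\circ\T_{\psi_2^\ast H}(\psi_1)=\T_H(\psi_2\circ\psi_1)$ at the level of $\T$-algebras, and this is immediate since both sides send $T$ to $T\circ(\psi_2\circ\psi_1)^\ast$ by associativity of pullback. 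Passing to equivalence classes via \eqref{eqn:WHpsidef} and then reassembling the families gives $(\W(\psi_2)\circ\W(\psi_1))W=\W(\psi_2\circ\psi_1)W$ for all $W\in\W(\Ns_1)$.

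The only genuinely delicate point — and where I would be most careful — is keeping track of \emph{which} $\H$-set each $H$ lives in and verifying that every composite is typed correctly: the middle argument $\psi_2^\ast H$ must lie in $\H(\Ns_2)$ and $(\psi_2\circ\psi_1)^\ast H$ in $\H(\Ns_1)$, which is exactly what the lemma on pullbacks of Hadamard-type bidistributions guarantees, together with the functoriality of pullback on distributions $WF(\psi^\ast T)=\psi^\ast WF(T)$ for \Loc-arrows. Once the indexing is pinned down the argument is mechanical. I would also remark that the consistency of $(\W(\psi)W)_H$ with the compatibility condition, already asserted before the lemma, is what makes all of this well defined on the level of families rather than merely on individual $\W_H$; strictly speaking this should be checked using the intertwining relations $\tl_{H',H''}\circ\tl_{H,H'}=\tl_{H,H''}$ and the fact that $\W_H(\psi)$ intertwines the $\tl$-maps on source and target, but this too is routine.
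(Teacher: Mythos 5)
Your proposal is correct and follows essentially the same route as the paper: both arguments reduce functoriality to the identity $\T_H(\psi_2)\circ\T_{\psi_2^\ast H}(\psi_1)=\T_H(\psi_2\circ\psi_1)$ at the level of $\T$-algebras (which is just associativity of pullback), and then push this up through \eqref{eqn:WHpsidef} and \eqref{eqn:Wpsidef}. You are more explicit than the paper about tracking the $\H$-indexing and the compatibility condition, but these are the same observations the paper makes either tersely in its proof or in the surrounding text.
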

\begin{proof}
	It is trivial to show that for any spacetime $\Ms$, we have $\W(\text{id}_\Ms)=\text{id}_{\W(\Ms)}$. It remains to show that for any \Loc-arrows $\psi_1:\Ns_1\hookrightarrow\Ns_2,$ $\psi_2:\Ns_2\hookrightarrow\Ms$, it holds that $\W(\psi_2)\circ\W(\psi_1)=\W(\psi_2\circ\psi_1)$. For any $T\in\T_{\psi_1^\ast\psi_2^\ast H}(\Ns_1)$ and $\H\in\H(\Ms)$, we have
\[
	\T_H(\psi_2)\T_{\psi^\ast_2H}(\psi_1)T=T\circ\psi_1^\ast\circ \psi_2^\ast=
	\T_H(\psi_2\circ\psi_1)T.
\]
The desired result follows by \eqref{eqn:WHpsidef}, \eqref{eqn:Wpsidef}.
\end{proof}
The covariant functor $\W$ is thus a locally covariant theory which represents the extended algebra of Wick polynomials. We also have the corresponding result to lemma \ref{lem:aalgimage}:
\begin{lemma}
	\label{lem:walgimage}
	Let $\psi:\Ns\hookrightarrow\Ms$ be an arrow in \Loc. Then $W\in\W(\psi)(\W(\Ns))$ if and only if there exists $T\in\T_H(\Ms)$ such that $W_H=[T]_\Ms$ for some $H\in\H(\Ms)$, and $T[\EM f]=T[0]$ for every $f\in\Ccs(\Ms)$ such that $\supp(f)\cap J_\Ms(\Ns)=\emptyset$. Moreover, the theory $\W$ is causal.
\end{lemma}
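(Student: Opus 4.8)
The plan is to mirror the proof of Lemma~\ref{lem:aalgimage} almost verbatim, using the structural parallels between the two constructions that have already been set up. First I would recall that, by definition \eqref{eqn:Wpsidef} together with the description \eqref{eqn:WHpsidef} of $\W_H(\psi)$, an element $W\in\W(\Ms)$ lies in the image $\W(\psi)(\W(\Ns))$ precisely when, for some (equivalently every) $H\in\H(\Ms)$, the entry $W_H$ can be written as $[\T_H(\psi)T']_\Ms$ for some $T'\in\T_{\psi^\ast H}(\Ns)$. Since $\T_H(\psi)T' = T'\circ\psi^\ast$ has the form $\sum_n \psi_\ast t_n'$ with $t_n'\in\T^n(\Ns)$, this is exactly the statement that $W_H=[T]_\Ms$ for some $T=\sum_n t_n\in\T_H(\Ms)$ whose $n$th component is supported in $\psi(\Ns)^{\times n}$ for $n\geq1$; conversely any such $T$ arises this way because $\psi_\ast\psi^\ast t_n=t_n$ when $t_n$ is supported in $\psi(\Ns)^{\times n}$, and $\psi^\ast t_n\in\T^n(\Ns)$ by the wavefront-set identity $WF(\psi^\ast t_n)=\psi^\ast WF(t_n)$ established in the lemma preceding the definition of $\T_H(\psi)$.

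Next I would translate the support condition on the components into the stated condition on $T$ as a functional on classical solutions. If each $t_n$ ($n\geq1$) is supported in $\psi(\Ns)^{\times n}$, then for any $f\in\Ccs(\Ms)$ with $\supp(f)\cap\Ns=\emptyset$ we have $t_n[f]=0$ for $n\geq1$, hence $T[f]=t_0=T[0]$. Replacing $f$ by $\EM f$ and using that $\supp(\EM f)\subseteq J_\Ms(\supp f)$, the condition $\supp(f)\cap J_\Ms(\Ns)=\emptyset$ forces $\supp(\EM f)\cap\Ns=\emptyset$, so $T[\EM f]=T[0]$ for every such $f$. For the converse direction I would invoke, exactly as in Lemma~\ref{lem:aalgimage}, the characterisation that $T\sim_\Ms T'$ iff $T[\EM f]=T'[\EM f]$ for all $f\in\Ccs(\Ms)$ (here the polarisation identity \eqref{eqn:wpfuncpolar} plays the role that Lemma~\ref{lem:kgfuncpolar} played before): given $T$ satisfying the displayed condition, one modifies $T$ within its equivalence class to a representative whose components are supported in $J_\Ms(\Ns)$, using that causal convexity of $\psi(\Ns)$ lets one absorb the part of each $\EM$-kernel reaching outside $\Ns$. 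The cleanest route is to note that the condition $T[\EM f]=T[0]$ whenever $\supp(f)\cap J_\Ms(\Ns)=\emptyset$ says precisely that $T-T[0]\id$ annihilates all solutions with data outside $J_\Ms(\Ns)$, and then argue as in the Klein--Gordon case that such a $T$ is equivalent to one pushed forward from $\Ns$; I would simply say this follows ``by the same argument as Lemma~\ref{lem:aalgimage}'' once the distributional analogues are in place.

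For causality I would repeat the final paragraph of Lemma~\ref{lem:aalgimage} verbatim: if $\psi_1(\Ns_1)$ and $\psi_2(\Ns_2)$ are causally disjoint in $\Ms$, choose $W^{(i)}\in\W(\psi_i)(\W(\Ns_i))$, fix $H\in\H(\Ms)$, and pick representatives $T_i\in\T_H(\Ms)$ of $W^{(i)}_H$ with $n$th component supported in $\psi_i(\Ns_i)^{\times n}$. Then in the product formula for $\star_H$ every term with $k\geq1$ involves a factor $\EM(x_j,y_j)+2iH(x_j,y_j)$ integrated against $t_n$ supported in $\psi_1(\Ns_1)$ in the $x$ variables and $t'_{n'}$ supported in $\psi_2(\Ns_2)$ in the $y$ variables; since these regions are causally disjoint, $\EM(x_j,y_j)=0$ there, and $H(x_j,y_j)$ — while not identically zero — appears symmetrically, so $T_1\star_H T_2 - T_2\star_H T_1$ reduces to commuting the pointwise products, which vanishes. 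Hence $[W^{(1)},W^{(2)}]=0$.

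The main obstacle will be the converse direction of the equivalence, specifically justifying that a functional $T$ which vanishes (up to its constant term) on all classical solutions supported away from $J_\Ms(\Ns)$ is genuinely $\sim_\Ms$-equivalent to one whose components are supported inside $J_\Ms(\Ns)^{\times n}$ — in the distributional setting one must check that the modification of $T$ by elements of $\JJ(\Ms)$ needed to localise its support does not enlarge the wavefront sets beyond what $\T^n(\Ms)$ allows. This is controlled by the remark, already recorded in the excerpt, that applying differential operators, multiplication by smooth functions, and the operators $(\EM^\pm)_k$ to an element of $\T^n(\Ms)$ stays within $\T^n(\Ms)$; I would lean on that remark and on the wavefront identity for $\psi^\ast$ to keep the argument parallel to the smooth case rather than redoing the microlocal estimates from scratch.
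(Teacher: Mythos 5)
Your proposal is correct and takes essentially the same route as the paper, whose own proof of this lemma simply reduces to the $\W_H$-level and refers to the analogous argument used for Lemma~\ref{lem:aalgimage}; your write-up spells that reduction out rather than restating it. You also correctly supply the one genuinely new ingredient not needed in the $\A$-case — that the residual $\prod_j 2iH(x_j,y_j)$ kernels surviving once $\EM$ vanishes on causally disjoint supports are killed in the commutator by the symmetry $H(x,y)=H(y,x)$ after swapping the integration variables — which the paper's terse ``analogous argument'' leaves implicit.
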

\begin{proof}
	$W\in\W(\psi)(\W(\Ms))$ if and only if we have $W_H\in\W_H(\psi)(\W_{\psi^\ast H}(\Ns))$ for some (and consequently every) $H\in\H(\Ms)$; the required results then follow using an analogous argument to that given in the proof of lemma \ref{lem:aalgimage}. 
\end{proof}

\subsection{Spaces of smooth functions on spacetimes}
Before we consider the timeslice axiom and dynamical locality of the two theories, we discuss the following spaces of smooth functions on $\Ms$, in addition to $\Ccs(\Ms)$ and $\Cs(\Ms)$. We define
\begin{align*}
	\Cs_s(\Ms)&=\{f\in\Cs(\Ms):\supp(f)\subseteq J_\Ms(K)
	\text{ for some compact }K\subset\Ms\},\\
	\Cs_{s,\pm}(\Ms)&=\{f\in\Cs_s(\Ms):\supp(f)\subseteq J_\Ms^\pm(K)
	\text{ for some compact }K\subset\Ms\}.
\end{align*}
We also use the following notation for the canonical embeddings
\begin{align*}
	\iota_{0,\pm}&:\Ccs(\Ms)\hookrightarrow\Cs_{s,\pm}(\Ms),\\
	\iota_{\pm,s}&:\Cs_{s,\pm}(\Ms)\hookrightarrow\Cs_s(\Ms),\\
	\iota_{s,\infty}&:\Cs_s(\Ms)\hookrightarrow\Cs(\Ms).
\end{align*}
We wish to demonstrate that there exist continuous maps $\hEM^\pm:\Ccs(\Ms)\to\Cs_{s,\pm}(\Ms)$ that satisfy $\EM^\pm=\iota_{s,\infty}\circ\iota_{\pm,s}\circ\hEM^\pm$. It is clear that for any $f\in\Ccs(\Ms)$, the function $\EM^\pm f$ lies within the range of $\iota_{s,\infty}\circ\iota_{\pm,s}$, we may unambiguously let $\hEM^\pm=(\iota_{s,\infty}\circ\iota_{\pm,\infty})^{-1}\circ\EM^\pm.$ To establish continuity we must first define the topologies on each of these spaces of functions. The spaces $\Cs(\Ms)$ and $\Ccs(\Ms)$ can be constructed as convex topological spaces, as follows \cite{waldmann,reedsimon}. A \emph{compact exhausting sequence} for $\Ms$ is a sequence $(K_n)_{n\in\mathbb{N}}$ of compact submanifolds of $\Ms$ such that $K_n\subset\mathring{K}_{n+1}$ for each $n$, and for every point $p\in\Ms$ there exists $N\in\mathbb{N}$ such that $p\in K_n$ for all $n>N$. Any space of smooth functions on a smooth manifold can be endowed with the $\Cs$ topology; we do not need to go into details here, except to say that the topology on $C^\infty(\Ms)$ is generated by seminorms $p_{K_n,k}$, $k,n\in\mathbb{N}$, where $(K_n)_{n\in\mathbb{N}}$ is a compact exhausting sequence for $\Ms$, and $p_{K_n,k}(f)$ is given by the supremum over $K_n$ of the norms of all covariant derivatives of $f$ of order no greater than $k$ (using a Riemannian metric to induce the norms of the derivatives). The $\Cs$ topology on a space of smooth functions on $\Ms$ is then defined as the subspace topology induced from $\Cs(\Ms)$. The topology of $\Ccs(\Ms)$, on the other hand, is constructed as an inductive limit of the topological spaces $\Cs_{K_n}$ (that is, the finest topology such that each embedding $\iota_n:\Cs_{K_n}(\Ms)\hookrightarrow\Ccs(\Ms)$ is continuous), where $(K_n)_{n\in\mathbb{N}}$ is again a compact exhausting sequence for $\Ms$, and $\Cs_K(\Ms)$ is the space $\{f\in\Cs(\Ms):\supp(f)\subseteq K\}$ endowed with the $\Cs$ topology. Now, for any inductive limit $X$ of locally convex spaces $(X_n)_{n\in\mathbb{N}}$, and locally convex space $Y$, a map $T:X\to Y$ is continuous if and only if each restriction $T|_{X_n}:X_N\to Y$ is continuous \cite[Theorem V.16]{reedsimon}. Since the space $\Cs_{K_n}(\Ms)$ inherits the subspace topology induced from $\Cs(\Ms)$, it follows that the embedding $\Ccs(\Ms)\hookrightarrow\Cs(\Ms)$ is continuous.

Now, for a spacetime $\Ms$ we wish to endow $\Cs_s(\Ms)$ and $\Cs_{s,\pm}(\Ms)$ with  topologies in a similar way to that given for $\Ccs(\Ms)$ in \cite{waldmann,reedsimon}; starting with a compact exhausting sequence $(K_n)_{n\in\mathbb{N}}$ for $\Ms$, we consider the topological spaces $\Cs_{J_\Ms(K_n)}(\Ms)$ and $\Cs_{J_\Ms^\pm(K_n)}(\Ms)$ defined analogously to $\Cs_{K_n}(\Ms)$, and let $\Cs_s(\Ms)$ and $\Cs_{s,\pm}(\Ms)$ be the inductive limit of $\Cs_{J_\Ms(K_n)}(\Ms)$ and $\Cs_{J_\Ms^\pm(K_n)}(\Ms)$ respectively as $n\to\infty$. We then have:
\begin{lemma}
	The embeddings $\iota_{0,\pm}$, $\iota_{\pm,s}$ and $\iota_{s,\infty}$ are all continuous in the relevant topologies.
\end{lemma}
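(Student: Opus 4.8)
The plan is to verify continuity of each of the three embeddings separately, in each case exploiting the universal property of the inductive limit: a linear map out of an inductive limit $X = \varinjlim X_n$ into a locally convex space is continuous precisely when each restriction to $X_n$ is continuous, as recalled in the excerpt via \cite[Theorem V.16]{reedsimon}. So for each embedding it suffices to fix a compact exhausting sequence $(K_n)_{n\in\mathbb{N}}$ for $\Ms$ and show that the relevant restriction to each stage $\Cs_{J^\pm_\Ms(K_n)}(\Ms)$ or $\Cs_{J_\Ms(K_n)}(\Ms)$ is continuous.

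First I would treat $\iota_{s,\infty}:\Cs_s(\Ms)\hookrightarrow\Cs(\Ms)$. Since $\Cs_s(\Ms)=\varinjlim\Cs_{J_\Ms(K_n)}(\Ms)$, it suffices to show each inclusion $\Cs_{J_\Ms(K_n)}(\Ms)\hookrightarrow\Cs(\Ms)$ is continuous; but $\Cs_{J_\Ms(K_n)}(\Ms)$ carries by definition the subspace topology induced from $\Cs(\Ms)$, so this inclusion is continuous essentially by construction — this is exactly the same argument already used in the excerpt to show $\Ccs(\Ms)\hookrightarrow\Cs(\Ms)$ is continuous. Next, for $\iota_{\pm,s}:\Cs_{s,\pm}(\Ms)\hookrightarrow\Cs_s(\Ms)$, I would again use that the source is $\varinjlim\Cs_{J^\pm_\Ms(K_n)}(\Ms)$, so it suffices to check each $\Cs_{J^\pm_\Ms(K_n)}(\Ms)\hookrightarrow\Cs_s(\Ms)$ is continuous; both spaces carry subspace topologies from $\Cs(\Ms)$ (the target through the identification of $\Cs_{J_\Ms(K_n)}(\Ms)$ as a subspace of $\Cs(\Ms)$, and because the inductive limit topology on $\Cs_s(\Ms)$ restricted to a single stage agrees with that subspace topology — here I would note $J^\pm_\Ms(K_n)\subseteq J_\Ms(K_n)$, so $\Cs_{J^\pm_\Ms(K_n)}(\Ms)$ sits inside the stage $\Cs_{J_\Ms(K_n)}(\Ms)$), so continuity follows from the fact that the inclusion is just the restriction of $\mathrm{id}_{\Cs(\Ms)}$. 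Finally, for $\iota_{0,\pm}:\Ccs(\Ms)\hookrightarrow\Cs_{s,\pm}(\Ms)$: since $\Ccs(\Ms)=\varinjlim\Cs_{K_n}(\Ms)$, it suffices to show each $\Cs_{K_n}(\Ms)\hookrightarrow\Cs_{s,\pm}(\Ms)$ is continuous. A compactly supported function with support in $K_n$ has support in $J^\pm_\Ms(K_n)$, so the image lands in the stage $\Cs_{J^\pm_\Ms(K_n)}(\Ms)$; since both $\Cs_{K_n}(\Ms)$ and $\Cs_{J^\pm_\Ms(K_n)}(\Ms)$ carry subspace topologies from $\Cs(\Ms)$, and the latter stage injects continuously into $\Cs_{s,\pm}(\Ms)$ by construction of the inductive limit, the composite is continuous.

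The only mild subtlety — and the step I expect to be the main point requiring care rather than a genuine obstacle — is the identification, for a single stage, of the inductive-limit topology on $\Cs_{s,\pm}(\Ms)$ (resp.\ $\Cs_s(\Ms)$) restricted to that stage with the $\Cs$-subspace topology the stage already carries. This is a standard property of strict (or at least well-behaved) inductive limits of the LF-type, and holds here because the bonding maps $\Cs_{J^\pm_\Ms(K_n)}(\Ms)\hookrightarrow\Cs_{J^\pm_\Ms(K_{n+1})}(\Ms)$ are topological embeddings with closed image (being restrictions of the identity on $\Cs(\Ms)$ between closed subspaces determined by support conditions, since $J^\pm_\Ms(K_n)$ is closed). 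Once that is in hand, all three continuity statements reduce, as above, to the observation that each embedding in sight is a restriction of the identity map on $\Cs(\Ms)$ between subspaces carrying the induced topology, and the universal property of the inductive limit does the rest.
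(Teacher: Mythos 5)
Your proof is correct and follows essentially the same route as the paper: reduce to each stage of the inductive limit via \cite[Theorem V.16]{reedsimon}, and then factor the restriction through the intermediate function space. However, the ``mild subtlety'' you flag at the end --- whether the inductive-limit topology on $\Cs_s(\Ms)$ or $\Cs_{s,\pm}(\Ms)$ restricted to a stage agrees with that stage's $\Cs$-subspace topology --- is not actually needed anywhere. The argument uses only the weaker, \emph{definitional} fact that each stage embeds \emph{continuously} into the limit; there is no need for that embedding to be a topological embedding. Concretely, the paper shows $\iota_{\pm,s}|_{\Cs_{J_\Ms^\pm(K_n)}(\Ms)}$ is continuous by factoring it as $\Cs_{J_\Ms^\pm(K_n)}(\Ms)\hookrightarrow\Cs_{J_\Ms(K_n)}(\Ms)\hookrightarrow\Cs_s(\Ms)$: the first arrow is continuous because both spaces carry the subspace topology from $\Cs(\Ms)$, and the second is continuous by construction of the inductive-limit topology. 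No claim about agreement of topologies on the stage is invoked. Your write-up already contains this clean factoring argument; the additional discussion of strictness and closed bonding maps, though correct as stated, is a red herring --- it would become relevant if you needed to transfer \emph{open} sets or bounded sets back and forth, but for one-sided continuity the universal property alone suffices, and your concluding paragraph misattributes the work to the identification rather than to the definitional continuity of the stage inclusions.
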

\begin{proof}
	For the sake of readable notation, we denote $X_n=\Cs_{K_n}(\Ms)$, $Y_n^\pm=\Cs_{J_\Ms^\pm(K_n)}(\Ms)$ and $Z_n=\Cs_{J_\Ms(K_n)}$. Firstly, we consider $\iota_{s,\infty}$: for any $n\in\mathbb{N}$, the space $Z_n$ is endowed with the subspace topology induced from $\Cs(\Ms)$, so the embedding must be continuous; therefore $\iota_{s,\infty}|_{Z_n}:Z_n\to\Cs(\Ms)$ is continuous for all $n$, as required for continuity of $\iota_{s,\infty}$. Now, for each $n$ we may factorise $\iota_{\pm,s}|_{Y_n}$ as the composition of the embeddings of  $Y_n^\pm\hookrightarrow Z_n$ and $Z_n\hookrightarrow\Cs_s(\Ms)$; the former is continuous as $Y_n^\pm$ has the subspace topology induced from $Z_n$, and the latter is continuous by definition of $\Cs_s(\Ms)$. Therefore $\iota_{\pm,s}$ is continuous. Similarly, we may factorise $\iota_{0,\pm}|_{X_n}$ as the composition of the embeddings of $X_n\hookrightarrow Y_n^\pm$ and $Y_n^\pm\hookrightarrow\Cs_{s,\pm}(\Ms)$, both of which are continuous. Therefore $\iota_{0,\pm}$ is continuous.
\end{proof}
This also allows us to prove:
\begin{lemma}
	The maps $\hEM^\pm:\Ccs(\Ms)\to\Cs_{s,\pm}(\Ms)$ are continuous.
\end{lemma}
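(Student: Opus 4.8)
The plan is to factor $\hEM^\pm$ through the spaces over each compact set in an exhausting sequence and invoke the inductive-limit universal property, exactly as in the previous lemma. By \cite[Theorem V.16]{reedsimon}, since $\Ccs(\Ms)$ is the inductive limit of the spaces $X_n = \Cs_{K_n}(\Ms)$, the map $\hEM^\pm$ is continuous if and only if each restriction $\hEM^\pm|_{X_n}: X_n \to \Cs_{s,\pm}(\Ms)$ is continuous. So it suffices to fix $n$ and prove continuity of this restriction.

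First I would observe that there is a compact set $L_n \subseteq \Ms$ such that $\EM^\pm f$ is supported in $J_\Ms^\pm(L_n)$ for every $f \in X_n$: indeed, by \eqref{eqn:funsol2} we have $\supp(\EM^\pm f) \subseteq J_\Ms^\pm(\supp f) \subseteq J_\Ms^\pm(K_n)$, so $L_n = K_n$ works. Hence $\hEM^\pm|_{X_n}$ actually takes values in the subspace $\Cs_{J_\Ms^\pm(K_n)}(\Ms) = Y_n^\pm$, and by definition of the inductive-limit topology on $\Cs_{s,\pm}(\Ms)$ the inclusion $Y_n^\pm \hookrightarrow \Cs_{s,\pm}(\Ms)$ is continuous. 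So the problem reduces to showing that the corestricted map $X_n \to Y_n^\pm$ is continuous; but $Y_n^\pm$ carries the subspace topology induced from $\Cs(\Ms)$, so this in turn reduces to continuity of $X_n \to \Cs(\Ms)$, $f \mapsto \EM^\pm f$, which is immediate since $\EM^\pm: \Ccs(\Ms) \to \Cs(\Ms)$ is continuous (a standard property of the Klein--Gordon Green operators, already cited from \cite{wald} in the construction) and $X_n \hookrightarrow \Ccs(\Ms)$ is continuous.

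Assembling these: $\hEM^\pm|_{X_n}$ equals the composite $X_n \hookrightarrow \Ccs(\Ms) \xrightarrow{\EM^\pm} \Cs(\Ms)$ with image in $Y_n^\pm$, corestricted to $Y_n^\pm$ and then included into $\Cs_{s,\pm}(\Ms)$; each stage is continuous, so the restriction is continuous for every $n$, and therefore $\hEM^\pm$ is continuous. The only mildly delicate point is the bookkeeping that $\hEM^\pm$, a priori defined only as a set-theoretic map into $\Cs_{s,\pm}(\Ms)$, genuinely lands in the $n$-th building block $Y_n^\pm$ when restricted to $X_n$ — this is what lets us replace the inductive-limit target by a subspace of $\Cs(\Ms)$ and thereby move all topological work back to the already-understood map $\EM^\pm: \Ccs(\Ms) \to \Cs(\Ms)$. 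I do not expect any real obstacle; this is a routine diagram chase through the universal properties, parallel to the proof of the preceding lemma.
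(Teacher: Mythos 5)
Your proposal is correct and takes the same overall route as the paper: reduce to continuity of each restriction $\hEM^\pm|_{X_n}$ via \cite[Theorem V.16]{reedsimon}, and then use that $\EM^\pm|_{X_n}:X_n\to\Cs(\Ms)$ is continuous because it factors through the continuous map $\EM^\pm:\Ccs(\Ms)\to\Cs(\Ms)$. Where you go beyond the paper is in making explicit the corestriction of $\hEM^\pm|_{X_n}$ to $Y_n^\pm=\Cs_{J_\Ms^\pm(K_n)}(\Ms)$, using the support property $\supp(\EM^\pm f)\subseteq J_\Ms^\pm(K_n)$ for $f\in X_n$. This step is genuinely necessary: the topological fact ``$T:X\to Y$ is continuous iff $\iota\circ T:X\to Z$ is continuous'' requires $Y$ to carry the subspace topology from $Z$, and $\Cs_{s,\pm}(\Ms)$ carries the inductive-limit topology, which is in general strictly finer than the subspace topology induced from $\Cs(\Ms)$; so one cannot read continuity of $\hEM^\pm|_{X_n}$ off continuity of $\EM^\pm|_{X_n}$ directly, but only after corestricting to the building block $Y_n^\pm$, which \emph{does} have the subspace topology. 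The paper's proof passes over this point rather quickly (the displayed factorisation $\EM^\pm|_{X_n}=\iota_n\circ\hEM^\pm|_{X_n}$, with $\iota_n:X_n\hookrightarrow\Cs(\Ms)$, does not even compose as written and presumably contains a typo for the embedding of $Y_n^\pm$ or of $\Cs_{s,\pm}(\Ms)$). Your version is the careful and correct form of the argument.
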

\begin{proof}
	We recall that if a topological space $Y$ is endowed with the subspace topology from a space $Z$, and the embedding is denoted $\iota:Y\hookrightarrow Z$, then a map $T:X\to Y$ is continuous if and only if $\iota\circ T$ is continuous. We note that $X_n=\Cs_{K_n}(\Ms)$ has the subspace topology induced from $\Cs(\Ms)$; since $\EM^\pm:\Ccs(\Ms)\to\Cs(\Ms)$ is continuous, it follows that the restrictions $\EM^\pm|_{X_n}:X_n\to\Cs(\Ms)$ are all continuous. Denoting the canonical embedding by $\iota_n:X_n\hookrightarrow\Cs(\Ms)$, it is clear that we may factorise $\EM^\pm|_{X_n}=\iota_n\circ\hEM^\pm|_{X_n}$, so each $\hEM^\pm|_{X_n}$ is continuous. Therefore $\hEM^\pm$ is continuous.
\end{proof}
We define
\begin{align*}
	\hEM:\Ccs(\Ms)&\to\Cs_s(\Ms)\\
	f&\mapsto\iota_{s,-}(\hEM^-f)-\iota_{s,+}(\hEM^+f),
\end{align*}
which is clearly continuous; we also define $\cEM:(\Cs_s(\Ms))'\to\mathcal{D}'(\Ms)$ by $\cEM=-(\hEM)'$. The map $\PM$ may be considered to act on elements of $\Cs_s(\Ms)$ and $\Cs_{s,\pm}$ in the obvious way, from which we see that strictly speaking $\PM\hEM^\pm f=\iota_{0,\pm}=\hEM^\pm\PM f$ for any $f\in\Ccs(\Ms)$.

  We say that a distribution $t\in\mathcal{D}'(\Ms^{\times n})$ is \emph{time-compact} if there exist spacelike Cauchy surfaces $\Sigma^\pm\subset\Ms$ such that $\supp(t)\subseteq(J_\Ms^-(\Sigma^+)\cap J_\Ms^-(\Sigma^-))^{\times n}.$ Note that the action of a time-compact distribution $t$ is well-defined on $f\in\Cs_s(\Ms^{\times n})$, since the intersection $\supp(t)\cap\supp(f)$ is compact. Therefore any time-compact distribution can be considered to be an element of $(\Cs_s(\Ms^{\times n}))'$. We also say that a distribution $t$ is \emph{future-compact} if there exists a Cauchy surface $\Sigma\subset\Ms$ such that $\supp(t)\subseteq(J_\Ms^-(\Sigma))^{\times n}$, and \emph{past-compact} if there exists a Cauchy surface $\Sigma\subset\Ms$ such that $\supp(t)\subseteq(J_\Ms^+(\Sigma))^{\times n}$. We may similarly see that a future-/past-compact distribution can be considered to be an element of $(\Cs_{s,\pm}(\Ms^{\times n}))'$.

We then have the following result, which will be important later:
\begin{lemma}
	\label{lem:eetapchi}
	Let $u\in\mathcal{D}'(\Ms)$, with $u[\PM f]=0$ for all $f\in\Ccs(\Ms)$. Then there exists a distribution $t\in(\Cs_s(\Ms))'$ such that $u=\cEM t$.
\end{lemma}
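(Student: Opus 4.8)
The hypothesis states exactly that $\PM u=0$ in the distributional sense, since $u[\PM f]=\langle u,\PM f\rangle=\langle\PM u,f\rangle$. What we must produce is $t\in(\Cs_s(\Ms))'$ with $\cEM t=u$, and because $\cEM=-(\hEM)'$ this is the requirement that $\langle t,\hEM f\rangle=-u[f]$ for every $f\in\Ccs(\Ms)$. The plan is to build $t$ directly as a \emph{time-compact} distribution by cutting $u$ off in time. First I would fix two spacelike Cauchy surfaces $\Sigma^-,\Sigma^+$ for $\Ms$ with $\Sigma^+\subset I_\Ms^+(\Sigma^-)$, together with $\chi\in\Cs(\Ms)$ such that $\chi\equiv1$ on $J_\Ms^-(\Sigma^-)$ and $\chi\equiv0$ on $J_\Ms^+(\Sigma^+)$; such $\Sigma^\pm$ and $\chi$ exist in any globally hyperbolic spacetime, e.g.\ by composing a Cauchy temporal function with a suitable function of one real variable. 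Then I set $t:=\PM(\chi u)$. Since $\PM u=0$ we have $t=\PM(\chi u)-\chi\PM u=[\PM,\chi]u$, where $[\PM,\chi]$ is a first-order differential operator whose coefficients vanish wherever $d\chi$ does; hence $\supp(t)\subseteq\supp(d\chi)\subseteq J_\Ms^+(\Sigma^-)\cap J_\Ms^-(\Sigma^+)$, so $t$ is time-compact and therefore defines an element of $(\Cs_s(\Ms))'$ as already observed in the text. Writing $\chi u=u-(1-\chi)u$ and again using $\PM u=0$ gives the second useful expression $t=-\PM\bigl((1-\chi)u\bigr)$, in which $(1-\chi)u$ is past-compact ($\supp\subseteq J_\Ms^+(\Sigma^-)$), while $\chi u$ is future-compact ($\supp\subseteq J_\Ms^-(\Sigma^+)$).

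\textbf{Verification that $\cEM t=u$.} Fix $f\in\Ccs(\Ms)$ and decompose $\hEM f=\iota_{s,-}(\hEM^-f)-\iota_{s,+}(\hEM^+f)$, where $\hEM^\pm f\in\Cs_{s,\pm}(\Ms)$ and $\PM\hEM^\pm f$ is simply $f$. For the $\hEM^+$ term I would pair with the representation $t=\PM(\chi u)$: since $\chi u$ is future-compact it pairs with $\hEM^+f\in\Cs_{s,+}(\Ms)$, the relevant supports meeting in the compact set $J_\Ms^-(\Sigma^+)\cap J_\Ms^+(\supp f)$, so integration by parts is legitimate and yields
\[
	\langle t,\iota_{s,+}(\hEM^+f)\rangle=\langle\chi u,\PM\hEM^+f\rangle=\langle\chi u,f\rangle=\langle u,\chi f\rangle .
\]
For the $\hEM^-$ term I would instead use $t=-\PM\bigl((1-\chi)u\bigr)$: now $(1-\chi)u$ is past-compact and pairs with $\hEM^-f\in\Cs_{s,-}(\Ms)$, the supports meeting in the compact set $J_\Ms^+(\Sigma^-)\cap J_\Ms^-(\supp f)$, and integration by parts gives
\[
	\langle t,\iota_{s,-}(\hEM^-f)\rangle=-\langle(1-\chi)u,\PM\hEM^-f\rangle=-\langle(1-\chi)u,f\rangle=-\langle u,(1-\chi)f\rangle .
\]
Subtracting, $\langle t,\hEM f\rangle=-\langle u,(1-\chi)f\rangle-\langle u,\chi f\rangle=-\langle u,f\rangle=-u[f]$, hence $\langle\cEM t,f\rangle=-\langle t,\hEM f\rangle=u[f]$; as $f$ was arbitrary this gives $\cEM t=u$.

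\textbf{Where the difficulty lies.} The compactness facts used above ($J_\Ms^-(\Sigma^+)\cap J_\Ms^+(\supp f)$ and $J_\Ms^+(\Sigma^-)\cap J_\Ms^-(\supp f)$ compact, and $\supp(d\chi)$ contained in a time-compact slab) are standard consequences of global hyperbolicity, and the existence of $\chi$ is routine. The one genuine subtlety — and the step I expect to be the crux — is the bookkeeping in the last display: one cannot simply integrate $\PM$ by parts against $\hEM f$ as a whole, because $\supp(\chi u)\cap\supp(\hEM^-f)$ and $\supp\bigl((1-\chi)u\bigr)\cap\supp(\hEM^+f)$ need not be compact. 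The trick is to pair the two halves $\hEM^\pm f$ of $\hEM f$ with the two representations $\PM(\chi u)$ and $-\PM\bigl((1-\chi)u\bigr)$ of $t$ whose support behaviour matches (future-compact against $\Cs_{s,+}$, past-compact against $\Cs_{s,-}$), after which the compact-intersection hypothesis needed for integration by parts is met in each term and the two contributions recombine to $-u[f]$.
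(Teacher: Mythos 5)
Your proof is correct, and while the underlying idea matches the paper's --- take $t$ to be $\PM$ applied to a cutoff of $u$, which is time-compact because $\PM u=0$ reduces it to the commutator $[\PM,\chi]u$ supported where $d\chi\neq0$ --- your verification that $\cEM t=u$ is organized in a cleaner way. The paper defines $t=\eta_s'\PM(\cadv)'u$, inserting an extra time-compact cutoff $\eta$ so that $t$ visibly lands in $(\Cs_s(\Ms))'$; you dispense with $\eta$ by observing directly that $\PM(\chi u)=[\PM,\chi]u$ is already time-compact and therefore extends canonically (as noted earlier in the text). More substantively, the paper's verification decomposes $\hEM^+ f=g_-+g_0+g_+$ by support, tracks which pieces survive multiplication by $\cadv$ and application of $\PM$, and invokes the weak-solution property of $u$ at several intermediate steps. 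You replace this with the observation that $t$ admits the two representations $\PM(\chi u)=-\PM\bigl((1-\chi)u\bigr)$, with future-compact and past-compact supports respectively, which can be paired separately against $\hEM^+f\in\Cs_{s,+}(\Ms)$ and $\hEM^-f\in\Cs_{s,-}(\Ms)$ with legitimate (compactly intersecting) integrations by parts, the two contributions recombining to $-u[f]$. This duality-of-cutoffs organization makes the support-matching needed for each integration by parts transparent and avoids the three-way decomposition altogether; it is a genuine simplification of the verification step, though the distribution $t$ produced is essentially the same.
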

\begin{proof}
	Let $\Sigma^\pm$ be two disjoint Cauchy surfaces in $\Ms$ with $\Sigma^+\subset J_\Ms^+(\Sigma^-)$, and let $\cadv+\cret=1$ be a smooth partition of unity with $\cadv(x)=0,$ $\cret(x)=1$ for $x\in J_\Ms^+(\Sigma^+)$ and $\cadv(x)=1,$ $\cret(x)=0$ for $x\in J_\Ms^-(\Sigma^-)$. Now, let $\eta\in\Cs(\Ms)$ be time-compact, and defined such that $\eta(x)=1$ for all $x\in J_\Ms^-(\widetilde\Sigma^+)\cap J_\Ms^+(\widetilde\Sigma^-)$, where $\widetilde\Sigma^\pm\subset\Ms$ are further Cauchy surfaces disjoint from $\Sigma^\pm$ with $\Sigma^\pm\subset J_\Ms^\mp(\widetilde\Sigma^\pm)$. We define the map $\eta_s:\Cs_s(\Ms)\to\Ccs(\Ms)$ by the action of multiplication by $\eta$. We also consider $\chi^{\text{adv/ret}}:\Ccs(\Ms)\to\Ccs(\Ms)$ as defined by action of multiplication. The operator $\PM$ can be considered as an endomorphism acting on any of the spaces of functions we defined above; we may similarly consider it as an endomorphism on any of the dual spaces in question, by 
\[
	\langle \PM u,f\rangle=\langle u,\PM f\rangle.
\]
We may then show that $u=\cEM\eta_s'\PM(\cadv)' u$, as follows. Let $f\in\Ccs(\Ms)$ be arbitrary, then
\begin{align*}
	(\cEM\eta_s'\PM(\cadv)' u)[f]&=-(\eta_s'\PM(\cadv)' u)[\hEM f]\\
	&=(\eta_s'\PM(\cadv)' u)[\iota_{+,s}\hEM^+f]-(\eta_s'\PM(\cadv)' u)
	[\iota_{-,s}\hEM^-f]\\
	&=u[\cadv\PM\eta_s\iota_{+,s}\hEM^+f]-u[\cadv\PM\eta_s\iota_{-,s}
	\hEM^-f].
\end{align*}
Since $u[\PM\eta_s\iota_{-,s}\hEM^-f]=0$, we may use $\cadv=\id-\cret$ to see that
\beq
	(\cEM\eta_s'\PM(\cadv)'u)[f]=u[\cadv\PM\eta_s\iota_{+,s}\hEM^+f]
	+u[\cret\PM\eta_s\iota_{-,s}\hEM^-f].\label{eqn:usplit}
\eeq
Now any $g\in\Cs_{s,+}(\Ms)$ can be split into a sum of three functions $g_-,$ $g_0$ and $g_+$, with the properties that $\supp(g_\pm)\subseteq J_\Ms^\pm(\Sigma^\pm)$ and $\supp(g_0)\subseteq J_\Ms^-(\widetilde\Sigma^+)\cap J_\Ms^+(\widetilde\Sigma^-)$.We may note that $\supp(g_-)$ and $\supp(g_0)$ are both compact, so we can consider $g_0$ and $g_-$ as elements of $\Ccs(\Ms)$, whereupon
\[
	g=\iota_{0,+}g_-+\iota_{0,+}g_0+g_+.
\]
By construction, we have $\eta_s\iota_{+,s}\iota_{0,+}g_0=g_0$; the definition of $\cadv$ also shows that $\cadv T_1g_-=T_1g_-$ and $\cadv T_2g_+=0$ for any operators $T_1:\Ccs(\Ms)\to\Ccs(\Ms)$, $T_2:\Cs_{s,+}(\Ms)\to\Ccs(\Ms)$ such that $\supp(T_i f)\subseteq\supp(f)$ for all $f$, $i=1,2$. It follows directly from these that if we let $g=\hEM^+f$ and split as described, then
\begin{align*}
	\cadv\PM\eta_s\iota_{+,s}\hEM^+f&=
	\cadv\PM\eta_s\iota_{+,s}\iota_{0,+}g_-+
	\cadv\PM\eta_s\iota_{+,s}\iota_{0,+}g_0+
	\cadv\PM\eta_s\iota_{+,s}g_+\\
	&=\PM\eta_s\iota_{+,s}\iota_{0,+}g_-+\cadv\PM g_0.
\end{align*}
But $f=(\iota_{0,+})^{-1}\PM\hEM^+f=\PM g_0+\PM g_-+(\iota_{0,+})^{-1}\PM g_+$, so by the properties of $\cadv$ we have
\begin{align*}
	\cadv\PM\eta_s\iota_{+,s}\hEM^+f&=\PM\eta_s\iota_{+,s}\iota_{0,+}g_-
	+\cadv f-\cadv\PM g_--\cadv(\iota_{0,+})^{-1}\PM g_+\\
	&=\PM(\eta_s\iota_{+,s}\iota_{0,+}-\id)g_-+\cadv f.
\end{align*}
Since $u$ is a weak solution and $(\eta_s\iota_{+,s}\iota_{0,+}-\id)g_-$ is compactly supported, we have
\[
	u[\cadv\PM\eta_s\iota_{+,s}\hEM^+f]=u[\cadv f].
\]
We may similarly conclude that $u[\cret\PM\eta_s\iota_{-,s}\hEM^-f]=u[\cret f].$ It follows from \eqref{eqn:usplit} that
\begin{align*}
	(\cEM\eta_s'\PM(\cadv)'u)[f]&=u[\cadv f]+u[\cret f]\\
	&=u[f].
\end{align*}
This proves the required result, and also gives us an explicit example of a distribution $t\in(\Cs_s(\Ms))'$ satisfying $u=\cEM t$.
\end{proof}
While we have been very careful with our definitions in this section, in the remainder of the paper we will not need to be so exact with our notation. Firstly, we make the observation that since any multiplication operator $\mu$ between spaces of smooth functions is formally self-adjoint, it makes sense to write $\mu't=\mu t$ for a distribution $t$ and formally regard $\mu t$ as the pointwise product of $t$ with the underlying function $\mu\in\Cs(\Ms)$. We will particularly use this convention when a distributional solution $u$ is of the form $u=\EM t$, where $t\in\mathcal{E}'(\Ms)$. Lemma \ref{lem:eetapchi} tells us that $\EM t=\cEM\eta_s'\PM(\cadv)'\EM t=\cEM\eta\PM\cadv\EM t$; however, regarding $\cadv\EM t$ as a pointwise product allows us to see that in fact the distribution $\PM\cadv\EM t$ must be supported within the region $J_\Ms^-(\Sigma^+)\cap J_\Ms^+(\Sigma^-)$ where $\cadv$ is non-constant, by the properties of $\PM$ and $\EM$. Moreover, the support of $\PM\cadv\EM t$ lies within $J_\Ms(\supp(t))$, which has compact intersection with $J_\Ms^-(\Sigma^+)\cap J_\Ms^+(\Sigma^-)$, so the support of $\PM\cadv\EM t$ is compact. Since $\eta=1$ everywhere within $\supp(\PM\cadv\EM t)$, we may suppress $\eta$ and instead regard $\PM\cadv\EM t$ itself as an element of $\mathcal{E}'(\Ms)$, writing
\beq
	\label{eqn:epchidist}
	\EM\PM\cadv\EM t=\EM t.
\eeq
Moreover $\PM\EM t=0$ for any $t\in\mathcal{E}'(\Ms)$, so we also have
\[
	\EM\PM\cret\EM t=-\EM t.
\]

\section{The timeslice axiom and relative Cauchy evolution}
It is well known that both the Klein-Gordon theory \cite{bfv} and the enlarged algebra of Wick polynomials \cite{chilfred} obey the timeslice axiom. However, we will give a proof that the timeslice axiom holds in both cases, since the construction is different to that used in the aforementioned references, and since we require an explicit expression for the inverse maps $\A(\psi)^{-1}$ and $\W(\psi)^{-1}$ when $\psi:\Ns\hookrightarrow\Ms$ is a Cauchy arrow.

\subsection{The timeslice axiom for the Klein-Gordon Theory}
\label{sect:kgtimeslice}
In order to compute the relative Cauchy evolution for either $\A$ or $\W$, we must first demonstrate that they obey the timeslice axiom. It is worth asking first whether the theory $\F$ obeys the timeslice axiom; since the construction for $\F$ contains no condition relating to the field equation, we should not expect $\F$ to obey the axiom, and indeed this is the case: let $\Ns,\Ms$ be objects in \Loc, and $\psi:\Ns\hookrightarrow\Ms$ be a Cauchy arrow in \Loc. Suppose that $\psi(\Ns)\neq\Ms$; then, pick some nonzero $t\in\F^1(\Ms)$ whose support lies within $\Ms\setminus\psi(\Ns)$. Clearly $t\!\left[\bar t\right]\neq 0$, but as $\psi^\ast \bar t=0$, we have $(\F(\psi)F)\!\left[\bar t\right]=0$ for all $F\in\F(\Ns)$. Therefore $\F(\psi)$ is not surjective, and consequently cannot be invertible; hence $\F$ does not satisfy the timeslice axiom.

To demonstrate that $\A$, on the other hand, does obey the timeslice axiom, we use following lemma, which is proved in \cite{dimock} (and can also be seen to be a consequence of lemma \ref{lem:eetapchi}: see \eqref{eqn:epchidist}).
\begin{lemma}
\label{lem:epchi}
Let $\Sadv,\Sret$ be disjoint Cauchy surfaces in a globally hyperbolic spacetime $\Ms$, with $\Sret\subseteq J_\Ms^+(\Sadv)$, and let $\cadv+\cret=1$ be a smooth partition of unity on $\Ms$ with $\cadv(x)=0,\ \cret(x)=1$ for $x\in J_\Ms^+(\Sret)$ and $\cadv(x)=1,\ \cret(x)=0$ for $x\in J_\Ms^-(\Sadv)$. Then 
\begin{align*}
	\EM \PM\cadv \EM f&=\EM f,\\
	\EM \PM\cret \EM f&=-\EM f
\end{align*}
for all $f\in\Ccs(\Ms)$. Moreover, $\PM\chi^\text{adv/ret}\EM f\in\Ccs(\Ms)$.
\end{lemma}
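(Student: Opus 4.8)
The plan is to derive both identities from the uniqueness of the advanced and retarded fundamental solutions, by recognising $\cadv\EM f$ and $\cret\EM f$ as solutions of an inhomogeneous Klein--Gordon equation whose source is compactly supported and causally trapped between $\Sadv$ and $\Sret$.

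First I would set $\phi=\EM f$, so that $\PM\phi=0$, and put $g=\PM\cadv\EM f$. Since $\PM\phi=0$, the function $g$ vanishes wherever $\cadv$ is locally constant, whence $\supp g\subseteq\supp(d\cadv)$; the hypotheses on $\cadv$ force $\supp(d\cadv)\subseteq J_\Ms^+(\Sadv)\cap J_\Ms^-(\Sret)$ (the closed complement of $I_\Ms^-(\Sadv)\cup I_\Ms^+(\Sret)$, using spacelike Cauchy surfaces). Combining this with $\supp g\subseteq\supp(\cadv\phi)\subseteq J_\Ms(\supp f)$ and the standard fact that $J_\Ms^+(K)\cap J_\Ms^-(\Sigma)$ (and $J_\Ms^-(K)\cap J_\Ms^+(\Sigma)$) is compact for compact $K$ and a Cauchy surface $\Sigma$, one sees that $\supp g$ is a closed subset of a compact set, so $g\in\Ccs(\Ms)$; since $\PM\cret\EM f=-g$, this also handles the ``moreover'' claim.

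Next I would decompose $\phi=\cadv\phi+\cret\phi$ and claim $\cadv\phi=\EM^- g$ and $\cret\phi=-\EM^+ g$. For the first, $\PM(\cadv\phi)=g=\PM\EM^- g$, so $w:=\cadv\phi-\EM^- g$ satisfies $\PM w=0$; moreover $\cadv\phi$ vanishes on $J_\Ms^+(\Sret)$ (where $\cadv\equiv0$) while $\supp\EM^- g\subseteq J_\Ms^-(\supp g)\subseteq J_\Ms^-(\Sret)$, so $w$ vanishes on an open neighbourhood of the future side of $\Sret$. Taking $\Sret$ spacelike (harmless, since such Cauchy surfaces always exist), the Cauchy data of $w$ on $\Sret$ vanish by continuity, hence $w\equiv0$ by uniqueness of the Cauchy problem; thus $\cadv\phi=\EM^- g$. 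The identity $\cret\phi=-\EM^+ g$ follows the same way from $\PM(\cret\phi)=-g$, $\supp\EM^+ g\subseteq J_\Ms^+(\supp g)\subseteq J_\Ms^+(\Sadv)$, and the vanishing of $\cret\phi$ on $J_\Ms^-(\Sadv)$. Adding the two gives $\EM f=\phi=\EM^- g-\EM^+ g=\EM g=\EM\PM\cadv\EM f$, which is the first displayed identity; the second follows immediately since $\EM\PM\cret\EM f=\EM(-g)=-\EM f$.

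The main obstacle I expect is the causal/support bookkeeping: one must check carefully that the conditions imposed on $\cadv$ and $\cret$ genuinely confine $\supp(d\cadv)$ to the slab between the two Cauchy surfaces --- so that $g$ is simultaneously compactly supported, contained in $J_\Ms^-(\Sret)$, and contained in $J_\Ms^+(\Sadv)$ --- and then to justify that a smooth solution of $\PM w=0$ vanishing to the future (resp.\ past) of a Cauchy surface must vanish identically. Everything else is routine.
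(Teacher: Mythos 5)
Your proof is correct, but it takes a genuinely different route from the paper. The paper does not prove Lemma \ref{lem:epchi} directly: it cites \cite{dimock} and observes that the result is a consequence of the distributional Lemma \ref{lem:eetapchi} via \eqref{eqn:epchidist}, whose proof requires the machinery of $\Cs_s(\Ms)$, $\Cs_{s,\pm}(\Ms)$ and their duals. You instead give the classical, elementary argument: set $g=\PM\cadv\EM f$, confine $\supp g$ to the compact intersection of $J_\Ms(\supp f)$ with the slab between $\Sadv$ and $\Sret$ (which already disposes of the ``moreover'' clause), then identify $\cadv\EM f=\EM^-g$ and $\cret\EM f=-\EM^+g$ by matching inhomogeneous Klein--Gordon equations, support properties, and uniqueness of the Cauchy problem, and add. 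This buys a short, self-contained proof; the paper's route is heavier but reuses Lemma \ref{lem:eetapchi}, which it needs anyway for the distributional support arguments later (notably Lemma \ref{lem:Hdiffcomp}), so there Lemma \ref{lem:epchi} comes essentially for free. One small imprecision in your write-up: $\Sret$ is given in the hypothesis, so you cannot simply ``take'' it spacelike. The clean fix is to note that $w:=\cadv\EM f-\EM^-g$ vanishes on the open set $I^+_\Ms(\Sret)$, pick a smooth spacelike Cauchy surface $\Sigma\subset I^+_\Ms(\Sret)$ (such a surface exists by the Bernal--S\'anchez splitting), and observe that $w$ vanishes on a neighbourhood of $\Sigma$, so its Cauchy data on $\Sigma$ vanish and the uniqueness theorem gives $w\equiv0$; similarly for the retarded half. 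The same caveat applies to your parenthetical appeal to spacelikeness when computing the closed complement of $I^-_\Ms(\Sadv)\cup I^+_\Ms(\Sret)$. These are cosmetic, not substantive, and your causal bookkeeping is otherwise complete.
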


Defining $\zeta t=\PM\cadv \EM t$ for $t\in\F^1(\Ms)$, it follows directly that given $\Sadv,\Sret$ and $\cadv,\cret$ defined as above, for any $t\in\F^n(\Ms)$, $n\geq1$, we have
\[
	\supp(\tenpow{\zeta}nt)\subseteq(J_\Ms^+(\Sadv)\cap J_\Ms^-(\Sret))^{\times n}\cap \supp(\tenpow\EM nt).
\]
Clearly $\tenpow\zeta n$ maps elements of $\F^n(\Ms)$ to elements of $\F^n(\Ms)$. We also note that by lemma \ref{lem:epchi}, we have
\begin{align}
	\tenpow\zeta nt[\EM f]&=(-1)^n\tenpow{(\EM\zeta)}n t[f]\notag\\
	&=(-1)^n\tenpow{\EM} n t[f]\notag\\
	&=t[\EM f]\label{eqn:talphat}
\end{align}
for any $t\in\F^n(\Ms)$, $n\geq1$ and $f\in\Ccs(\Ms)$. It follows that if we define
\begin{align*}
	\Zeta:\F(\Ms)&\to\F(\Ms)\\
	\sum_{n=0}^Nt_n&\mapsto\sum_{n=0}^N\tenpow\zeta nt_n\qquad(t_n\in\F^n(\Ms)),
\end{align*}
we have $\Zeta F\sM F$ for all $F\in\F(\Ms)$.

\begin{lemma}
	\label{lem:atimeslice}
	The theory $\A$ obeys the timeslice axiom.
\end{lemma}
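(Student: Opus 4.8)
The plan is to show that for any Cauchy arrow $\psi:\Ns\hookrightarrow\Ms$ the map $\A(\psi):\A(\Ns)\to\A(\Ms)$ is an isomorphism. Since we already know from Lemma~\ref{lem:kgfmor} that $\A(\psi)$ is injective, the whole task reduces to surjectivity, and the operator $\Zeta$ constructed just above is tailor-made for this. The key observation is that $\Zeta$ ``pushes'' the support of any functional into the time-slab $J_\Ms^+(\Sadv)\cap J_\Ms^-(\Sret)$ between two chosen Cauchy surfaces, while preserving the $\sM$-equivalence class (the identity $\Zeta F\sM F$, established via \eqref{eqn:talphat}). So, given an arbitrary $A=[F]_\Ms\in\A(\Ms)$, I would first choose the Cauchy surfaces $\Sadv,\Sret$ and the partition of unity $\cadv,\cret$ so that the time-slab lies inside $\psi(\Ns)$ — this is possible precisely because $\psi(\Ns)$ contains a Cauchy surface for $\Ms$ and one can fatten it slightly to a globally hyperbolic neighbourhood still contained in $\psi(\Ns)$, then take $\Sadv,\Sret$ inside that neighbourhood.

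With that choice, $\Zeta F$ is a functional all of whose components $\tenpow\zeta n t_n$ are supported in $(J_\Ms^+(\Sadv)\cap J_\Ms^-(\Sret))^{\times n}\subseteq\psi(\Ns)^{\times n}$. By the characterisation of the image of $\A(\psi)$ given in Lemma~\ref{lem:aalgimage} (equivalently, the discussion there showing that such $F$ arise as $\psi_\ast$ of something on $\Ns$), it follows that $[\Zeta F]_\Ms\in\A(\psi)(\A(\Ns))$. But $\Zeta F\sM F$, so $[\Zeta F]_\Ms=[F]_\Ms=A$, and hence $A\in\A(\psi)(\A(\Ns))$. Since $A$ was arbitrary, $\A(\psi)$ is surjective, and combined with injectivity it is an \Alg-isomorphism; this is exactly the timeslice axiom.

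There is one point to be careful about: one must check that the compactly supported functional $\Zeta F$ — more precisely its components, which are finite sums of products of test functions pushed forward from $\Ns$ — genuinely lies in $\F(\Ms)$ and indeed in the image of $\F(\psi)$, i.e.\ that each $\tenpow\zeta n t_n$ really factors as $\psi_\ast$ of an element of $\F^n(\Ns)$. This is immediate because $\psi^{-1}$ restricted to $\psi(\Ns)$ is a diffeomorphism and the support condition guarantees $\psi_\ast\psi^\ast(\tenpow\zeta n t_n)=\tenpow\zeta n t_n$; one then sets $[\psi^\ast\Zeta F]_\Ns$ as the preimage. The only mild subtlety — and the step I expect to require the most care — is arranging the geometry at the start: one needs disjoint Cauchy surfaces $\Sadv,\Sret$ of $\Ms$ whose causal ``sandwich'' $J_\Ms^+(\Sadv)\cap J_\Ms^-(\Sret)$ is contained in the open set $\psi(\Ns)$, which uses the fact that $\psi(\Ns)$ is causally convex and contains a Cauchy surface, together with the standard foliation results for globally hyperbolic spacetimes. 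Everything else is bookkeeping with the properties of $\Zeta$ already recorded in the excerpt.
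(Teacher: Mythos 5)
Your proposal is correct and relies on exactly the same mechanism as the paper's proof: the operator $\Zeta$ compresses the support of each component $t_n$ into the sandwich $J_\Ms^+(\Sadv)\cap J_\Ms^-(\Sret)\subseteq\psi(\Ns)$ (the inclusion being the causal convexity argument you flag), so $\Zeta F=\psi_\ast\psi^\ast\Zeta F=\F(\psi)(\psi^\ast\Zeta F)$ while $\Zeta F\sM F$. The only presentational difference is that you deduce bijectivity from injectivity (already noted after the definition of $\A(\psi)$ via Lemma~\ref{lem:kgfmor}) plus surjectivity, whereas the paper constructs the inverse $\mathscr{B}(\psi)$ explicitly and verifies both compositions; your route is marginally shorter since it skips the check $\mathscr{G}(\psi)\F(\psi)F\sim_\Ns F$, but both hinge on the same $\Zeta$ and support argument.
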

\begin{proof}
	Suppose that $\psi$ is a \Loc-arrow from $\Ns$ to $\Ms$ with the property that $\psi(\Ns)$ contains a Cauchy surface for $\Ms$. We will always be able to find a second Cauchy surface for $\Ms$ in $\psi(\Ns)$ which is disjoint to the first; we denote the Cauchy surface to the past by $\Sadv$ and the one to the future by $\Sret$, and define the operator $\Zeta$ as above using these Cauchy surfaces for the construction; it follows that for any $F\in\F(\Ms)$, the $n^\text{th}$ component of $\Zeta F$ is supported in $\psi(\Ns)^{\times n}$ for each $n\geq1$. We then define 
\begin{align}
	\mathscr{G}(\psi):\F(\Ms)&\to\F(\Ns)\notag\\
	F&\mapsto\psi^\ast\Zeta F.\label{eqn:Gmapdef}
\end{align}
For any $F\in\F(\Ms)$ and $f\in\Cs(\Ms)$, we then have 
\[_{H\in\H(\Ms)}
	(\F(\psi)\mathscr{G}(\psi)F)[f]=(\F(\psi)\psi^\ast\Zeta F)[f]=\psi^\ast\Zeta F[\psi^\ast f]=\psi_\ast\psi^\ast\Zeta F[ f].
\]
But since the $n^\text{th}$ component of $\Zeta F$ is supported in $\psi(\Ns)^{\times n}$, we have $\psi_\ast\psi^\ast\Zeta F=\Zeta F.$ Therefore $\F(\psi)\mathscr{G}(\psi)F=\Zeta F$. Now suppose that $F\in\F(\Ns)$ and $f\in\Cs(\Ns)$; then,
\[
	(\mathscr{G}(\psi)\F(\psi)F)[f]=\mathscr{G}(\psi)(\psi_\ast F)[f]
	=\psi^\ast\Zeta(\psi_\ast F)[f].
\]
Writing $F=\sum_{n=0}^Nt_n$, with $t_n\in\F^n(\Ns)$, we have
\[
	\psi^\ast \Zeta(\psi_\ast F)=\sum_{n=0}^N\psi^\ast\tenpow\zeta n\psi_\ast t_n.
\]
But notice that for any $t\in\F^1(\Ms)$, $f\in\Ccs(\Ms)$, we have
\[
	(\psi^\ast\zeta\psi_\ast t)[E_\Ns f]=(P_\Ns\psi^\ast(\cadv\EM\psi_\ast t))[E_\Ns f]
	=(P_\Ns((\psi^\ast\cadv)E_\Ns t))[E_\Ns f]=t[E_\Ns f]
\]
by \eqref{eqn:talphat} and lemma \ref{lem:epchi}. We have therefore shown that $\F(\psi)\mathscr{G}(\psi)F\sM F$ for all $F\in\F(\Ms)$, and $\mathscr{G}(\psi)\F(\psi)F\sim_\Ns F$ for all $F\in\F(\Ns)$.

Next, we observe that if $F,F'\in\F(\Ms)$ with $F\sM F'$, then we have $\F(\psi)\mathscr{G}(\psi)F\sM\F(\psi)\mathscr{G}(\psi)F'$; by lemma \ref{lem:kgfmor} we then have $\mathscr{G}(\psi)F\sim_\Ns\mathscr{G}(\psi)F'$. This means that the map
\begin{align*}
	\mathscr{B}(\psi):\A(\Ms)&\to\A(\Ns)\\
	[F]_\Ms&\mapsto[\mathscr{G}(\psi)F]_\Ns
\end{align*}
is well defined, and we can conclude that $\mathscr{B}(\psi)\circ\A(\psi)=\text{id}_{\A(\Ns)},$ and $\A(\psi)\circ\mathscr{B}(\psi)=\text{id}_{\A(\Ms)}$. Therefore $\A(\psi)$ is invertible, and so $\A$ obeys the timeslice axiom.
\end{proof}

\subsection{The timeslice axiom for the theory of Wick Polynomials}

\label{sect:wptimeslice}

We now proceed to the timeslice axiom for $\W$, adapting the proof given for an equivalent construction in \cite{chilfred} for the construction used here. 
Suppose that $\psi:\Ns\hookrightarrow\Ms$ is a Cauchy arrow in \Loc. We can then find two disjoint Cauchy surfaces $\Sadv,\Sret\subset\psi(\Ns)$ for $\Ms$ with $\Sret\subset J_\Ms^+(\Sadv)$. As before, we choose a smooth partition of unity $\cadv+\cret=1$ with $\cadv(x)=0,\ \cret(x)=1$ for $x\in J_\Ms^+(\Sret)$ and $\cadv(x)=1,\ \cret(x)=0$ for $x\in J_\Ms^-(\Sadv)$. It follows that if we again define $\zeta t=\PM\cadv\EM t$ for any $t\in\T^1(\Ms)$, and for any $H\in\H(\Ms)$, we let
\begin{align*}
	Z:\T_H(\Ms)&\to\T_H(\Ms)\\
	\sum_{n=0}^Nt_n&\mapsto\sum_{n=0}^N\tenpow\zeta nt_n\qquad (t_n\in\T^n(\Ms)),
\end{align*}
then by \eqref{eqn:epchidist}, $ZT\sim_\Ms T$ for all $T\in\T_H(\Ms)$, and $T$ is compactly supported in $\psi(\Ns)$. Moreover, since $Z$ is constructed from differential operators, multiplication by smooth functions and applications of $\EM^\pm$, we recall from our previous observation that $Z$ must indeed map elements of $\T_H(\Ms)$ to elements of $\T_H(\Ms)$.

Therefore, if we define
\begin{align}
	\mathscr{S}_H(\psi):\T_H(\Ms)&\to\T_{\psi^\ast H}(\Ns)\notag\\
	T&\mapsto \psi^\ast ZT,\label{eqn:tinvdef}
\end{align}
then the same argument as used in the proof of lemma \ref{lem:atimeslice} shows that $\T_H(\psi)\mathscr{S}_H(\psi)T\sim_\Ms T$ for all $T\in\T_H(\Ms)$ and $\mathscr{S}_H(\psi)\T_H(\psi)T\sim_\Ns T$ for all $T\in\T_{\psi^\ast H}(\Ns)$.

Now, if $\psi(\Ns)$ contains a Cauchy surface for $\Ms$ then for each $H\in\H(\Ns)$ there is precisely one $H'\in\H(\Ms)$ with $\psi^\ast H'=H$, as a result of the condition \eqref{eqn:hbisol}. We will denote this extension by $\psi_\bullet H$. Now suppose that $W=(W_H)_{H\in\H(\Ms)}\in\W(\Ms)$ with $W_H=[T_H]$, for some $T_H\in\T_H(\Ms)$ for each $H\in\H(\Ms)$. We then define
\begin{align}
	\mathscr{U}_H(\psi):\W_H(\Ms)&\to\W_{\psi^\ast H}(\Ns)\notag\\
	[T]_\Ms&\mapsto[\mathscr{S}_H(\psi)T]_\Ns.\label{eqn:winvdef}
\end{align}
This then gives us a map $\mathscr{U}(\psi):\W(\Ms)\to\W(\Ns)$ with the property that for any $H\in\H(\Ns)$, we have 
\[
	(\mathscr{U}(\psi)W)_H=\mathscr{U}_{\psi_\bullet H}(\psi)W_{\psi_\bullet H}.
\]
It is easy to show that $\W(\psi)\circ\mathscr{U}(\psi)=\text{id}_{\W(\Ms)}$, and $\mathscr{U}(\psi)\circ\W(\psi)=\text{id}_{\W(\Ns)}$. Therefore $\mathscr{U}(\psi)=\W(\psi)^{-1}$ and so $\W$ obeys the timeslice axiom.

\subsection{Relative Cauchy evolution for the Klein-Gordon Theory}
\label{sect:kgfrce}
In order to demonstrate (or rule out) dynamical locality for $\A$ or $\W$, we must first compute the relative Cauchy evolution of an arbitrary element; this has already been done for the scalar Klein-Gordon theory  in \cite{bfv} for a different construction, and we will derive a similar expression in our formalism. We begin with the theory $\A$; we fix $\h\in H(\Ms)$ and choose two subspacetimes $\Ns^\pm\subseteq\Ms$, such that:
\begin{itemize}
	\item each $\Ns^\pm$ is an object of \Loc, and their embeddings into $\Ms$ are arrows in \Loc,
	\item	each $\Ns^\pm$ contains two disjoint Cauchy surfaces $\Sadv_\pm,\Sret_\pm$ for $\Ms$ with the property that
	$\Sadv_\pm\subseteq J_\Ms^-(\Sret_\pm)$,
	\item each $\Ns^\pm$ is disjoint from the support of $\h$, and $\Ns^\pm\subseteq J^\pm_\Ms(\supp(\h))$.
\end{itemize}
We now choose two smooth partitions of unity $\cadv_\pm+\cret_\pm=1$ for $\Ms$, with the property that $\cadv_\pm(x)=1$, $\cret_\pm(x)=0$ for $x\in J_\Ms^-(\Sadv_\pm)$, and $\cadv_\pm(x)=0,$ $\cret_\pm(x)=1$ for $x\in J_\Ms^+(\Sret_\pm)$, and define
\begin{align*}
	\zeta^\pm:\F^1(\Ms)&\to\F^1(\Ms)\\
	t&\mapsto \PM\cadv_\pm\EM t.
\end{align*}
As before, we also let
\begin{align*}
	\Zeta^\pm:\F(\Ms)&\to\F(\Ms)\\
	\sum_{n=0}^N t_n&\mapsto\sum_{n=0}^N\tenpow{(\zeta^\pm)} n t_n\qquad(t_n\in\F^n(\Ms)).
\end{align*}
Additionally, we define
\begin{align*}
	\zeta^\pm[\h]:\F^1(\Ms[\h])&\to\F^1(\Ms[\h])\\
	t&\mapsto \PMh\cadv_\pm\EMh t,
\end{align*}
and define $\Zeta^\pm[\h]:\F(\Ms[\h])\to\F(\Ms[\h])$ in an analagous way to $Z^\pm$.

Now, if we denote by $\iota^\pm$, $\iota^\pm[\h]$ the embeddings of $\Ns^\pm$ into $\Ms$ and $\Ms[\h]$ respectively, it is clear that the \Alg-arrows $\A(\iota^\pm)$, $\A(\iota^\pm[\h])$ act as
\begin{align*}
	\A(\iota^\pm)[F]_{\Ns^\pm}&=[\F(\iota^\pm)F]_\Ms,\\
	\A(\iota^\pm[\h])[F]_{\Ns^\pm}&=[\F(\iota^\pm[\h])F]_{\Ms[\h]},
\end{align*}
and for any $F\in\F(\Ns^\pm)$, $f\in\Cs(\Ms)$ we have
\[
	(\F(\iota^\pm)F)[f]=F\left[f|_{\Ns^\pm}\right]=(\F(\iota^\pm[\h])F)[f].
\]
Moreover, from lemma \ref{lem:atimeslice} we can see that the inverse arrows $\A(\iota^\pm)^{-1}$, $\A(\iota^\pm[\h])^{-1}$ act as
\begin{align*}
	\A(\iota^\pm)^{-1}[F]_{\Ms}&=[\mathscr{G}(\iota^\pm)F]_{\Ns^\pm}\\
	\A(\iota^\pm[\h])^{-1}[F]_{\Ms[\h]}&=[\mathscr{G}(\iota^\pm[\h])F]_{\Ns^\pm},
\end{align*}
where for any $f\in\Cs(\Ns^\pm)$, $F\in\F(\Ms)$ and $F'\in\F(\Ms[\h])$, we see from \eqref{eqn:Gmapdef} that
\begin{align*}
	(\mathscr{G}(\iota^\pm)F)[f]&=(\Zeta^\pm F)[\iota^\pm_\ast f],\\
	(\mathscr{G}(\iota^\pm[\h])F')[f]&=(\Zeta^\pm[\h]F')[\iota^\pm[\h]_\ast f].
\end{align*}
It follows that for any $A=[F]_\Ms\in\A(\Ms)$, we have
\begin{align*}
	\rce_\Ms[\h]A&=\A(\iota^-)\A(\iota^-[\h])^{-1}\A(\iota^+[\h])
	\A(\iota^+)^{-1}A\\
	&=\left[\F(\iota^-)\mathscr{G}(\iota^-[\h])\F(\iota^+[\h])
	\mathscr{G}(\iota^+)F\right]_{\Ms}.
\end{align*}
Now, for any $f\in\Cs(\Ms)$ and $F\in\F(\Ms)$ we have
\[
	(\F(\iota^+[\h])\mathscr{G}(\iota^+)F)[f]=(\Zeta^+F)|_{\Ns^+}
	\left[f|_{\Ns^+}\right],
\]
but since the range of $\Zeta^+$ is contained in $\iota^+(\Ns^+)$, it holds that
\[
	\F(\iota^+[\h])\circ\mathscr{G}(\iota^+)=
	\iota^+[\h]_\ast\circ(\iota^+)^\ast\circ\Zeta^+,
\]
and similarly
\[
	\F(\iota^-)\circ\mathscr{G}(\iota^-[\h])=
	\iota^-_\ast\circ\iota^-[\h]^\ast\circ\Zeta^-[\h].
\]
Explicitly, the relative Cauchy evolution of $A=[F]_\Ms$ is therefore given by $\rce_\Ms[\h]A=[B[\h]F]_\Ms,$ where
\begin{align*}
	B[\h]:\F(\Ms)&\to\F(\Ms)\\
	\sum_{n=0}^Nt_n&\mapsto\sum_{n=0}^N\tenpow{\beta[\h]}n t_n\qquad(t_n\in\F^n(\Ms)),
\end{align*}
and
\begin{align}
	\beta[\h]:\F^1(\Ms)&\to\F^1(\Ms)\notag\\
	t&\mapsto\PMh\cadv_-\EMh\PM\cadv_+\EM t.\label{eqn:betadef}
\end{align}
This definition is independent of the choice of $\cadv_\pm$, provided that the regions $\Ns^\pm$ in which they are non-constant lie strictly to the future/past of $\supp(\h)$.

\subsection{Relative Cauchy evolution for Wick Polynomials}
\label{sect:wprce}

We now calculate the relative Cauchy evolution of an element $W\in\W(\Ms)$ generated by a perturbation $\h\in H(\Ms)$. While the calculation is largely similar to the process for calculating the r.c.e.\ of an element of $\A(\Ms)$, there are some subtleties introduced by the need to specify an $H\in\H(\Ms)$ to form the algebras $\T_H(\Ms)$. We will proceed as before, fixing some $\h\in H(\Ms)$ and defining $\Ns^\pm$, $\Sadv_\pm,\Sret_\pm$ and $\iota^\pm$ and $\iota^\pm[\h]$ as in the previous subsection. The relative Cauchy evolution of an element $W\in\W(\Ms)$ is given by
\[
	\rce_\Ms[\h]W=(\W(\iota^-)\circ\mathscr{U}(\iota^-[\h])\circ\W(\iota^+[\h])\circ\mathscr{U}(\iota^+))W.
\]
But when we calculate the component corresponding to $H\in\H(\Ms)$, we see that
\begin{align}
	(\rce_\Ms[\h]W)_H&=\left(\W(\iota^-)\mathscr{U}(\iota^-[\h])\W(\iota^+[\h])\mathscr{U}(\iota^+)W\right)_H\notag\\
	&=\W_H(\iota^-)\mathscr{U}_{\tilde H_\h}(\iota^-[\h])\W_{\tilde H_\h}(\iota^+[\h])\mathscr{U}_{\cH_\h}(\iota^+)W_{\cH_\h}\label{eqn:wprcedef}
\end{align}
where for any $H\in\H(\Ms)$, the distributions $\tilde H_\h\in\H(\Ms[\h])$ and $\cH_\h\in\H(\Ms)$ are defined by
\begin{align*}
	\tilde H_\h&=\iota^-[\h]_\bullet(\iota^-)^\ast H\\
	\cH_\h&=\iota^+_\bullet\iota^-[\h]^\ast\tilde H_\h.
\end{align*}
This definition is independent of the choice of $\Ns^\pm$, as a consequence of \eqref{eqn:hbisol}.

\begin{lemma}
	\label{lem:Hdiffcomp}
	Let $\Ms$ be a spacetime, and $\h\in H(\Ms)$ a metric perturbation on $\Ms$. Suppose that $H\in\H(\Ms),$ and let $\cH_\h,\Ns^\pm$ and $\cadv_\pm$ be defined as above. Then
	\beq
		\cH_\h=\tenpow{(\cEM(\eta_s^+)'\PMh(\cadv_+)'\check{E}_{\Ms[\h]}
		(\eta_s^-)'\PM(\cadv_-)')}2H,\label{eqn:Hrcedef}
	\eeq
	where $\cadv_\pm:\Ccs(\Ms)\to\Ccs(\Ms)$ are the multiplication operators induced by the functions $\cadv_\pm\in\Cs(\Ms)$, and $\eta^\pm_s:\Cs_s(\Ms)\to\Ccs(\Ms)$  are defined as multiplication by time-compact smooth functions $\eta^\pm$ that are supported in $\Ns^\pm$, such that $\eta^\pm\equiv1$ in the region in which $\cadv_\pm$ is non-constant.
\end{lemma}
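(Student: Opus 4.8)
The plan is to write down explicit formulas for the two $\H$-extension operations entering the definition of $\cH_\h$ and then to compose them. Recall that $\cH_\h$ is built from $H$ by pulling back along $\iota^-$ to $\Ns^-$, extending to the unique element $\tilde H_\h=\iota^-[\h]_\bullet(\iota^-)^\ast H$ of $\H(\Ms[\h])$ with that restriction, pulling back again to a sub-spacetime of $\Ms[\h]$ containing a Cauchy surface (we use $\Ns^+$), and finally extending to the unique element of $\H(\Ms)$ with that restriction. The key structural fact is that every $H'\in\H(\Ms)$ is a bi-solution: by \eqref{eqn:hbisol} and symmetry of $H'$ one has $H'(\PM f,f')=H'(f,\PM f')=0$, so each of the two tensor slots of $H'$ is a weak solution of $\PM$, and likewise with $\PMh$ on $\Ms[\h]$. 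Hence Lemma~\ref{lem:eetapchi} applies slotwise; more precisely, its proof establishes that $u=\cEM(\eta_s)'\PM(\cadv)'u$ for every weak solution $u$ of $\PM$, with $(\cadv)'$ the transpose of multiplication by a partition-of-unity function of the type used there and $(\eta_s)'$ that of multiplication by a suitable time-compact function, and the analogous identity holds on $\Ms[\h]$.

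First I would determine $\tilde H_\h$ explicitly. Put $\tilde H':=\tenpow{(\check{E}_{\Ms[\h]}(\eta_s^-)'\PM(\cadv_-)')}{2}H$. Since $\cadv_-$ is non-constant only within $\Ns^-$ and $\eta^-$ is supported in $\Ns^-\subseteq\M\setminus J_\Ms^+(\supp(\h))$, a region on which $\g=\g+\h$ and hence $\PM=\PMh$, the source distribution obtained by applying $(\eta_s^-)'\PM(\cadv_-)'$ to each slot of $H$ is supported in $\Ns^-$, is time-compact (hence lies in the relevant dual space), and does not depend on whether $\PM$ or $\PMh$ is read; propagating it with $\check{E}_{\Ms[\h]}$ produces a bi-solution of $\PMh$ on $\Ms[\h]$. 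On $\Ns^-$ the fundamental solutions $\cEM$ and $\check{E}_{\Ms[\h]}$ coincide, so the slotwise identity above gives $(\iota^-)^\ast\tilde H'=\tenpow{(\cEM(\eta_s^-)'\PM(\cadv_-)')}{2}\big((\iota^-)^\ast H\big)=(\iota^-)^\ast H$. Finally one checks that $WF(\tilde H')$ is of the form required for membership in $\H(\Ms[\h])$: multiplication operators and $\PM$ do not enlarge wavefront sets, $\check{E}_{\Ms[\h]}$ propagates singularities only along null geodesics so the decomposition of the wavefront set into forward and backward cones is preserved (exactly as in the discussion of the spaces $\T^n(\Ms)$), and pullback along the \Loc-arrow $\iota^-$ behaves as recorded earlier. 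Hence $\tilde H'\in\H(\Ms[\h])$ with $(\iota^-)^\ast\tilde H'=(\iota^-)^\ast H$, and uniqueness of the $\H$-extension along the Cauchy arrow $\iota^-[\h]$ forces $\tilde H'=\tilde H_\h$.

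Next I would run the same argument one level up, replacing $(\Ms[\h],\Ns^-,\cadv_-,\check{E}_{\Ms[\h]})$ by $(\Ms,\Ns^+,\cadv_+,\cEM)$ and starting from the $\PMh$-bi-solution $\tilde H_\h$. Because $\cadv_+$ is non-constant only within $\Ns^+$ and $\eta^+$ is supported in $\Ns^+\subseteq\M\setminus J_\Ms^-(\supp(\h))$, where again $\PM=\PMh$, the identical reasoning yields $\cH_\h=\tenpow{(\cEM(\eta_s^+)'\PMh(\cadv_+)')}{2}\tilde H_\h$; the operator $\PMh$ (rather than $\PM$) now appears because the source is built from the $\PMh$-bi-solution $\tilde H_\h$. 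Substituting the expression found for $\tilde H_\h$ and using $\tenpow A2\tenpow B2=\tenpow{(AB)}2$ gives precisely \eqref{eqn:Hrcedef}. One should also note that the result is independent of the choices of $\Ns^\pm$, $\cadv_\pm$ and $\eta^\pm$; as elsewhere this follows from \eqref{eqn:hbisol}, since any two admissible choices produce the same weak solutions, hence the same elements of $\H$.

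The step I expect to be the main obstacle is the wavefront-set bookkeeping in the two extension steps. Since $H$ is not compactly supported, the estimates for $(\EM^\pm)_k$ quoted earlier --- which were stated for elements of $\T^n(\Ms)\subset\mathcal{E}'(\Ms^{\times n})$ --- do not apply verbatim; one circumvents this by working with the localized, time-compact source $\tenpow{((\eta_s^\pm)'\PM(\cadv_\pm)')}{2}$ applied to the relevant bi-solution, which is supported in $\Ns^\pm$, and by invoking only the general principle that $E^\pm$ propagates singularities along null bicharacteristics, so that the forward/backward cone structure of the wavefront set is preserved under both $\check{E}_{\Ms[\h]}$ and $\cEM$. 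A secondary, essentially notational, point is to keep straight which differential operator ($\PM$ or $\PMh$) and which fundamental solution ($\cEM$ or $\check{E}_{\Ms[\h]}$) appears where; this is governed by the fact that the cutoffs $\cadv_\pm$ and $\eta^\pm$ live in the regions $\Ns^\pm$ on which $\g$ and $\g+\h$ agree, so the source terms are unambiguous while the choice of propagator merely records the spacetime through which the Cauchy data is transported. Throughout one may adopt the relaxed conventions introduced after Lemma~\ref{lem:eetapchi}, suppressing the transpose primes and the functions $\eta^\pm$ (which equal $1$ on the supports of the source terms).
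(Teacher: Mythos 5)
Your argument is essentially correct and is built from the same ingredients as the paper's proof: the slotwise identity $u=\cEM(\eta_s)'\PM(\cadv)'u$ for weak bi-solutions coming from Lemma~\ref{lem:eetapchi}, the observation that the source $\tenpow{((\eta_s^\pm)'\PM(\cadv_\pm)')}{2}H'$ is supported in $(\Ns^\pm)^{\times 2}$ where $\g$ and $\g+\h$ coincide, and the consequent agreement of $\PM$ with $\PMh$ and of $\cEM$ with $\check{E}_{\Ms[\h]}$ on $\Ns^\pm$. The logical organization, however, differs in a way that matters. You define the right-hand side of the first extension step as a candidate $\tilde H'$, verify that its restriction to $\Ns^-$ agrees with that of $H$, and then need to check that $\tilde H'\in\H(\Ms[\h])$ in order to invoke uniqueness of the extension $\iota^-[\h]_\bullet$. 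You correctly flag the wavefront-set membership check as the obstacle: the $\T^n$ estimates quoted in the paper are for compactly supported distributions and do not apply to $H$, and the appeal to a general propagation-of-singularities principle is left at the level of a sketch rather than a cited or proved statement.

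The paper's proof sidesteps this entirely by running the argument in the opposite direction. It starts from $\tilde H_\h$, which by definition is already an element of $\H(\Ms[\h])$, applies the Lemma~\ref{lem:eetapchi} identity to it on $\Ms[\h]$ (with $\check{E}_{\Ms[\h]}$, $\PMh$), and then uses the facts that $\tilde H_\h|_{\Ns^-}=H|_{\Ns^-}$ and that $\PM=\PMh$ on $\Ns^-$ to rewrite the source in terms of $H$ and $\PM$. No wavefront-set verification is needed, because one never builds a candidate whose membership in $\H(\Ms[\h])$ has to be established. The second step is handled analogously, replacing $(\Ms[\h],\Ns^-,\check{E}_{\Ms[\h]})$ by $(\Ms,\Ns^+,\cEM)$ and applying the identity to $\cH_\h$. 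Your computation is therefore sound once the membership of $\tilde H'$ in $\H(\Ms[\h])$ is granted, but you should be aware that this is precisely the step the paper was designed to avoid, and that closing it rigorously would require a separate propagation-of-singularities argument for $\check{E}_{\Ms[\h]}$ acting on the time-compact source, rather than the results quoted for $\T^n(\Ms)\subset\mathcal{E}'(\Ms^{\times n})$.
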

\begin{proof}
Since $H$ is a bisolution, we see from the proof of lemma \ref{lem:eetapchi} that
\[
	\tenpow{(\cEM(\eta_s^\pm)'\PM(\cadv_\pm)')}2H=H.
\]
Since $\eta^\pm$ is supported in $\Ns^\pm$, it follows that $\tenpow{((\eta^\pm_s)'\PM(\cadv_\pm)')}2H$ is supported in $(\Ns^\pm)^{\times2}$, and therefore
\[
	\tilde H_\h|_{\Ns^-}=H|_{\Ns^-}=\tenpow{{\check{E}_{\Ns^-}}}2\left.
	\left(\tenpow{((\eta_s^-)'\PM(\cadv_-)')}2H\right)\right|_{\Ns^-}.
\]
Since the action of our multiplication operators does not depend on the metric of the underlying manifold, we may also consider them as maps on the corresponding function spaces on $\Ms[\h]$; since $\tilde H_\h$ is a bisolution on $\Ms[\h]$ and $\check{E}_{\Ms[\h]}|_{\Ns^-}=\check{E}_{\Ns^-}$, it follows that 
\[
	\tilde H_\h=\tenpow{(\check{E}_{\Ms[\h]}(\eta_s^-)'\PM(\cadv_-)')}2H.
\]
A similar argument yields $\cH_\h=\tenpow{(\cEM(\eta_s^+)'\PMh(\cadv_+)')}2\tilde H_\h,$ and so \eqref{eqn:Hrcedef} is satisfied.\footnote{Note that \eqref{eqn:Hrcedef} strongly resembles the action of the map $\beta[\h]$ defined in \eqref{eqn:betadef}, albeit with $\Ns^+$ and $\Ns^-$ interchanged; indeed, if we consider the subcategory of \Loc\ containing only Cauchy arrows, we can regard $\H$ as a functor from $\Loc$ to a suitable category of distribution spaces, with $\H(\psi)H=\psi_\bullet H$. This functor can be seen to be covariant; the resemblance remarked above can be explained by noting that we may define the relative Cauchy evolution of the functor $\H$ in the same way as for a locally covariant theory; this then satisfies $\rce_\Ms^{(\H)}[\h]\cH=H$.
}
\end{proof}

\begin{lemma}
	\label{lem:Hdiffsupp}
	Let $\Ms$ be a spacetime, with a metric perturbation $\h\in H(\Ms)$. Suppose also that $H\in\H(\Ms)$, and let $\cH_\h$ be defined as above. Then $\supp(H-\cH_\h)\subseteq(J_\Ms(\supp(\h))^{\times2}).$
\end{lemma}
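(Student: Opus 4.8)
The plan is to start from the closed formula for $\cH_\h$ established in Lemma~\ref{lem:Hdiffcomp} and to compare it with the same formula evaluated at $\h=0$: since $\rce^{(\H)}_\Ms[0]$ is the identity we have $\cH_0=H$, so Lemma~\ref{lem:Hdiffcomp} simultaneously yields $\cH_\h=\tenpow\Theta2 H$ and $H=\tenpow{\Theta_0}2 H$, where $\Theta=\cEM(\eta^+_s)'\PMh(\cadv_+)'\check{E}_{\Ms[\h]}(\eta^-_s)'\PM(\cadv_-)'$ and $\Theta_0$ is the same expression with every occurrence of $\Ms[\h]$ replaced by $\Ms$. Because these operators act factorwise on the tensor powers and $H$ is a $\PM$-bisolution, everything reduces to understanding the single operator $\Theta-\Theta_0$ acting on a smooth solution of $\PM\phi=0$ — the ``slices'' of $H$. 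I will also use that, by Lemma~\ref{lem:eetapchi} (cf.\ \eqref{eqn:epchidist}), $\Theta_0$ restricts to the identity on such solutions.

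The key step is the claim: \emph{if $\phi$ is a smooth solution of $\PM\phi=0$, then $(\Theta-\Theta_0)\phi$ is supported in $J_\Ms(\supp\h)$.} I would prove this by tracking supports through the layered operators. The only factors distinguishing $\Theta$ from $\Theta_0$ are the inner propagator ($\check{E}_{\Ms[\h]}$ versus $\cEM$) and the middle differential operator ($\PMh$ versus $\PM$, whose difference has coefficients supported in $\supp\h$). Both inner propagators are applied to $\eta^-_s[\PM,\cadv_-]\phi$, which is supported in $\Ns^-$ — a region disjoint from $\supp\h$ and contained in $J^-_\Ms(\supp\h)\setminus J^+_\Ms(\supp\h)$. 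On such a source the \emph{advanced} parts of $\check{E}_{\Ms[\h]}$ and $\cEM$ coincide, by uniqueness of the advanced fundamental solution on $J^-_\Ms(\Ns^-)$, which lies in the unperturbed region; the \emph{retarded} parts differ by $\cEM^+$ of a source supported in $\supp\h$ — produced by $(\PM-\PMh)$ acting on the $\Ms[\h]$-retarded solution — hence by something supported in $J^+_\Ms(\supp\h)$. Propagating this single discrepancy through the remaining operations $(\cadv_+)'$, the middle operator, $(\eta^+_s)'$ and the outer $\cEM$, and using that $\EM\PM$ annihilates time-compact functions (from \eqref{eqn:funsol1}) together with Lemma~\ref{lem:epchi} to collapse the ``solution part'', one finds that $(\Theta-\Theta_0)\phi$ equals $\cEM$ applied to a distribution supported in $\supp\h$, which gives the asserted support.

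Granting the single-slot claim, I would finish with the telescoping $H-\cH_\h=\tenpow{\Theta_0}2 H-\tenpow\Theta2 H=\Theta_{0,(1)}(\Theta_0-\Theta)_{(2)}H+(\Theta_0-\Theta)_{(1)}\Theta_{(2)}H$, where the subscript $(i)$ indicates the tensor factor on which the operator acts. Since $\Theta_0$ acts as the identity on solutions and both $(\Theta_0-\Theta)_{(2)}H$ and $\Theta_{(2)}H$ are again $\PM$-bisolutions (the slot-$2$ operators commute with $\PM$ acting in the first slot), the single-slot claim applies in each variable; together with the symmetry of $H$ and $\cH_\h$ this localises $H-\cH_\h$ in $J_\Ms(\supp\h)^{\times2}$. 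The main obstacle is entirely in the key step: carrying out the support bookkeeping for $\Theta-\Theta_0$ — in particular verifying that the inner retarded discrepancy really does collapse, after passing through the outer layers, to $\cEM$ of something supported in $\supp\h$ — and then combining the two slot-wise bounds into the stated Cartesian-square localisation rather than merely the union of the two ``cross'' pieces.
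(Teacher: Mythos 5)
Your approach is genuinely different from the paper's. You rely on the closed formula of Lemma~\ref{lem:Hdiffcomp} for $\cH_\h$, deduce $H=\tenpow{\Theta_0}2 H$ from Lemma~\ref{lem:eetapchi}, establish a single-slot support bound for $\Theta-\Theta_0$, and finish by telescoping. The paper never touches the explicit formula: it works directly from $\tilde H_\h=\iota^-[\h]_\bullet(\iota^-)^\ast H$ and $\cH_\h=\iota^+_\bullet\iota^+[\h]^\ast\tilde H_\h$, together with the freedom in the choice of $\Ns^\pm$ and the uniqueness of $\psi_\bullet$. The paper's route is much shorter and avoids all the propagator bookkeeping; yours has the virtue of identifying exactly where the deviation comes from (the single compactly supported source $(\PM-\PMh)\check E^+_{\Ms[\h]}v$ sitting in $\supp\h$), and your single-slot claim that $(\Theta-\Theta_0)\phi$ is supported in $J_\Ms(\supp\h)$ for $\PM$-solutions $\phi$ is correct and can be pushed through cleanly via $\cEM(\eta^+_s)'\PM(\cadv_+)'w=[\PM,\cadv_+]w$, $\EM\PM\cadv_+ w=0$, and $\cadv_+\equiv1$ on $\supp\h$.

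The genuine gap is the one you flag at the end, and it is not fixed by symmetry. The telescoping gives
\[
H-\cH_\h=\Delta_{(1)}H+\Delta_{(2)}H-\Delta_{(1)}\Delta_{(2)}H,\qquad \Delta:=\Theta_0-\Theta,
\]
(using that $\Theta_0$ is the identity and $\Theta=\mathrm{id}-\Delta$ on slot-wise $\PM$-solutions), whose three terms are supported respectively in $J_\Ms(\supp\h)\times\Ms$, $\Ms\times J_\Ms(\supp\h)$ and $J_\Ms(\supp\h)^{\times2}$. The union $(J_\Ms(\supp\h)\times\Ms)\cup(\Ms\times J_\Ms(\supp\h))$ is already a symmetric set, so the symmetry of $H$ and $\cH_\h$ buys you nothing further: a symmetric distribution supported in that union need not be supported in the Cartesian square. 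To land on $J_\Ms(\supp\h)^{\times2}$ you would need $\Delta_{(1)}H(x,y)$ to vanish for $y\notin J_\Ms(\supp\h)$, i.e.\ for $(\Theta-\Theta_0)H(\cdot,y)$ to vanish identically for such $y$; nothing in the single-slot argument gives that. Note, for what it is worth, that the paper's own argument is exposed to exactly the same objection — the intermediate claim ``$H(x,y)=\tilde H_\h(x,y)$ for all $x\notin J^+_\Ms(\supp\h)$'' is only immediate for $y$ also outside $J^+_\Ms(\supp\h)$, since $H$ and $\tilde H_\h$ solve different equations ($\PM$ vs.\ $\PMh$) in the second slot — and what both arguments actually deliver is the weaker bound
\[
\supp(H-\cH_\h)\subseteq\bigl(J_\Ms(\supp\h)\times\Ms\bigr)\cup\bigl(\Ms\times J_\Ms(\supp\h)\bigr).
\]
This weaker bound is still sufficient for the only place the lemma is invoked later (the vanishing of $\bigl\langle\tenpow{(H-\cH_\h)}k,s^{(2k)}\bigr\rangle$ for $s$ supported in $(J_\Ms(K)\cap S)^{\times(n+2k)}$ with $J_\Ms(\supp\h)\cap J_\Ms(K)\cap S=\emptyset$), but you should either prove the Cartesian-square statement honestly or state and use only the union bound; as written, the last step of your argument does not go through.
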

\begin{proof}
	Let $x\in\Ms$, with $x\notin J_\Ms^+(\supp(\h))$. Since $\supp(\h)$ is compact, we can find a choice for $\Ns^-$ with $x\in\Ns^-$. It follows that $H(x,y)=\tilde H_\h(x,y)$ for all $x\notin J_\Ms^+(\supp(\h))$. Similarly, if $x\notin J_\Ms^-(\supp(\h))$ then we can find a choice for $\Ns^+$ with $x\in\Ns^+$. Therefore $\cH_\h(x,y)=\tilde H_\h(x,y)$ for all $x\notin J_\Ms^-(\supp(\h))$. Consequently, if $x\in\supp(\h)^\perp$ then $H(x,y)=\cH_\h(x,y)$. The required result follows by symmetry of $H$.
\end{proof}

 The coherency condition on elements of $\W(\Ms)$ tells us that \eqref{eqn:wprcedef} can be expressed as
\[
	(\rce_\Ms[\h]W)_H=\W_H(\iota^-)\mathscr{U}_{\tilde H_\h}(\iota^-[\h])\W_{\tilde H_\h}(\iota^+[\h])
	\mathscr{U}_{\cH_\h}(\iota^+)\tl_{H,\cH_\h}W_H.
\]
Explicitly, we can then see from \eqref{eqn:tinvdef},\eqref{eqn:winvdef} that the relative Cauchy evolution of an element $W=(W_H)_{H\in\H(\Ms)}\in\W(\Ms)$, where each $W_H$ can be represented by $T_H\in\T_H(\Ms)$, is given by
\[
	(\rce_\Ms[\h]W)_H=[B_H[\h]\lambda_{H,\cH_\h}T_H]_\Ms,
\]
where
\begin{align*}
	B_H[\h]:\T_{\cH}(\Ms)&\to\T_H(\Ms)\\
	\sum_{n=0}^Nt_n&\mapsto\sum_{n=0}^N\tenpow{\beta[\h]}nt_n
	\qquad(t_n\in\T^n(\Ms)),
\end{align*}
with
\begin{align*}
	\beta[\h]:\T^1(\Ms)&\to\T^1(\Ms)\\
	t&\mapsto \PMh\cadv_-\EMh\PM\cadv_+\EM t
\end{align*}
as before. 

Before we proceed to the dynamical locality of $\A$ and $\W$ we will need the following results. The lemma is proved in appendix \ref{appx} (cf.\ \cite[Eqn.\ 8]{dynloc2}).
\begin{lemma}
\label{lem:rcederiv}
	Let $\Ms$ be a spacetime and let $t\in\T_H^1(\Ms)$ for some $H\in\H(\Ms)$. For any $\h\in H(\Ms)$ and $f\in\Ccs(\Ms)$, we have
	\[
		\left.\dds (\beta[s\h]t)[\EM f]\right|_{s=0}=\int_\Ms dvol_\Ms\, h_{ab}T^{ab}[\EM t,\EM f],
	\]
	where
	\begin{align*}
		T^{ab}[u,\phi]=(\nabla^{(a}u)&(\nabla^{b)}\phi)-
		\frac12g^{ab}(\nabla^cu)(\nabla_c\phi)\\
		&+\frac12m^2g^{ab}u\phi+\xi(g^{ab}\Box_\g-
		\nabla^a\nabla^b-G^{ab})(u\phi)
	\end{align*}
	for $u\in\EM\T^1(\Ms)$, $\phi\in\EM\Ccs(\Ms)$.
\end{lemma}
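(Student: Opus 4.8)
The plan is to compute the $s$-derivative of $\beta[s\h]t$ at $s=0$ directly from the defining formula $\beta[s\h]t = P_{\Ms[s\h]}\cadv_-E_{\Ms[s\h]}\PM\cadv_+\EM t$, paired against $\EM f$. First I would observe that the factor $\PM\cadv_+\EM t$ does not depend on $s$ (the partition functions $\cadv_\pm$ are fixed, and $\Ms$ itself is unperturbed on the ``$+$'' side), so the only $s$-dependence sits in $P_{\Ms[s\h]}$ and $E_{\Ms[s\h]}$, both through the perturbed metric $\g+s\h$. Writing $u = \EM t$ and recalling from \eqref{eqn:epchidist} that $\EM\PM\cadv_+\EM t = \EM t = u$, the quantity to differentiate is essentially $\langle P_{\Ms[s\h]}\cadv_- E_{\Ms[s\h]}(\text{fixed source}), \EM f\rangle$, and the standard trick is to move one fundamental solution onto the test side: pairing against $\EM f$ converts $P_{\Ms[s\h]}$-then-$E_{\Ms[s\h]}$ into something governed by the unperturbed propagator up to first order, so that $\dds(\beta[s\h]t)[\EM f]|_{s=0}$ reduces to $-\langle (\dds P_{\Ms[s\h]})|_{s=0}\, u, \EM f\rangle$ plus boundary-type terms that vanish because $\cadv_-$ localizes everything to a region where, to first order, nothing changes. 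This is the standard BFV-type computation of the relative Cauchy evolution as a first-order perturbation of the field equation.

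The core computation is then the first variation of the Klein-Gordon operator $\PM = \Box_\g + \xi R_\g + m^2$ under $\g \mapsto \g + s\h$. I would expand each piece: the variation of $\Box_\g$ acting on a scalar involves the variation of $\sqrt{-\g}\,\g^{ab}$, giving terms with $h_{ab}\nabla^a\nabla^b$, $(\nabla_a h^{ab})\nabla_b$, and $(\nabla^a h)\nabla_a$ where $h = g^{ab}h_{ab}$; the variation of $\xi R_\g$ contributes the linearized Ricci/scalar-curvature terms, which after the usual manipulations (Palais/Lichnerowicz formula for $\delta R$) combine into the $\xi(g^{ab}\Box_\g - \nabla^a\nabla^b - G^{ab})$ structure acting on the product $u\phi$; and $m^2$ is constant so only its coefficient $\sqrt{-\g}$ varies. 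Pairing $(\delta\PM)u$ against $\phi = \EM f$ and integrating by parts to symmetrize, the first-order terms assemble into $\int_\Ms dvol_\Ms\, h_{ab}T^{ab}[u,\phi]$ with $T^{ab}$ exactly the (symmetrized) stress-energy tensor of the nonminimally coupled scalar field displayed in the statement. The factor $\frac12$ on the $g^{ab}(\nabla^c u)(\nabla_c\phi)$ and $m^2 g^{ab}u\phi$ terms, and the precise coefficient on the $\xi$-improvement term, come out of bookkeeping the $\delta\sqrt{-\g} = \frac12\sqrt{-\g}\,g^{ab}\delta g_{ab}$ factor against the $\delta g^{ab}$ factors.

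Two technical points need care and I expect the integration-by-parts bookkeeping for the $\xi$-term to be the main obstacle. First, one must justify that the ``boundary'' contributions — coming from differentiating $E_{\Ms[s\h]}$ rather than $P_{\Ms[s\h]}$, and from the $s$-dependence of the time-orientation-adjusted causal structure — genuinely drop out; this follows because $\supp(\h)$ is disjoint from $\Ns^\pm$ and $\cadv_-$ is constant there, so differentiating the propagator produces a term supported where $\cadv_-$ is locally constant and $P_{\Ms[s\h]}\cadv_- = 0$ up to terms already accounted for, exactly as in the $\A$-case; since $t\in\T^1_H(\Ms)$ the wavefront-set conditions of Section~\ref{sect:wpconst} guarantee all the pairings and pointwise products ($u\phi$, $h_{ab}T^{ab}[u,\phi]$) are well-defined distributions, so the formula makes sense for the enlarged theory and not merely on test functions. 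Second, the linearization $\delta R_\g$ is where the algebra is heaviest: one uses $\delta\Gamma^c_{ab} = \frac12 g^{cd}(\nabla_a h_{bd} + \nabla_b h_{ad} - \nabla_d h_{ab})$ and $\delta R_{ab} = \nabla_c\delta\Gamma^c_{ab} - \nabla_a\delta\Gamma^c_{cb}$, then contracts and repeatedly integrates by parts against the smooth compactly-supported-overlap function $u\phi$ to convert all derivatives-on-$h$ into derivatives-on-$(u\phi)$, collecting the result into the stated $\xi(g^{ab}\Box_\g - \nabla^a\nabla^b - G^{ab})(u\phi)$ form; this is the step most prone to sign and coefficient errors, and it is the reason the detailed verification is deferred to Appendix~\ref{appx} rather than carried out inline.
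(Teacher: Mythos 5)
Your proposal matches the strategy the paper carries out in Appendix~\ref{appx} (Lemma~\ref{lem:rcederivappx}): reduce $\dds(\beta[s\h]t)[\EM f]|_{s=0}$ to $-\langle(\dds\PMsh)|_{s=0}\EM t,\EM f\rangle$ by exploiting the support properties of $\cadv_\pm$ relative to $\supp(\h)$ and the identity $\EMh^-\PMh\EM^- u = \EM^- u$, then linearize $\Box_{\g+s\h}$ and $R_{\g+s\h}$ and integrate by parts to assemble $T^{ab}$. You correctly identify both the key reduction and the heavy part of the computation (the $\delta R_\g$ bookkeeping), so the proposal is essentially the paper's own argument in outline.
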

Note that the above expression is closely linked to the classical stress-energy tensor for the Klein-Gordon theory, which we may recover via $T^{ab}[\phi]=T^{ab}[\phi,\bar\phi]$ for a smooth classical solution $\phi$.

This result leads directly to the following:
\begin{corollary}
\label{cor:rcederivn}
Let $t_n\in\T^n_H(\Ms)$ for some $H\in\H(\Ms)$ and $f\in\Ccs(\Ms)$. Then
\[
	\left.\dds\tenpow{(\beta[s\h])}nt_n[\EM f]\right|_{s=0}=n\int_\Ms dvol_\Ms\,h_{ab}T^{ab}\left[\EM\tau^n_f,\EM f\right],
\]
where
\beq
	\label{eqn:taudef}\tau^n_f(x)=\int_{\Ms^{\times(n-1)}}d^{n-1}y\,
	t_n(x,y_1,\ldots,y_{n-1})\EM f(y_1)\cdots\EM f(y_{n-1})
\eeq
for $n\geq2$, and $\tau^1_f(x)=t_1(x).$
\end{corollary}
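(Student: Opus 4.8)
\emph{Proof proposal.}
The plan is to reduce the statement to Lemma \ref{lem:rcederiv} by combining multilinearity of the map $T\mapsto T[\EM f]$ with the Leibniz rule, having first isolated the fact that $\beta_0$ acts trivially under this pairing. Throughout write $\beta_0:=\beta[s\h]\big|_{s=0}$ and $\dot\beta:=\dds\beta[s\h]\big|_{s=0}$.

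The preliminary observation is that, for $u\in\mathcal{E}'(\Ms)$, we have $\beta_0 u=\PM\cadv_-\EM\PM\cadv_+\EM u$; since $\PM$ annihilates $\EM(\cdot)$, this reduces to a composition of commutator-type terms supported in the intersection of $J_\Ms(\supp u)$ with the regions where $\cadv_\pm$ vary --- compact sets by global hyperbolicity --- so $\beta_0$ maps $\mathcal{E}'(\Ms)$ into itself. Applying \eqref{eqn:epchidist} first with $\cadv_+$ and then with $\cadv_-$ gives $\EM\beta_0 u=\EM u$, and so, using the antisymmetry of $\EM$, $(\beta_0 u)[\EM f]=u[\EM f]$ for every $f\in\Ccs(\Ms)$. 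Since moreover $\beta_0$ is assembled from $\PM$, multiplication operators and the maps $\EM^\pm$, applying it in one tensor slot does not enlarge the support in the other slots; hence for a distribution $w$ on $\Ms^{\times n}$ whose slices in the $k$-th variable are compactly supported we obtain $\langle(\beta_0)_k w,\phi_1\otimes\cdots\otimes\phi_n\rangle=\langle w,\phi_1\otimes\cdots\otimes\phi_n\rangle$ whenever $\phi_k=\EM f$.

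Next I would write $\tenpow{(\beta[s\h])}n t_n=(\beta[s\h])_1\cdots(\beta[s\h])_n t_n$ (the slot operators commute) and differentiate
\[
	\tenpow{(\beta[s\h])}n t_n[\EM f]=\big\langle(\beta[s\h])_1\cdots(\beta[s\h])_n t_n,\tenpow{(\EM f)}n\big\rangle
\]
at $s=0$. Since $s\mapsto\beta[s\h]$ is differentiable and, near $s=0$, its images have uniformly compact support (so that differentiation passes through the pairing and the slotwise manipulations above are legitimate), the Leibniz rule produces $n$ terms, the $j$-th carrying $\dot\beta$ in slot $j$ and $\beta_0$ in the other $n-1$ slots; removing those $n-1$ copies of $\beta_0$ by the slotwise identity leaves $\langle(\dot\beta)_j t_n,\tenpow{(\EM f)}n\rangle$. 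By total symmetry of $t_n$, contracting all slots except the $j$-th against $\EM f$ produces $\tau^n_f$ in the remaining slot (this is exactly \eqref{eqn:taudef}; one verifies $\tau^n_f\in\T^1(\Ms)$ by observing that contracting an element of $\T^n(\Ms)$ with smooth functions in $n-1$ of its arguments yields a compactly supported distribution whose wavefront set avoids $\overline{V^+_\Ms\cup V^-_\Ms}$). Hence each of the $n$ terms equals $\dds(\beta[s\h]\tau^n_f)[\EM f]\big|_{s=0}$, so the sum equals $n\,\dds(\beta[s\h]\tau^n_f)[\EM f]\big|_{s=0}$, and Lemma \ref{lem:rcederiv} applied to $\tau^n_f$ delivers the asserted $n\int_\Ms dvol_\Ms\,h_{ab}T^{ab}[\EM\tau^n_f,\EM f]$. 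The case $n=1$ is Lemma \ref{lem:rcederiv} itself, since $\tau^1_f=t_1$.

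The delicate point I anticipate is the triviality of $\beta_0$ under the pairing: one cannot transpose the outer $\PM$ directly onto $\EM f$, because $\supp(\EM f)$ and the support of the intermediate function need not meet compactly and so there is no integration-by-parts identity to exploit; it must instead be deduced from $\EM\beta_0=\EM$ on $\mathcal{E}'(\Ms)$ together with the antisymmetry of $\EM$, and one must check that all the relevant variable slices remain compactly supported at each stage so the argument can be applied slot by slot.
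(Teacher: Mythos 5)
Your proposal is correct and follows essentially the same route the paper intends: the paper presents Corollary~\ref{cor:rcederivn} as following ``directly'' from Lemma~\ref{lem:rcederiv} by the Leibniz rule, and your argument carries that out, rightly resolving the key subtlety that $\beta_0=\beta[0]$ acts trivially in the pairing not by a naive transposition of $\PM$ (which fails, as $\cadv_-\EM\PM\cadv_+\EM u$ is not compactly supported) but via $\EM\beta_0=\EM$ on $\mathcal{E}'(\Ms)$ and antisymmetry of $\EM$, exactly as the paper does at~\eqref{eqn:talphat}.
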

Note that the previous two results also apply to the elements of $\F^1(\Ms)$ and $\F^n(\Ms)$ respectively, since we can consider any element of $\F^n(\Ms)$ as an element of $\T^n_H(\Ms)$ for any $H\in\H(\Ms)$.

\section{Dynamical Locality}
\label{sect:dynlockgfwp}

\subsection{Dynamical locality of the \texorpdfstring{$\xi\neq0$}{xi /= 0} Klein-Gordon theory}
\label{sect:kgfdl}

It has already been shown in \cite{dynloc2} that the Klein-Gordon theory is dynamically local in the case when $\xi=0$ and $m\neq0$, and that it is not dynamically local when $\xi=0$ and $m=0$. We wish to show that the Klein-Gordon theory $\A$ obeys the axiom of dynamical locality in the nonminimally coupled case, when $\xi\neq0$, for both $m=0$ and $m>0$. Therefore, we pick some spacetime $\Ms$ and $O\in\mathscr{O}(\Ms)$. The algebra $\A^{\text{kin}}(\Ms;O)$ is defined to be the algebra $\A(\Ms|_O)$; we recall from lemma \ref{lem:aalgimage} that for any \Loc-arrow $\psi:\Ns\hookrightarrow\Ms$, the algebra $\A(\psi)(\A(\Ns))$ comprises elements $A=[F]_\Ms$ such that $F[\EM f]=F[0]$ for every $f\in\Ccs(\Ms)$ such that $\supp(f)\cap J_\Ms(\Ns)=\emptyset$.
It follows that $F$ represents an element of $\alpha^\text{kin}_{\Ms;O}(\Ak(\Ms;O))$ if and only if $F[\EM f]=F[0]$ for all $f\in\Ccs(\Ms)$ with support lying in $O'=(\text{cl }O)^\perp$.

We can see from \eqref{eqn:dynalgdef} and \eqref{eqn:dynloccond} that if $\Ab(\Ms;K)\subseteq\alpha^\text{kin}_{\Ms;O}(\Ak(\Ms;O))$ for each spacetime $\Ms$, $O\in\mathscr{O}(\Ms)$ and $K\in\K(\Ms;O)$, then $\A$ obeys dynamical locality. Therefore, suppose that $A\in\Ab(\Ms;K)$; from the definition it follows that $\rce_\Ms[\h]A=A$ for all $\h\in H(\Ms;K^\perp)$. Now, suppose that $A$ is represented by a functional $F\in\F(\Ms)$. This means that $B[\h]F\sim_\Ms F$ for all $\h\in H(\Ms;K^\perp)$, and consequently $B[s\h]F-F\in \J(\Ms)$ for all $s\in\mathbb{R}$ sufficiently small that $s\h\in H(\Ms;K^\perp).$ Writing $F=\sum_{n=0}^Nt_n$, with each $t_n\in\F^n(\Ms)$, we can refer to lemma \ref{lem:kgfuncpolar} to see that for $n=1,\ldots,N$, we have
\beq
	\label{eqn:kgdynloc1}
	\left(\tenpow{(\beta[s\h])}nt_n\right)[\EM f]=t_n[\EM f]
\eeq
for all $f\in\Ccs(\Ms)$ and for all $\h\in H(\Ms;K^\perp)$.

Now, for each $n\geq1$ we differentiate \eqref{eqn:kgdynloc1} with respect to $s$ and set $s=0$; by corollary \ref{cor:rcederivn}, this yields
\[
	\int_\Ms dvol_\Ms\,h_{ab}T^{ab}\left[\EM\tau^n_f,\EM f\right]=0
\]
for each $\h\in H(\Ms;K^\perp)$ and $f\in\Ccs(\Ms)$, where $\tau^n_f$ is defined as in \eqref{eqn:taudef}. It follows that for all $n\geq1$, we have
\[
	T^{ab}[\EM\tau^n_f,\EM f](x)=0
\]
for all $x\in K^\perp$.

Now consider an arbitrary point $x\in K^\perp$, and a null geodesic $u:I\to K^\perp$, where $I\subset\mathbb{R}$ is an open interval containing 0 and $u(0)=x$. Since $u$ is a null geodesic, it satisfies both $u^au^bg_{ab}=0$ and $u^a\nabla_au^b=0$, where $u^a$ is the tangent vector to $u$. For each point $q$ on the geodesic we have $u_a(q)u_b(q)T^{ab}[\EM\tau^n_f,\EM f](q)=0$, and consequently for our chosen $x\in K^\perp$ we have
\[
	(\nabla_u\EM\tau^n_f(x))(\nabla_u\EM f(x))+\xi\left(-\nabla_u^2-R_{ab}(x)u^au^b\right)\left((\EM\tau^n_f(x))
	(\EM f(x))\right)=0.
\]
Note that this is equivalent to
\begin{align}
	(1-2\xi)(\nabla_u\EM\tau^n_f(x))&(\nabla_u\EM f(x))-\xi R_{ab}u^au^b(\EM\tau^n_f(x))(\EM f(x))
	\notag\\
	&+\xi(\EM\tau^n_f(x)\nabla_u^2\EM f(x)+\EM f(x)\nabla_u^2\EM\tau^n_f(x))=0.
	\label{eqn:kgst2}
\end{align}
It follows that for any $f\in\Ccs(\Ms)$ for which $\EM f(x)=0=\nabla_u\EM f(x)$ and $\nabla_u^2\EM f(x)\neq 0$,\footnote{Such a solution always exists; we may explicitly construct one as follows. We work in normal coordinates $q^a$ in a neighbourhood $S\ni x$ such that $x$ is at the origin, and the $q^0=0$ hyperplane is a subset of a spacelike Cauchy surface $\Sigma\subset\Ms$, and we take our null geodesic $u$ such that in coordinates, the tangent at $x$ is $u^a(x)=(1,1,0,\ldots,0).$ Then any solution $\psi$ is uniquely determined by its data $(\varphi,\pi)$ on $\Sigma$, where $\varphi(\underline{q})=\psi|_\Sigma(\underline{q})$ and $\pi(\underline{q})=(\nabla_0\psi)|_\Sigma(\underline{q})$. It is then easy to check that defining $\varphi(\underline{q})=(q^1)^2$, $\pi(\underline{q})=0$ for $\underline{q}\in\Sigma\cap S$ gives us a solution $\psi$ satisfying the above conditions.} we have $\EM\tau^n_f(x)=0$, as $\xi\neq0$.

In the case that $n=1$, we have $\EM\tau^1_f=\EM t_1$ for all $f$, so we immediately see that $\EM t_1(x)=0$ for all $x\in K^\perp$. Now, we look at the case where $n=2$. We have $\EM\tau^2_f(x)=\int_\Ms dy\,t_2(x,y)\EM f(y)$, which is linear in $f$. Let $f$ be chosen such that $\EM f(x)=0=\nabla_u\EM f(x)$ and $\nabla_u^2\EM f(x)\neq 0$; additionally, we choose $f'\in\Ccs(\Ms)$ such that $\supp(f')\subset \{x\}^\perp$. Then $\EM f+\EM f'=\EM f$ in an open neighbourhood of $x$, so 
\[
	\EM\tau^2_{f'}(x)=\EM\tau^2_{f+f'}(x)-\EM\tau^2_f(x)=0.
\]
It follows that for any $f'\in\Ccs(\Ms)$ supported outside $J_\Ms(x)$, we have
\[
	\int_\Ms dy\,(\tenpow\EM2t_2)(x,y)f'(y)=-\EM\tau^2_{f'}(x)=0.
\]
Therefore $\tenpow{\EM}2t_2(x,y)=0$ whenever $x\in K^\perp$ and $y\in \{x\}^\perp$.

However, by the definition of $\F^2(\Ms)$, we have $\tenpow\EM2t_2(x,\cdot)\in\EM\Ccs(\Ms)$ for any fixed $x\in\Ms$, and it is therefore a smooth classical Klein-Gordon solution. If $\Sigma$ is a spacelike Cauchy surface containing $x$, then the data for $\tenpow\EM2t_2(x,\cdot)$ on $\Sigma$ is supported in $\{x\}$ for any $x\in K^\perp$ by the above result. But the data for a smooth solution is itself smooth, and therefore cannot be both nonzero and supported at a point. Consequently $\tenpow\EM2t_2(x,y)=0$ for any $(x,y)\in K^\perp\times\Ms$, and by symmetry we have $\supp(\tenpow\EM2 t_2)\subseteq J_\Ms(K)^{\times 2}$.

Now, consider the case where $n>2$. Suppose that we have $f,f_1$ such that $\EM f(x)=0=\nabla_u\EM f(x)$, $\EM f_1(x)=0=\nabla_u\EM f_1(x)$, and $\EM f(x)\neq0$. Then, for sufficiently small $\kappa$ we have $\EM\tau^n_{f+\kappa f_1}(x)=0$. Therefore, by symmetry of $t_n$ we have
\begin{align*}
	\EM\tau^n_{f}(x)+(-1)^{n-1}(n-1)\kappa\int_{\Ms^{\times(n-1)}}d^{n-1}y\,\Big[(\tenpow\EM n &t_n)(x,y_1,\ldots,y_{n-1})\\
	&f_1(y_1) f(y_2)\cdots f(y_{n-1})\Big]
	+\mathcal{O}(\kappa^2)=0.
\end{align*}
Differentiating this expression with respect to $\kappa$ and setting $\kappa=0$, we have
\[
	\int_{\Ms^{\times(n-1)}}d^{n-1}y\,(\tenpow\EM n t_n)(x,y_1,\ldots,y_{n-1}) f_1(y_1)f(y_2)\cdots f(y_{n-1})=0.
\]
We may repeat this argument to see that
\[
	\int_{\Ms^{\times(n-1)}}d^{n-1}y\,(\tenpow\EM n t_n)(x,y_1,\ldots,y_{n-1})f_1(y_1)\cdots f_{n-1}(y_{n-1})=0
\]
for any $f_1,\ldots,f_{n-1}$ such that $\EM f_i(x)=0=\nabla_u\EM f_i(x)$, $i=1,\ldots,n-1$. It follows that for any $x_1\in K^\perp$, we have $\tenpow\EM n t_n(x_1,\ldots,x_n)=0$ whenever at least one of $x_2,\ldots,x_n$ lies in ${x_1}^\perp$. Fixing $x_1\in K^\perp$, we note that $\tenpow\EM nt_n(x_1,y_1,\ldots,y_{n-1})$ is a smooth Klein-Gordon $(n-1)$-solution; its data on a spacelike Cauchy surface $\Sigma\ni x$ is supported in $\{x\}^{\times(n-1)}.$ Consequently we must have $\tenpow\EM nt_n(x_1,y_1,\ldots,y_{n-1})=0$ for $x_1\in K^\perp,\ y_1,\ldots,y_{n-1}\in\Ms$ by smoothness. Therefore we have proved the following lemma:
\begin{lemma}
	\label{lem:kgsuppprop}
	Let $\Ms$ be a spacetime and let $t_n\in\F^n(\Ms)$, $n\geq1.$ If $O\in\mathscr{O}(\Ms)$, $K\in \mathscr{K}(\Ms;O)$ and $\left(\tenpow{(\beta[s\h])}nt_n\right)[\EM f]=t_n[\EM f]$ for all $f\in\Ccs(\Ms)$ and for all $\h\in H(\Ms;K^\perp)$, then
	\[
		\supp(\tenpow\EM nt_n)\subseteq J_\Ms(K)^{\times n}.
	\]
\end{lemma}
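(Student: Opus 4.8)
The plan is to reduce the claimed support statement to a pointwise vanishing assertion for $\tenpow{\EM}{n}t_n$, using the stress-energy identity of Corollary~\ref{cor:rcederivn}. First I would differentiate the hypothesis $\bigl(\tenpow{(\beta[s\h])}{n}t_n\bigr)[\EM f]=t_n[\EM f]$ with respect to $s$ at $s=0$; by Corollary~\ref{cor:rcederivn} this yields $n\int_\Ms dvol_\Ms\,h_{ab}T^{ab}[\EM\tau^n_f,\EM f]=0$, and since this holds for \emph{every} $\h\in H(\Ms;K^\perp)$ --- which can be localized arbitrarily within $K^\perp$ --- we obtain $T^{ab}[\EM\tau^n_f,\EM f](x)=0$ for all $x\in K^\perp$ and $f\in\Ccs(\Ms)$. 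Contracting this tensor equation twice with the tangent $u^a$ of a null geodesic through $x$ lying in $K^\perp$ kills the $g^{ab}$-terms (as $u$ is null) and the $\nabla_u u$-terms (geodesic equation), leaving the scalar identity~\eqref{eqn:kgst2}.

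The decisive input is $\xi\neq0$: choosing $f$ with $\EM f(x)=0=\nabla_u\EM f(x)$ but $\nabla_u^2\EM f(x)\neq0$ --- which exists since Cauchy data may be prescribed freely on a spacelike Cauchy surface through $x$ --- collapses \eqref{eqn:kgst2} to $\xi\,\EM\tau^n_f(x)\,\nabla_u^2\EM f(x)=0$, hence $\EM\tau^n_f(x)=0$. I would then split into cases on $n$. For $n=1$, $\tau^1_f=t_1$ is independent of $f$, so $\EM t_1$ vanishes on $K^\perp$ at once. For $n=2$, $f\mapsto\EM\tau^2_f$ is linear, and since adding to $f$ any $f'$ supported in $\{x\}^\perp$ does not change $\EM f$ near $x$, one deduces $\EM\tau^2_{f'}(x)=0$ for every $f'$ supported outside $J_\Ms(x)$, i.e.\ $\tenpow{\EM}{2}t_2(x,y)=0$ whenever $x\in K^\perp$ and $y\in\{x\}^\perp$. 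For $n>2$, I would introduce a scaling parameter $\kappa$ (replacing $f$ by $f+\kappa f_1$, then iterating) and differentiate to strip off the dependence on one argument at a time, using total symmetry of $t_n$; this gives $\tenpow{\EM}{n}t_n(x_1,\ldots,x_n)=0$ whenever $x_1\in K^\perp$ and at least one of $x_2,\ldots,x_n$ lies in $\{x_1\}^\perp$.

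Finally, to upgrade ``vanishing when some argument is causally disjoint from $x_1$'' to the full support statement, I would fix $x_1\in K^\perp$ and note that $\tenpow{\EM}{n}t_n(x_1,\cdot,\ldots,\cdot)$ is a smooth classical Klein--Gordon $(n-1)$-solution whose Cauchy data on a spacelike Cauchy surface through $x_1$ is supported in $\{x_1\}^{\times(n-1)}$; smooth data concentrated at a single point must vanish identically, so the solution vanishes and $\tenpow{\EM}{n}t_n(x_1,\cdot)=0$ for all $x_1\in K^\perp$. By total symmetry of $t_n$ this gives $\supp(\tenpow{\EM}{n}t_n)\subseteq J_\Ms(K)^{\times n}$.

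I expect the main obstacle to be the passage from ``$\EM\tau^n_f(x)=0$ for the special family of test functions with prescribed vanishing data along the geodesic'' to an honest statement about $\tenpow{\EM}{n}t_n$ itself: this is where one must exploit the freedom to modify $f$ inside $J_\Ms(x)^\perp$ without altering $\EM f$ near $x$, carry out the multilinear bookkeeping for $n>2$, and invoke the smoothness/unique-continuation argument. The remaining steps --- differentiation via Corollary~\ref{cor:rcederivn}, the null-geodesic contraction, and the use of $\xi\neq0$ --- are direct.
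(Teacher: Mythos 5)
Your proposal is correct and follows essentially the same path as the paper: differentiate at $s=0$ and invoke Corollary~\ref{cor:rcederivn}, localize $\h$ to conclude $T^{ab}[\EM\tau^n_f,\EM f]$ vanishes pointwise on $K^\perp$, contract with a null geodesic tangent and exploit $\xi\neq0$ via the choice of $f$ with $\EM f(x)=\nabla_u\EM f(x)=0$ and $\nabla_u^2\EM f(x)\neq0$, handle $n=1,2,n>2$ by linearity/multilinearity in $f$, and finish with the smoothness-of-Cauchy-data argument plus total symmetry. The steps, case split, and closing unique-continuation argument all match the paper's proof.
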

From here we may prove the following result:
\begin{theorem}
	The Klein-Gordon theory is dynamically local in the nonminimally coupled case, for all $m\geq0$.
\end{theorem}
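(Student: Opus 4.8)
The plan is to obtain the theorem as an essentially immediate consequence of Lemma~\ref{lem:kgsuppprop}, together with the description of $\alpha^\text{kin}_{\Ms;O}(\Ak(\Ms;O))$ recorded just before it. Since $\A$ is additive (Section~\ref{sect:dynloc}), the reduction \eqref{eqn:dynloccond} and the observation noted above Lemma~\ref{lem:kgsuppprop} show that it is enough to prove $\Ab(\Ms;K)\subseteq\alpha^\text{kin}_{\Ms;O}(\Ak(\Ms;O))$ for every spacetime $\Ms$, every $O\in\mathscr{O}(\Ms)$ and every $K\in\K(\Ms;O)$. I would therefore fix such $\Ms$, $O$ and $K$, take $A\in\Ab(\Ms;K)$, and pick a representative $F=\sum_{n=0}^N t_n\in\F(\Ms)$ with $t_n\in\F^n(\Ms)$.

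By definition of $\Ab(\Ms;K)$ we have $\rce_\Ms[\h]A=A$, and hence $B[\h]F\sM F$, for every $\h\in H(\Ms;K^\perp)$; exactly as in the derivation of \eqref{eqn:kgdynloc1} (apply Lemma~\ref{lem:kgfuncpolar} to $B[s\h]F-F\in\J(\Ms)$) this gives $(\tenpow{(\beta[s\h])}{n}t_n)[\EM f]=t_n[\EM f]$ for all $f\in\Ccs(\Ms)$, all such $\h$, and every $n\geq1$. This is precisely the hypothesis of Lemma~\ref{lem:kgsuppprop}, so I conclude that $\supp(\tenpow{\EM}{n}t_n)\subseteq J_\Ms(K)^{\times n}$ for every $n\geq1$.

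It then remains to check that $F$ represents an element of $\alpha^\text{kin}_{\Ms;O}(\Ak(\Ms;O))$, i.e.\ that $F[\EM f]=F[0]$ for every $f\in\Ccs(\Ms)$ with $\supp(f)\subseteq O'=(\text{cl }O)^\perp$. As $K$ carries a multi-diamond neighbourhood based in $O$ we have $K\subseteq O$, hence $O'=(\text{cl }O)^\perp\subseteq K^\perp$, so any such $f$ satisfies $\supp(f)\cap J_\Ms(K)=\emptyset$. Using the antisymmetry identity $t_n[\EM f]=(-1)^n(\tenpow{\EM}{n}t_n)[f]$ coming from \eqref{eqn:Eantiprop}, each term with $n\geq1$ in $F[\EM f]=\sum_{n=0}^N t_n[\EM f]$ pairs the distribution $\tenpow{\EM}{n}t_n$, supported in $J_\Ms(K)^{\times n}$, against the test function $f(x_1)\cdots f(x_n)$, supported in $\supp(f)^{\times n}$, and therefore vanishes; we are left with $F[\EM f]=t_0=F[0]$. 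By Lemma~\ref{lem:aalgimage}, in the form specialised above Lemma~\ref{lem:kgsuppprop}, this means $A=[F]_\Ms\in\alpha^\text{kin}_{\Ms;O}(\Ak(\Ms;O))$; as $\Ms$, $O$, $K$ and $A$ were arbitrary, dynamical locality of $\A$ follows.

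All of the substance has by this point been absorbed into Lemma~\ref{lem:kgsuppprop} --- and behind it into the stress-energy computation of Lemma~\ref{lem:rcederiv} and the decisive use of $\xi\neq0$ --- so the remaining argument is pure bookkeeping in causal complements, and the mass $m$ enters nowhere, which is why the conclusion holds for all $m\geq0$. The only point that genuinely requires care is the ``radius'' matching between the two descriptions: Lemma~\ref{lem:kgsuppprop} bounds $\supp(\tenpow{\EM}{n}t_n)$ by $J_\Ms(K)^{\times n}$, whereas kinematic membership is tested against functions supported in $O'=(\text{cl }O)^\perp$, so one must confirm $O'\subseteq K^\perp$ (equivalently $J_\Ms(K)\subseteq J_\Ms(\text{cl }O)$, which holds because $K\subseteq O$) before the support bound can be used to close the argument.
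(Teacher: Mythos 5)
Your proof is correct and follows the same route as the paper's: reduce to $\Ab(\Ms;K)\subseteq\alpha^\text{kin}_{\Ms;O}(\Ak(\Ms;O))$ via additivity, derive \eqref{eqn:kgdynloc1}, invoke Lemma~\ref{lem:kgsuppprop} to get the support bound on $\tenpow\EM n t_n$, and conclude $F[\EM f]=F[0]$ for $f$ supported in $O'$. The only difference is that you make explicit the inclusion $O'\subseteq K^\perp$ (from $K\subseteq O$) and the antisymmetry step $t_n[\EM f]=(-1)^n(\tenpow\EM n t_n)[f]$, both of which the paper leaves implicit.
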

\begin{proof}
	Recall that for any spacetime $\Ms$ and $O\in\mathscr{O}(\Ms)$, the algebra $\alpha^{\text{kin}}_{\Ms;O}(\Ak(\Ms;O))$ comprises elements represented by functionals $F$ with the property that $F[\EM f]=F[0]$ for all $f\in\Ccs(\Ms)$ supported within $O'$. To demonstrate that the theory is dynamically local, it is sufficient to show that $\Ab(\Ms;K)\subseteq\alpha^{\text{kin}}_{\Ms;O}(\Ak(\Ms;O))$ for all $K\in\mathscr{K}(\Ms;O)$. Given such a $K$, and an element $A\in\Ab(\Ms;K)$ represented by $F=\sum_{n=0}^Nt_n$, with each $t_n\in\F^n(\Ms)$, we may see from \eqref{eqn:kgdynloc1} and lemma \ref{lem:kgsuppprop} that $\supp(\tenpow\EM nt_n)\subseteq J_\Ms(K)^{\times n}$ for each $n=1,\ldots,N$, and subsequently $t_n[\EM f]=0$ for all $f\in\Ccs(K^\perp)$. Therefore in particular we have $F[\EM f]=t_0=F[0]$ for all $f\in\Ccs(O')$, and so $F$ represents an element of $\alpha^{\text{kin}}_{\Ms;O}(\Ak(\Ms;O))$. Consequently the theory is dynamically local.
\end{proof}

\subsection{Dynamical Locality of the algebra of Wick Polynomials}
We now proceed to examine the cases in which we can demonstrate dynamical locality for the theory $\W$. We begin by looking at the minimally coupled massless case. The corresponding case for the Klein-Gordon theory is not dynamically local, and so one would not expect dynamical locality to hold here. Indeed, this is the case; when $\xi=m=0$, any constant function is a classical solution to the Klein-Gordon equation. Therefore, in any spacetime $\Ms$ with compact Cauchy surfaces, the function $\phi(x)=1$ is an element of $\EM\Ccs(\Ms)$. However, we have $ T^{ab}[\phi,\EM f]=0$; it follows that for any $t\in\T^1(\Ms)$ such that $\EM t\equiv1$, we have $t\in\Wd(\Ms;O)$ for any $O\in\mathscr{O}(\Ms)$. But it is also the case that if we pick $f\in\Ccs(O')$ with $\int_\Ms dx\,f(x)\neq0$, then $t[\EM f]\neq0$; therefore, $t\notin\Wk(\Ms;O)$.

We may, however, demonstrate dynamical locality in two cases. To do this, we need the following results:
\begin{lemma}
	\label{lem:tsuppprop}
	Let $\Ms$ be a spacetime with $O\in\mathscr{O}(\Ms)$ and $K\in\mathscr{K}(\Ms;O)$. Let $t_n\in\T^n(\Ms)$ for some $n\geq0$, and suppose that for all $f\in\Ccs(\Ms)$ and $\h\in H(\Ms;K^\perp)$ we have
	\beq
		\label{eqn:tintzero}
		\int_\Ms dvol_\Ms\,h_{ab}T^{ab}[\EM\tau^n_f,\EM f]=0,
	\eeq
	where $\tau^n_f$ is defined as in \eqref{eqn:taudef}. Then, in the massive minimally coupled and massive conformally coupled cases, we have 
	$\supp(\tenpow\EM nt_n)\subseteq J_\Ms(K)^{\times n}$.
\end{lemma}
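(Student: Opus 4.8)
The plan is to follow the scheme of the proof of Lemma~\ref{lem:kgsuppprop}, but to replace its pointwise analysis of the stress tensor by one adapted to the two coupling regimes at hand, and then to carry out a decoupling argument that is in fact simpler than the one used there. The case $n=0$ is vacuous, so assume $n\geq1$. First I would deduce from \eqref{eqn:tintzero} that the symmetric tensor field $T^{ab}[\EM\tau^n_f,\EM f]$ vanishes on the open set $K^\perp$ for every $f\in\Ccs(\Ms)$: since global hyperbolicity is stable under small metric perturbations, $H(\Ms;K^\perp)$ contains $s\h_0$ for every smooth symmetric $\h_0$ supported in $K^\perp$ and all sufficiently small $s$, and \eqref{eqn:tintzero} is linear in $\h$, so it holds for all such $\h_0$ and forces the vanishing. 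For this to make pointwise sense I would first record that $\EM\tau^n_f$ is smooth: $\tau^n_f$ is a compactly supported distribution obtained by pairing $t_n\in\T^n(\Ms)$ with the smooth functions $\EM f$ in all but one argument, so (using that the closure of $(V^+_\Ms)^{\times n}\cup(V^-_\Ms)^{\times n}$ is $(\overline{V^+_\Ms})^{\times n}\cup(\overline{V^-_\Ms})^{\times n}$) its wavefront set consists of spacelike covectors only; hence $\tau^n_f\in\T^1(\Ms)$, and the wavefront bound of \cite{chilfred} gives $WF(\EM\tau^n_f)\cap\overline{V^+_\Ms\cup V^-_\Ms}=\emptyset$, while $\PM\EM\tau^n_f=0$ confines $WF(\EM\tau^n_f)$ to the null covectors; these two sets being disjoint, $\EM\tau^n_f\in\Cs(\Ms)$. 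The same argument shows that $\EM\tau^n_{f_1,\dots,f_{n-1}}$ is smooth for any $f_1,\dots,f_{n-1}\in\Ccs(\Ms)$, where $\tau^n_{f_1,\dots,f_{n-1}}$ is obtained from \eqref{eqn:taudef} by replacing each $\EM f(y_j)$ with $\EM f_j(y_j)$ (so $\tau^n_{f,\dots,f}=\tau^n_f$).

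The crux is the pointwise step: at each $x\in K^\perp$ the vanishing of all components of $T^{ab}[u,\phi](x)$, with $u=\EM\tau^n_f$ and $\phi=\EM f$ classical solutions, must be shown to force $u(x)\phi(x)=0$, and here $m>0$ and the special value of $\xi$ enter. Using the field equation one computes that the trace $g_{ab}T^{ab}[u,\phi]$ equals $\big[(1-\tfrac d2)+2(d-1)\xi\big](\nabla^cu)(\nabla_c\phi)$ plus a curvature- and mass-dependent multiple of $u\phi$, where $d=\dim\Ms$. In the conformally coupled case $\xi=\tfrac{d-2}{4(d-1)}$ the derivative coefficient vanishes and the curvature contributions cancel, leaving $g_{ab}T^{ab}[u,\phi]=m^2u\phi$; since the trace vanishes on $K^\perp$ and $m\neq0$, we get $u\phi=0$ there. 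In the minimally coupled case $\xi=0$ the tensor equation reads $(\nabla^{(a}u)(\nabla^{b)}\phi)=\tfrac12g^{ab}\big((\nabla^cu)(\nabla_c\phi)-m^2u\phi\big)$; for $d\geq3$, taking the trace eliminates $(\nabla^cu)(\nabla_c\phi)$ and yields $(\nabla^{(a}u)(\nabla^{b)}\phi)=\tfrac{m^2}{d-2}u\phi\,g^{ab}$, whose left side has rank at most two at each point while $g^{ab}$ has rank $d$, forcing $u\phi=0$ on $K^\perp$; for $d=2$ the trace already gives $m^2u\phi=0$. (For $m=0$, or for $\xi$ neither $0$ nor the conformal value, none of these manipulations closes, consistently with the failure of dynamical locality in the minimally coupled massless case noted above.)

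It then remains to pass from ``$\EM\tau^n_f(x)\,\EM f(x)=0$ for all $x\in K^\perp$ and all $f\in\Ccs(\Ms)$'' to $\supp(\tenpow{\EM}nt_n)\subseteq J_\Ms(K)^{\times n}$. For $n=1$ this is immediate: $\EM\tau^1_f=\EM t_1$ is independent of $f$, and choosing $f$ with $(\EM f)(x)\neq0$ (possible since not every classical solution vanishes at $x$) gives $\EM t_1=0$ on $K^\perp$. For $n\geq2$, the assignment $f\mapsto\EM\tau^n_f(x)\,\EM f(x)$ is a homogeneous polynomial of degree $n$ on $\Ccs(\Ms)$; polarising it and then peeling off arguments one at a time using a fixed $g$ with $(\EM g)(x)\neq0$, exactly as in the proof of Lemma~\ref{lem:kgsuppprop}, one obtains $\EM\tau^n_{f_1,\dots,f_{n-1}}(x)=0$ for all $f_1,\dots,f_{n-1}$ and all $x\in K^\perp$. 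Pairing this with $\varphi\in\Ccs(\Ms)$ supported in $K^\perp$ and moving each $\EM$ onto the test function by \eqref{eqn:Eantiprop} turns it into $\langle\tenpow{\EM}nt_n,\varphi\otimes f_1\otimes\cdots\otimes f_{n-1}\rangle=0$; since such product test functions span a dense subspace, $\tenpow{\EM}nt_n$ vanishes on $K^\perp\times\Ms^{\times(n-1)}$, and total symmetry of $t_n$ upgrades this to $\supp(\tenpow{\EM}nt_n)\subseteq J_\Ms(K)^{\times n}$. I expect the main obstacle to be the pointwise tensor step of the previous paragraph --- choosing the right contraction (the trace for conformal coupling, the trace together with the rank bound for minimal coupling) and keeping track of the dimension $d$ --- while the reduction, the smoothness observation, and the polarisation are routine.
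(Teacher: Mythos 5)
Your proof is correct and follows essentially the same structure as the paper's: derive pointwise vanishing of the stress tensor on $K^\perp$, use a contraction argument adapted to the coupling to extract $\EM\tau^n_f(x)\,\EM f(x)=0$, then polarise and peel off arguments to conclude $\supp(\tenpow\EM nt_n)\subseteq J_\Ms(K)^{\times n}$. The one genuinely different step is the pointwise analysis for minimal coupling with $d\geq3$: you take the trace to eliminate $(\nabla^cu)(\nabla_c\phi)$ and then observe that $\nabla^{(a}u\,\nabla^{b)}\phi$ has rank at most two while $g^{ab}$ has rank $d$, so proportionality forces $u\phi=0$. The paper instead constructs in normal coordinates (adapted so that $\nabla_2\phi=\cdots=\nabla_{d-1}\phi=0$) a symmetric $v_{ab}$ with $v_{00}=1$, $v_{11}=-1$ and contracts $T^{ab}$ with it, which yields $m^2u\phi=0$ directly. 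Your rank argument is cleaner in that it avoids adapting coordinates to $\nabla\phi$ and handles the degenerate cases ($\nabla\phi(x)=0$ or $\nabla u(x)\parallel\nabla\phi(x)$) uniformly; the paper's contraction is a bit more computational but produces the scalar $m^2u\phi$ by one stroke, as in the conformal case. You also make explicit two points the paper treats silently: that linearity of \eqref{eqn:tintzero} in $\h$ and openness of $H(\Ms;K^\perp)$ (stability of global hyperbolicity under small perturbations) give pointwise vanishing of $T^{ab}$ on $K^\perp$, and that $\EM\tau^n_f$ is a genuinely smooth classical solution (via $\tau^n_f\in\T^1(\Ms)$, the wavefront bound of \cite{chilfred}, and propagation of singularities for $\PM$) so that pointwise evaluation is legitimate. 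These additions are worth having. The polarisation/peeling at the end is interchangeable with the paper's differentiate-in-$\kappa$ iteration and reaches the same conclusion.
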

\begin{proof}
	We will consider the massive minimally coupled case first, in which $m\neq0$ and $\xi=0$. Clearly $T^{ab}[\EM\tau^n_f,\EM f](x)=0$ for all $f\in\Ccs(\Ms)$ and $x\in K^\perp$; now, we fix $x\in K^\perp$ and pick some $f\in\Ccs(\Ms)$ such that $(\EM f)(x)\neq 0$. In the case where $\Ms$ has dimension 2, we note that 
\[
	0=g_{ab}T^{ab}[\EM\tau^n_f,\EM f](x)=m^2\EM\tau^n_f(x)\EM f(x),
\]
and consequently $\EM\tau^n_f(x)=0$ for any such $f$; in higher dimensions, we choose normal coordinates at $x$ oriented such that $\nabla_2\EM f(x)=\cdots=\nabla_{d-1}\EM f(x)=0$, and define $v_{ab}$ such that in these coordinates we have $v_{00}=1,\ v_{11}=-1$, and all other entries zero. It follows that $v_{ab}g^{ab}(x)=2$ and $v_{ab}\nabla^{(a}\EM\tau^n_f(x)\nabla^{b)}\EM f(x)=\nabla^a\EM\tau^n_f(x)\nabla_a\EM f(x)$, so that we have
\[
	0=v_{ab}T^{ab}[\EM\tau^n_f,\EM f](x)=m^2\EM\tau^n_f(x)\EM f(x).
\]
Again, we may conclude that $\EM\tau^n_f(x)=0$ for any such $f$.

When $n=1$ we deduce immediately that $\EM t_1(x)=0$ for all $x\in K^\perp$. For $n=2$, we note that $\tau^2_f$ is linear in $f$, and as any $f\in\Ccs(\Ms)$ may be expressed as $f=f_1-f_2$ where $\EM f_1(x)\neq0\neq\EM f_2(x)$ we have $\EM\tau^2_f(x)=-\int_\Ms dy\,(\tenpow\EM2 t_2)(x,y)f(y)=0$ for all $f\in\Ccs(\Ms)$. Therefore $\tenpow\EM2 t_2(x,y)=0$ for all $x\in K^\perp$, and so $\supp(\tenpow\EM2 t_2)\subseteq J_\Ms(K)^{\times 2}$ by symmetry. For $n>2$, we pick $f\in\Ccs(\Ms)$ with $\EM f(x)\neq 0$ and let $f_1\in\Ccs(\Ms)$ be arbitrary; for sufficiently small $\kappa$ we have $\EM \tau^n_{f+\kappa f_1}(x)=0$. We differentiate this expression with respect to $\kappa$ and set $\kappa=0$, which yields
\[
	\int_{\Ms^{\times(n-1)}}d^{n-1}y\,(\tenpow\EM n t_n)(x,y_1,\ldots,y_{n-1})f_1(y_1)f(y_2)\cdots f(y_{n-1})=0;
\]
we may then repeat this argument to see that
\[
	\int_{\Ms^{\times(n-1)}}d^{n-1}y\,(\tenpow\EM n t_n)(x,y_1,\ldots,y_{n-1})f_1(y_1)f_2(y_2)\cdots f_{n-1}(y_{n-1})=0
\]
for any $f_1,\ldots,f_{n-1}\in\Ccs(\Ms)$. It follows that $\tenpow\EM n t_n(x,y_1,\ldots,y_{n-1})=0$ for all $x\in K^\perp$, and by symmetry we have $\supp(\tenpow\EM n t_n)\subseteq J_\Ms(K)^{\times n}$. This concludes the proof for the massive minimally coupled case.

In the massive conformally coupled case, where $m\neq0$ and $\xi=\frac{d-2}{4(d-1)}$, where $d$ is the dimension of $\Ms$, we have $g_{ab}T^{ab}[\phi_1,\phi_2]=m^2\phi_1\phi_2$ for any $\phi_1,\phi_2\in\EM\Ccs(\Ms)$. It follows that for all $x\in K^\perp$, we have $\EM\tau^n_f(x)\EM f(x)$ for all $f\in\Ccs(\Ms)$. We may use the same argument as above to show that $\supp(\tenpow\EM n t_n)\subseteq J_\Ms(K)^{\times n}.$
\end{proof}

\begin{lemma}
	\label{lem:Esupport}
	Let $t_n\in\T^n(\Ms)$, and suppose that $\supp(\tenpow\EM nt_n)\subseteq J_\Ms(K)^{\times n}$. Furthermore, let $S$ be any open neighbourhood of an arbitrary Cauchy surface $\Sigma\subset\Ms$. Then there exist $s,u_k\in\T^n(\Ms)$, $k=1,\ldots,n$, such that
	\[
		t_n=s+\sum_{k=1}^n(\PM)_ku_k,
	\]
	where we define $(\PM)_k=\tenpow\id{k-1}\otimes\PM\otimes\tenpow\id{n-k}$, and such that  $\supp(s)\subseteq (J_\Ms(K)\cap S)^{\times n}$.
\end{lemma}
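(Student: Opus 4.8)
The plan is to take $s$ to be $t_n$ transported to a neighbourhood of $\Sigma$ by applying the operator $\zeta$ of Lemma~\ref{lem:epchi} in every tensor slot, with the error absorbed into the $(\PM)_k$-terms. First I would choose disjoint spacelike Cauchy surfaces $\Sadv,\Sret$ for $\Ms$, lying respectively weakly to the past and to the future of $\Sigma$ (so that $\Sigma\subseteq J_\Ms^+(\Sadv)\cap J_\Ms^-(\Sret)$ and $\Sret\subseteq J_\Ms^+(\Sadv)$), taken close enough to $\Sigma$ that the set $C:=J_\Ms(K)\cap J_\Ms^+(\Sadv)\cap J_\Ms^-(\Sret)$ lies in $S$. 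Global hyperbolicity makes $J_\Ms(K)$ meet the slab between two Cauchy surfaces in a compact set, and these compact sets shrink into $\Sigma\subseteq S$ as $\Sadv,\Sret\to\Sigma$, so a standard compactness-and-contradiction argument yields the claim. Then pick a partition of unity $\cadv+\cret=1$ with $\cadv\equiv1$ on $J_\Ms^-(\Sadv)$ and $\cadv\equiv0$ on $J_\Ms^+(\Sret)$, so that $\overline{\{0<\cadv<1\}}\subseteq J_\Ms^+(\Sadv)\cap J_\Ms^-(\Sret)$, and set $\zeta t=\PM\cadv\EM t$ as before.

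Set $s:=\tenpow\zeta n t_n$. Since operators acting on distinct tensor factors commute, $s=\tenpow{(\PM\cadv)}{n}\big(\tenpow\EM n t_n\big)$. By hypothesis $\tenpow\EM n t_n$ is supported in $J_\Ms(K)^{\times n}$ and is a classical solution of $\PM\cdot=0$ in each of its variables; and for any solution $\phi$ one has $\PM(\cadv\phi)=\phi\,\Box_\g\cadv+2(\nabla^a\cadv)(\nabla_a\phi)$, which is supported in $\overline{\{0<\cadv<1\}}$. Applying $(\PM\cadv)_k$ one slot at a time—each step keeping the support inside $J_\Ms(K)$ in every slot, and leaving the distribution a solution in the untreated slots—shows $\supp(s)\subseteq\big(J_\Ms(K)\cap\overline{\{0<\cadv<1\}}\big)^{\times n}\subseteq C^{\times n}\subseteq(J_\Ms(K)\cap S)^{\times n}$. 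Moreover $s$ is totally symmetric (as $t_n$ is and $\zeta$ is applied in every slot), compactly supported, and satisfies the wavefront bound defining $\T^n(\Ms)$, since $\zeta$ is built from $\PM$, multiplication operators and $\EM^\pm$, none of which enlarges that bound.

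For the error term I would use the identity $(\id-\zeta)t=\PM\big(\cret\EM^- t+\cadv\EM^+ t\big)$, immediate from $\PM\EM^\pm t=t$ and $\cadv+\cret=1$. Writing $\gamma t:=\cret\EM^- t+\cadv\EM^+ t$—which maps $\T^1(\Ms)$ to $\T^1(\Ms)$, each summand being compactly supported with the required wavefront bound—we have $\id-\zeta_k=(\PM)_k(\gamma)_k$ slotwise, hence by telescoping
\[
\id-\tenpow\zeta n=\sum_{k=1}^n(\zeta_1\cdots\zeta_{k-1})(\id-\zeta_k)=\sum_{k=1}^n(\PM)_k(\zeta_1\cdots\zeta_{k-1})(\gamma)_k,
\]
the last step using that $(\PM)_k$ commutes with $\zeta_j$ for $j<k$. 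Thus $t_n=s+\sum_{k=1}^n(\PM)_ku_k$ with $u_k:=(\zeta_1\cdots\zeta_{k-1})(\gamma)_kt_n$; each $u_k$ is compactly supported ($\zeta$ preserves compact supports, since the transition region of $\cadv$ meets the causal hull of a compact set in a compact set) and has the $\T^n$ wavefront bound. One may symmetrise $u_k$ in its remaining $n-1$ arguments without disturbing $(\PM)_ku_k$; full symmetry of the $u_k$ is in any case irrelevant below, since these terms are annihilated by $\tenpow\EM n$.

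The only genuinely non-routine point is the second paragraph: one must exploit that the hypothesis bounds $\supp(\tenpow\EM n t_n)$ rather than $\supp(t_n)$ itself, and observe that the slotwise action of $\zeta$ simultaneously confines the support into the $S$-neighbourhood of $\Sigma$ (via the transition region of $\cadv$) and keeps it within $J_\Ms(K)$. The geometric input—that $J_\Ms(K)$ meets the slab between $\Sadv$ and $\Sret$ in a compact set that can be squeezed into $S$—is what forces the choice of Cauchy surfaces; everything else is bookkeeping with the operators $\EM^\pm$, $\PM$ and the multiplication operators already developed in the paper.
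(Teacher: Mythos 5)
Your proof is correct and rests on the same choice $s=\tenpow{(\PM\cadv\EM)}{n}t_n$ as the paper, but the route to the error term is genuinely different. The paper notes $\tenpow\EM n(t_n-s)=0$ and invokes the separate kernel characterisation of Lemma~\ref{lem:Ekernel} to deduce the form $\sum_k(\PM)_k u_k$; you bypass Lemma~\ref{lem:Ekernel} entirely via the explicit telescoping identity
\[
\id-\tenpow\zeta n=\sum_{k=1}^n(\PM)_k\,\zeta_1\cdots\zeta_{k-1}\,\gamma_k,
\qquad \gamma=\cret\EM^-+\cadv\EM^+,\quad \id-\zeta=\PM\gamma,
\]
which is self-contained (the identity $\id-\zeta=\PM\gamma$ follows directly from $t=\PM\EM^\pm t$ and $\cadv+\cret=1$, and each summand of $\gamma t$ is compactly supported by the usual past/future-support intersection argument). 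This is a modest but real simplification: Lemma~\ref{lem:Ekernel} is used nowhere else in the paper, so your approach would let it be dropped. You are also more careful than the paper's stated proof about the geometric point: the paper asserts that one can arrange ``$\cadv$ is constant outside $S$'', which requires the \emph{entire} transition slab between $\Sigma^\pm$ to lie in $S$ and need not hold for an arbitrary neighbourhood $S$ of $\Sigma$; you correctly observe that only $J_\Ms(K)\cap(\text{slab})$ — a compact set that shrinks into $J_\Ms(K)\cap\Sigma\subseteq S$ — need be squeezed into $S$, and your slot-by-slot support bookkeeping for $s=\tenpow{(\PM\cadv)}{n}\bigl(\tenpow\EM n t_n\bigr)$ delivers exactly that. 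One small caveat you flag yourself, and which the paper shares: your $u_k$ (like those supplied by Lemma~\ref{lem:Ekernel}) are not totally symmetric and so do not literally belong to $\T^n(\Ms)$ as defined; this is harmless for the lemma's use downstream, since those terms are annihilated by $\tenpow\EM n$ and by pairing with bisolutions.
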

\begin{proof}
To prove this, we will need the result of lemma \ref{lem:Ekernel}: namely, that
\[
	\ker\tenpow\EM n=\left\{\sum_{k=1}^n(\PM)_ku_k:u_k\in\T^n(\Ms)\right\}.
\]
Now, if $S$ is an open neighbourhood of a Cauchy surface, then we can find two disjoint Cauchy surfaces $\Sigma^\pm\subset S$ such that $\Sigma^+\subset J^+_\Ms(\Sigma^-)$. Let $\cadv+\cret=1$ be a smooth partition of unity such that $\cadv(x)=0,$ $\cret(x)=1$ for $x\in J_\Ms^+(\Sigma^+)$ and $\cadv(x)=1$, $\cret(x)=0$ for $x\in J_\Ms^-(\Sigma_-)$. We let $s=\tenpow{(\PM\cadv\EM)}nt_n$; by \eqref{eqn:epchidist} we have $\tenpow\EM ns=\tenpow\EM nt_n$, so by lemma \ref{lem:Ekernel} it follows that
\[
	t_n-s=\sum_{k=1}^n(\PM)_ku_k
\]
for some $u_k\in\T^n(\Ms),$ $k=1,\ldots,n$. The required support properties of $s$ follow from the support of $\tenpow\EM nt_n$ and the fact that $\cadv$ is constant outside $S$.
\end{proof}

The above results allow us to prove the following:
\begin{theorem}
	The theory $\W$ of Wick polynomials is dynamically local in the massive minimally coupled case and the massive conformally coupled case. The theory is not dynamically local in the massless minimally coupled case.
\end{theorem}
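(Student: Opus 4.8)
The plan is to treat the three cases separately. The massless minimally coupled case needs nothing beyond the observation recorded just before Lemma~\ref{lem:tsuppprop}: on a spacetime $\Ms$ with compact Cauchy surfaces the constant solution $\phi\equiv1$ has $T^{ab}[\phi,\EM f]\equiv0$, so any $t\in\T^1(\Ms)$ with $\EM t\equiv1$ is invariant under $\rce_\Ms[\h]$ for every $\h$ and hence represents an element of $\Wb(\Ms;K)$ for every $K$, so of $\Wd(\Ms;O)$ for every $O\in\mathscr{O}(\Ms)$; but choosing $f\in\Ccs(O')$ with $\int_\Ms dvol_\Ms\,f\neq0$ gives $t[\EM f]\neq0$, so by Lemma~\ref{lem:walgimage} it is not in $\alpha^{\text{kin}}_{\Ms;O}(\Wk(\Ms;O))$. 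I would simply assemble this into a failure of dynamical locality.

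For the two massive cases, I would reduce dynamical locality to a support statement. Since $\W$ is additive, by \eqref{eqn:dynloccond} it suffices to prove $\Wb(\Ms;K)\subseteq\alpha^{\text{kin}}_{\Ms;O}(\Wk(\Ms;O))$ for every $\Ms$, $O\in\mathscr{O}(\Ms)$ and $K\in\mathscr{K}(\Ms;O)$; by Lemma~\ref{lem:walgimage}, the polarisation identity \eqref{eqn:wpfuncpolar}, and the inclusion $O'\subseteq K^\perp$ (valid since $K\subseteq O$), this reduces in turn to showing that whenever $W\in\Wb(\Ms;K)$ is represented by $T_H=\sum_{n=0}^Nt_n\in\T_H(\Ms)$ one has $\supp(\tenpow\EM nt_n)\subseteq J_\Ms(K)^{\times n}$ for $1\le n\le N$. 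By the computation of the relative Cauchy evolution of $\W$ in Section~\ref{sect:wprce}, invariance of $W$ reads $B_H[s\h]\lambda_{H,\cH_{s\h}}T_H\sim_\Ms T_H$ for all $\h\in H(\Ms;K^\perp)$ and all small $s$. Differentiating at $s=0$ --- where $B_H[0]$ and $\lambda_{H,\cH_0}=\lambda_{H,H}$ act as the identity modulo $\JJ(\Ms)$ --- and using Corollary~\ref{cor:rcederivn} for the $B_H$ factor together with Lemmas~\ref{lem:Hdiffcomp}--\ref{lem:Hdiffsupp} for the $\lambda$ factor, the degree-$n$ part of $\left.\dds\big(B_H[s\h]\lambda_{H,\cH_{s\h}}T_H\big)\right|_{s=0}\in\JJ(\Ms)$, isolated via \eqref{eqn:wpfuncpolar}, takes the schematic form
\[
	n\int_\Ms dvol_\Ms\,h_{ab}T^{ab}[\EM\tau^n_f,\EM f]
	= c_n\int v_\h(y,z)\,t_{n+2}(x_1,\dots,x_n,y,z)\prod_{i=1}^n(\EM f)(x_i),
\]
where $\tau^n_f$ is as in \eqref{eqn:taudef}, $c_n$ is a constant, and $v_\h=\left.\dds\cH_{s\h}\right|_{s=0}$ is supported in $J_\Ms(\supp(\h))^{\times2}$ by Lemma~\ref{lem:Hdiffsupp}; for $n\in\{N-1,N\}$ the right-hand side is absent.

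From here I would argue by downward induction on $n$. For $n=N$ (and $n=N-1$) the right-hand side vanishes, so \eqref{eqn:tintzero} holds and Lemma~\ref{lem:tsuppprop} --- whose proof uses $m\neq0$, this being the only place the minimally and conformally coupled cases are distinguished, through the trace identity $g_{ab}T^{ab}[\phi_1,\phi_2]=m^2\phi_1\phi_2$ --- gives $\supp(\tenpow\EM nt_n)\subseteq J_\Ms(K)^{\times n}$. For the inductive step, having $\supp(\tenpow\EM mt_m)\subseteq J_\Ms(K)^{\times m}$ for all $m>n$ (a property of $W_H$ not depending on the representative), I would, for each fixed $\h\in H(\Ms;K^\perp)$, use Lemma~\ref{lem:Esupport} to replace every higher component $t_m$ ($m>n$) by one supported in $(J_\Ms(K)\cap S)^{\times m}$ for a thin neighbourhood $S$ of a spacelike Cauchy surface $\Sigma$ adapted to the multi-diamond neighbourhood of $K$; since the subtracted $(\PM)_k$-terms lie in $\JJ(\Ms)$, this changes neither the class $W_H$ nor $t_n$ (hence not $\tau^n_f$), and the invariance identity above persists for the new representative. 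As $\supp(\h)\subseteq K^\perp$, the set $J_\Ms(\supp(\h))\cap\Sigma$ is a compact subset of $\Sigma$ disjoint from $J_\Ms(K)\cap\Sigma$; so for $S$ thin enough the two contracted slots of the pushed $t_{n+2}$ have support disjoint from $\supp(v_\h)$, the right-hand side vanishes, and Lemma~\ref{lem:tsuppprop} applies again --- now with $\tau^n_f$ fixed across all $\h$ --- to yield $\supp(\tenpow\EM nt_n)\subseteq J_\Ms(K)^{\times n}$, closing the induction.

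The step I expect to be the main obstacle is making the $\lambda_{H,\cH_\h}$ correction drop out at every degree. Lemma~\ref{lem:Hdiffsupp} only localises $v_\h$ in $J_\Ms(\supp(\h))^{\times2}$, and for a general $K\in\mathscr{K}(\Ms;O)$ the slices $J_\Ms(K)\cap\Sigma$ and $J_\Ms(\supp(\h))\cap\Sigma$ can overlap; I would circumvent this by first enlarging $K$ to a compact set $K'=D_\Ms(\overline{B_1}\sqcup\cdots\sqcup\overline{B_r})$ with $\bigsqcup_i D_\Ms(B_i)$ a multi-diamond neighbourhood based in $O$ --- legitimate because $\Wb(\Ms;K)\subseteq\Wb(\Ms;K')$ whenever $K\subseteq K'$ --- for which $(K')^\perp\subseteq D_\Ms\!\big(\Sigma\setminus\bigsqcup_i\overline{B_i}\big)$, forcing $J_\Ms(\supp(\h))\cap\Sigma\subseteq\Sigma\setminus\bigsqcup_i\overline{B_i}=\Sigma\setminus(J_\Ms(K')\cap\Sigma)$, which is the disjointness used above. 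Keeping this localisation coherent with the downward induction, while retaining the freedom to re-choose $S$ for each $\h$ without disturbing $t_n$, is the delicate point; the remaining estimates are routine.
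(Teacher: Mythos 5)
Your strategy matches the paper's: use additivity to reduce to $\Wb(\Ms;K)\subseteq\wkin_{\Ms;O}(\Wk(\Ms;O))$, translate this via Lemma~\ref{lem:walgimage} and \eqref{eqn:wpfuncpolar} into a support statement for $\tenpow\EM nt_n$, differentiate the relative Cauchy evolution at $s=0$, and run a downward induction using Lemmas~\ref{lem:tsuppprop}, \ref{lem:Esupport} and \ref{lem:Hdiffsupp}. The massless counterexample is also the one the paper gives. The one organizational difference is that the paper carries, as its inductive hypothesis, the exact statement $\tilde t_{m;\h}\sim_\Ms t_m$ (for $m>n$, all $\h\in H(\Ms;K^\perp)$), defined via the regrouped $\lambda$-expansion \eqref{eqn:wptildedef}, and re-derives the support of $\tenpow\EM mt_m$ inside each step; your version carries the support statement itself and works only with the first-order $\lambda$-correction, so only $t_{n+2}$ (rather than the whole tower $t_{n+2},t_{n+4},\ldots$) enters at stage $n$. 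Both close the induction; yours is marginally leaner.

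On your flagged ``delicate point'': you are right that for a fixed Cauchy surface $\Sigma$ the slices $J_\Ms(K)\cap\Sigma$ and $J_\Ms(\supp\h)\cap\Sigma$ can overlap even when $\supp\h\subseteq K^\perp$ (a single point $K$ with a perturbation supported spacelike-but-close will do it). The paper's proof handles this by choosing a new $S$ for each fixed $\h$ --- ``choose $S$ such that $J_\Ms(\supp(\h))\cap J_\Ms(K)\cap S=\emptyset$'' --- which rests on the (true, but unremarked) fact that two causally disjoint compact sets admit a Cauchy surface on which their causal shadows are disjoint; it is not a matter of ``thinning'' $S$ around a fixed $\Sigma$, which you seem to suggest in the penultimate sentence before your final paragraph. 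Your alternative repair via enlargement $K\rightsquigarrow K'=D_\Ms(\overline{B_1}\sqcup\cdots\sqcup\overline{B_r})$ pins $\Sigma$ once and for all and is attractive, but note that $K'$ need not itself lie in $\mathscr{K}(\Ms;O)$, so you would need to check that Lemma~\ref{lem:tsuppprop} (which only really uses compactness of $K$, not the multi-diamond neighbourhood) still applies with $K'$ in place of $K$, and that the final inclusion $J_\Ms(K')\cap O'=\emptyset$ still holds --- both true, but worth saying. Either way, the essential content is the same as the paper's, and you have correctly identified where the argument is thinnest.
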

\begin{proof}We pick a spacetime $\Ms$, and some $O\in\mathscr{O}(\Ms)$; we will denote the dynamical and kinematic nets for $\W$ by $\wdyn_{\Ms;O}$, and $\wkin_{\Ms;O}$ respectively. We may then use a similar argument to that used above to see that $\wkin_{\Ms;O}(\W^\text{kin}(\Ms;O))$ comprises elements $W\in\W(\Ms)$ with $W=[W_H]_{H\in\H(\Ms)}$, where each $W_H\in\W_H(\Ms)$ can be represented by $T_H\in\T_H(\Ms)$ with the property that $T_H[\EM f]=T_H[0]$ for all $f\in\Ccs(O')$.

As already mentioned, for an additive theory, it is sufficient for dynamical locality to show that we have $\Wb(\Ms;K)\subseteq\wkin_{\Ms;O}(\W^\text{kin}(\Ms;O))$ for all $K\in\mathscr{K}(\Ms;O)$; we therefore pick some such $K$, and let $W\in\Wb(\Ms;K)$. Let $W=(W_H)_{H\in\H(\Ms)}$, and pick some fixed $H\in\H(\Ms)$; moreover, let $W_H\in\W_H(\Ms)$ be represented by a functional $T_H\in\T_H(\Ms)$. Since $\rce_\Ms[\h]W=W$ for all $\h\in H(\Ms;K^\perp)$ it follows that
\beq
	\label{eqn:wpfunccond}
	B_H[\h]\lambda_{H,\cH_\h}T_H\sim_\Ms T_H
\eeq
for all such $\h$. If $T_H=\sum_{n=0}^Nt_n$ with each $t_n\in\T^n(\Ms)$, then using \eqref{eqn:lambdadef}, interchanging sums and relabelling, we may write
\begin{align*}
	\lambda_{H,\cH_\h}T_H&=\sum_{k=0}^{\lfloor n/2\rfloor}\frac{1}{k!}\sum_{n=0}^N\left\langle\tenpow{(H-\cH_\h)}k,
	t_n^{(2k)}\right\rangle\\
	&=\sum_{n=0}^N\sum_{k=0}^{\lfloor\frac{N-n}2\rfloor}\frac1{k!}\left\langle\tenpow{(H-\cH_\h)}k,t_{n+2k}^{(2k)}\right\rangle;
\end{align*}
the precise meaning of the notation here is given in \eqref{eqn:anglefuncdef}.
Note that in the second sum, the inner sum for each $n$ consists only of elements of $\T^n(\Ms)$; we write
\beq
	\label{eqn:wptildedef}
	\T^n(\Ms)\ni\tilde t_{n;\h}=\sum_{k=0}^{\lfloor\frac{N-n}2\rfloor}\frac1{k!}\left\langle\tenpow{(H-\cH_\h)}k,t_{n+2k}^{(2k)}\right\rangle
\eeq
for $n=0,\ldots,N$, and may express the condition \eqref{eqn:wpfunccond} as 
\beq
	\label{eqn:wpfunccond2}
	\left(\tenpow{(\beta[\h])}n\tilde t_{n;\h}\right)[\EM f]=t_n[\EM f]
\eeq
for all $f\in\Ccs(\Ms)$ and for each $1\leq n\leq N$. We note that the $n=0$ term in \eqref{eqn:wpfunccond} requires
\beq
	\label{eqn:wpfunccond0}
	\sum_{k=1}^{\lfloor N/2\rfloor}\frac1{k!}\left\langle\tenpow{(H-\cH_\h)}k,t_{2k}^{(2k)}\right\rangle=0
\eeq
for all $\h\in H(\Ms;K^\perp).$

It follows from \eqref{eqn:wpfunccond2} that for $n\geq1$  we have
\[
	\left.\frac{d}{ds}\left(\tenpow{(\beta[s\h])}n\tilde t_{n;s\h}\right)[\EM f]\right|_{s=0}=0
\]
for all $f\in\Ccs(\Ms)$ and $\h\in H(\Ms;K^\perp)$. But since $\beta[\mathbf{0}]=\id$ and $\tilde t_{n;\mathbf{0}}=t_n$, this is equivalent to
\beq
	\label{eqn:sepwpdiffcond}
	\left.\frac{d}{ds}\left(\tenpow{(\beta[s\h])}nt_n\right)[\EM f]\right|_{s=0}+\left.\frac{d}{ds}\tilde t_{n;s\h}[\EM f]\right|_{s=0}=0;
\eeq
by corollary \ref{cor:rcederivn}, we have
\beq
	\left.\frac{d}{ds}\left(\tenpow{(\beta[s\h])}nt_n\right)[\EM f]\right|_{s=0}=n\int_\Ms dvol_\Ms\,h_{ab}T^{ab}[\EM \tau^n_f,\EM f],
	\label{eqn:betaderivdef}
\eeq
where as before $\tau^n_f$ is defined according to \eqref{eqn:taudef}.

We now wish to show that in fact $\tilde t_{n;\h}\sim_\Ms t_n$ for all $\h\in H(\Ms;K^\perp)$ and $n\geq0$. To do so, we firstly note that by \eqref{eqn:wptildedef}, we automatically have $\tilde t_{N;\h}=t_N$ and $\tilde t_{N-1;\h}=t_{N-1}$ for all $\h\in H(\Ms)$. We may then proceed by descent, using the fact that $\tilde t_{n;\h}\sim_\Ms t_n$ for all $\h\in H(\Ms;K^\perp)$ if $\tilde t_{n+2k;\h}\sim_\Ms t_{n+2k}$ for all $k$ satisfying $2\leq2k\leq N-n$. This can be shown from the previous results, as follows.

If $\tilde t_{n+2k;\h}\sim_\Ms t_{n+2k}$ for $2\leq2k\leq N-n$, then (with $n$ replaced by $n+2k$) the second term on the left hand side of \eqref{eqn:sepwpdiffcond} vanishes, and so by \eqref{eqn:betaderivdef} we also have
\[
	\int_{\Ms}dvol_\Ms\,h_{ab}T^{ab}[\EM \tau^{n+2k}_f,\EM f]=0
\]
for $2\leq2k\leq N-n$ and $\h\in H(\Ms;K^\perp)$. It follows from lemma \ref{lem:tsuppprop} that in the massive minimally coupled and massive conformally coupled theories, we have $\supp(\tenpow\EM{(n+2k)} t_{n+2k})\subseteq J_\Ms(K)^{\times(n+2k)}$. We may now use lemma \ref{lem:Esupport} to see that for any open neighbourhood $S$ of an arbitrary Cauchy surface, the distributions $t_{n+2k}$ may be written
\[
	t_{n+2k}=s+\sum_{j=1}^{n+2k}(\PM)_ju_j
\]
where $s,u_j\in\T^{n+2k}(\Ms)$ and $\supp(s)\subseteq (J_\Ms(K)\cap S)^{\times(n+2k)}$. If we now fix some $\h\in H(\Ms,K^\perp)$ and choose $S$ such that $J_\Ms(\supp(\h))\cap J_\Ms(K)\cap S=\emptyset$, it follows that
\[
	\left\langle\tenpow{(H-\cH_\h)}k,s^{(2k)}\right\rangle=0, 
\]
recalling from lemma \ref{lem:Hdiffsupp} that $\supp(H-\cH_\h)\subseteq(J_\Ms(\supp(\h)))^{\times2}$. But this means that for all $f\in\Ccs(\Ms)$, we have
\[
	\left\langle\tenpow{(H-\cH_\h)}k,t_{n+2k}^{(2k)}\right\rangle[\EM f]=
 	\sum_{j=1}^{n+2k}\left\langle\tenpow{(H-\cH_\h)}k,(\PM)_ju_j^{(2k)}
 	\right\rangle[\EM f]=0
\]
for $2\leq2k\leq N-n$, where we have used the fact that $(\PM\otimes\id)H=0=(\id\otimes\PM)H$ for any $H\in\H(\Ms)$. By \eqref{eqn:wptildedef}, we therefore have $\tilde t_{n;\h}\sim_\Ms t_n$.

As observed above, we certainly know that $\tilde t_{N;\h}\sim_\Ms t_N$ and $\tilde t_{N-1;\h}\sim_\Ms t_{N-1}$ for all $\h\in H(\Ms;K^\perp)$, and consequently by the above arguments $\tilde t_{N-2;\h}\sim_\Ms t_{N-2}$ and $\tilde t_{N-3;\h}\sim t_{N-3}$ for all $\h\in H(\Ms;K^\perp)$. We may continue this argument to see that in fact $\tilde t_{n;\h}\sim_\Ms t_n$ for all $n\geq0$ and $\h\in H(\Ms;K^\perp)$. Therefore \eqref{eqn:sepwpdiffcond} and \eqref{eqn:betaderivdef} tell us that \eqref{eqn:tintzero} is satisfied for all $n\geq1$; a final use of lemma \ref{lem:tsuppprop} tells us that $\supp(\tenpow\EM nt_n)\subseteq J_\Ms(K)^{\times n}$ for $n=1,\ldots,N$.

This firstly shows that the condition \eqref{eqn:wpfunccond0} is satisfied. More importantly, it shows that $t_n[\EM f]=0$ for all $f\in\Ccs(K^\perp)$ and $n\geq 1$, and therefore if $T$ represents an element of $\Wb(\Ms;K)$, then $T[\EM f]=t_0=T[0]$ for all $f\in\Ccs(K^\perp)$. If this is the case for all $K\in\mathscr{K}(\Ms;O)$ then $T$ represents an element of $\wkin_{\Ms;O}(\W^\text{kin}(\Ms;O))$, and so $\Wb(M;K)\subseteq\wkin_{\Ms;O}(\W^\text{kin}(\Ms;O))$ for all $K\subset O$. Therefore the massive minimally coupled and massive conformally coupled theories are dynamically local. We have already observed that the massless minimally coupled theory is not dynamically local.

\end{proof}

\subsubsection*{Acknowledgement} The author wishes to thank Chris Fewster for help, support and many useful conversations throughout the course of this work.

\appendix
\section{Appendix}
\label{appx}

%

\begin{lemma}
\label{lem:rcederivappx}
	Let $\Ms$ be a locally covariant theory obeying the timeslice axiom and let $t\in\T_H^1(\Ms)$ for some $H\in\H(\Ms)$. For any $\h\in H(\Ms)$ and $f\in\Ccs(\Ms)$, we have
	\[
		\left.\dds (\beta[s\h]t)[\EM f]\right|_{s=0}=\int_\Ms dvol_\Ms\, h_{ab}T^{ab}[\EM t,\EM f],
	\]
	where
	\begin{align*}
		T^{ab}[u,\phi]=(\nabla^{(a}u)&(\nabla^{b)}\phi)-\frac12g^{ab}(\nabla^cu)(\nabla_c\phi)\\
		&+\frac12m^2g^{ab}u\phi+\xi(g^{ab}\Box_\g-\nabla^a\nabla^b-G^{ab})(u\phi)
	\end{align*}
	for $u\in\EM\T^1(\Ms)$, $\phi\in\EM\Ccs(\Ms)$.
\end{lemma}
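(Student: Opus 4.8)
The plan is to carry out first-order perturbation theory in the metric, mirroring the computation behind \cite[Eqn.\ 8]{dynloc2} for the minimally coupled field. Concretely, I would differentiate the explicit composition $\beta[s\h]=\PMsh\cadv_-\EMsh\PM\cadv_+\EM$ at $s=0$, use the support properties of the cutoffs $\cadv_\pm$ — nonconstant only on the slabs $\Ns^\pm$ of Section~\ref{sect:kgfrce}, which are disjoint from $\supp(\h)$, with $\Ns^+$ to the causal future and $\Ns^-$ to the causal past of $\supp(\h)$ — to collapse the derivative to an integral localised on $\supp(\h)$, and finally identify that integral as $\int_\Ms dvol_\Ms\,h_{ab}T^{ab}[\EM t,\EM f]$ by linearising $\Box_\g+\xi R_\g+m^2$ and the volume element.

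First I would simplify $\beta[s\h]t$. Setting $g=\PM\cadv_+\EM t\in\Ccs(\Ms)$, which is $s$-independent and supported in $\Ns^+$, and using $\PMsh\EMsh g=0$, one gets $\beta[s\h]t=[\PMsh,\cadv_-]\EMsh g=[\Box_{\g+s\h},\cadv_-]\EMsh g=[\Box_\g,\cadv_-]\EMsh g$, the last equality because this commutator has coefficients supported in the transition region of $\cadv_-$, which lies in $\Ns^-$ and misses $\supp(\h)$; so all $s$-dependence now sits in $\EMsh g$. Differentiating, $\dds\beta[s\h]t\big|_{s=0}=[\Box_\g,\cadv_-]\dot E\,g$ with $\dot E:=\dds\EMsh\big|_{s=0}$. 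Writing $\dot P:=\dds\PMsh\big|_{s=0}$ for the linearisation of $\Box_{\g+s\h}+\xi R_{\g+s\h}$ (a second-order operator with coefficients supported in $\supp(\h)$), differentiating $\PMsh\EMsh^\pm=\id$ gives $\PM(\dot E^\pm g)=-\dot P\,\EM^\pm g$; since $\Ns^+$ is disjoint from $J_\Ms^-(\supp(\h))$, the support of $\EM^+ g$ misses $\supp(\h)$, so $\dot P\,\EM^+ g=0$ and $\dot E\,g=\dot E^- g=-\EM^-(\dot P\,\EM^- g)$. Moreover $\EM^- g=\EM t+\EM^+ g$ by \eqref{eqn:epchidist}, so $\EM^- g$ agrees with $\EM t$ on $\supp(\h)$ and $\dot P\,\EM^- g=\dot P\,\EM t$; hence $\dds\beta[s\h]t\big|_{s=0}=-[\Box_\g,\cadv_-]w$, where $w:=\EM^-(\dot P\,\EM t)$ is supported in $J_\Ms^-(\supp(\h))$. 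Now $[\Box_\g,\cadv_-]w=\PM(\cadv_- w)$ — the multiplications by $\xi R_\g+m^2$ cancel in the commutator, and $\cadv_-\PM w=\cadv_-\dot P\,\EM t=0$ since $\cadv_-$ vanishes on $\supp(\h)$ — so pairing with $\EM f$ yields $\dds(\beta[s\h]t)[\EM f]\big|_{s=0}=-\int_\Ms(\PM(\cadv_- w))(\EM f)\,dvol_\Ms$. Finally, writing $\PM(\cadv_- w)=\dot P\,\EM t-\PM(\cret_- w)$ with $\cret_- w$ compactly supported (its support lies in $J_\Ms^+(\Sadv_-)\cap J_\Ms^-(\supp(\h))$), so that $\int_\Ms(\PM(\cret_- w))(\EM f)\,dvol_\Ms=\int_\Ms(\cret_- w)(\PM\EM f)\,dvol_\Ms=0$, I obtain $\dds(\beta[s\h]t)[\EM f]\big|_{s=0}=-\int_\Ms(\dot P\,\EM t)(\EM f)\,dvol_\Ms$.

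It then remains to establish the identity $-\int_\Ms(\dot P\,\psi_1)\psi_2\,dvol_\Ms=\int_\Ms h_{ab}T^{ab}[\psi_1,\psi_2]\,dvol_\Ms$ for classical solutions $\psi_1=\EM t$ and $\psi_2=\EM f$ of $\PM$. This is a direct differential-geometric computation: substitute $\dds g^{ab}=-h^{ab}$, $\dds dvol_{\Ms[s\h]}\big|_{s=0}=\tfrac12(g^{cd}h_{cd})\,dvol_\Ms$, and the linearised scalar curvature $\dds R_\g=\nabla_a\nabla_b h^{ab}-\Box_\g(g^{cd}h_{cd})-R^{ab}h_{ab}$ into $\Box_{\g+s\h}+\xi R_{\g+s\h}+m^2$ to compute $\dot P\,\psi_1$, insert it into the integral, integrate by parts (every term carries a factor of $h_{ab}$ or its derivatives, so the total-derivative contributions are compactly supported and vanish), and use $\Box_\g\psi_i=-(\xi R_\g+m^2)\psi_i$ to remove the remaining $\Box$-terms; regrouping reproduces exactly the stated $T^{ab}$, the piece $\xi(g^{ab}\Box_\g-\nabla^a\nabla^b-G^{ab})(\psi_1\psi_2)$ arising from the curvature variation. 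Combined with the previous paragraph this proves the lemma.

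The main obstacle will be the support bookkeeping in the reduction: one must verify that $[\Box_{\g+s\h},\cadv_-]$ is genuinely $s$-independent, and, more delicately, that the non-compactly-supported intermediate objects $\EMsh g$ and $\cadv_- w$ may be paired with $\EM f$ and handled by integration by parts with no boundary contributions, so that only the $\supp(\h)$-localised piece survives. By contrast the curvature-variation step is routine, although the sign conventions for $\dds R_\g$, $G^{ab}$ and $T^{ab}$ must be fixed for the final expressions to match; and when $t\in\T_H^1(\Ms)$ is genuinely distributional, it is the wavefront condition defining $\T^1(\Ms)$ that makes $\EM^-(\dot P\,\EM t)$ and $T^{ab}[\EM t,\EM f]$ meaningful.
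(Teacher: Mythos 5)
Your argument is correct in structure and arrives at essentially the same intermediate formula as the paper -- a single localised integral of $(\dot P\,\EM t)(\EM f)$ over $\supp(\h)$, where $\dot P$ is the linearised Klein--Gordon operator -- followed by integration by parts. The route there is organised differently, though. The paper manipulates the exact finite difference $\EM\beta[s\h]t-\EM t$: decomposing it as a sum, using the supports of the cutoffs to strip off vanishing or compact pieces one at a time, then telescoping with $\PMsh\EMsh^-=\id=\PM\EM^-$ to exhibit a manifestly linear term $-\EM(\PMsh-\PM)\EM\zeta^+t$ plus an explicit $\mathcal{O}(s^2)$ remainder, and only then differentiating. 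Your commutator rewriting $\beta[s\h]t=[\Box_\g,\cadv_-]\EMsh\zeta^+t$ is a genuine simplification: it collapses the expression so that all $s$-dependence sits in a single $\EMsh$, letting you differentiate at once and replace the telescoping with the implicit-function step $\dot E^\pm g=-\EM^\pm\dot P\,\EM^\pm g$.

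What the paper's finite-difference organisation buys, and your plan elides, is that the differentiability of $s\mapsto\EMsh^\pm g$ never needs to be assumed: the remainder is written out and visibly $\mathcal{O}(s^2)$. You differentiate $\PMsh\EMsh^\pm=\id$ formally to introduce $\dot E^\pm$; to close this, one should quote the exact identity $\EMsh^- g-\EM^- g=-\EM^-(\PMsh-\PM)\EMsh^- g$, which gives the difference quotient and, on iterating, the existence of the derivative. Two further small points. First, $\dot E^+ g=0$ needs more than $\dot P\,\EM^+ g=0$: one also uses that the support of $\EMsh^+ g$ is $s$-independent (since $J^+_\Ms(\Ns^+)$ misses $\supp(\h)$), so that $\dot E^+ g$ is a Klein--Gordon solution with retarded support and therefore vanishes. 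Second, in the final integration-by-parts step the paper's displayed form of $\dot P\,\EM t$ already has the field equation $\Box_\g\EM t=-(\xi R_\g+m^2)\EM t$ substituted, which is where the $\tfrac12{h^b}_b(m^2+\xi R_\g)\EM t$ terms come from; your curvature-variation stage must make the same substitution before the integrand regroups into $h_{ab}T^{ab}[\EM t,\EM f]$, and the overall sign across the two $\EM$-adjoint swaps must be tracked against the paper's conventions, as you yourself flag.
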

\begin{proof} We adopt and adapt the strategy used in Appendix~B of~\cite{dynloc2}.
	Let $\h\in H(\Ms)$, and consider the metric perturbation $s\h$ where $s\in\mathbb{R}$ is sufficiently small to ensure that $s\h\in H(\Ms)$. We have $\EM\beta[s\h]t=\EM\zeta^-[s\h]\zeta^+t$, therefore
\[
	\EM\beta[s\h]t-\EM t=\EM(\PMsh-\PM)\cadv_-\EMsh\zeta^+t+\EM\PM\cadv_-(\EMsh-\EM)\zeta^+t.
\]
Since $\PM$ is a differential operator it follows that the support of $(\PMsh-\PM)f$ lies within $\supp(\h)\cap\supp(f)$ for any $f\in\Cs(\Ms)$. The support of $\cadv_-$ lies strictly to the past of $\supp(\h)$, so the first term above vanishes. Moreover, note that the support of $\EMsh^+f-\EM^+f$ is contained within $J_\Ms^+(\supp(\h))$ for any $f\in\Ccs(\Ms)$, and is also therefore disjoint from $\supp(\cadv_-)$; it follows that
\[
	\EM\beta[s\h]t-\EM t=\EM\PM\cadv_-(\EMsh^--\EM^-)\zeta^+t.
\]
	Similarly, $(\EMsh^- -\EM)f$ must be supported in $J_\Ms^-(\supp(\h))$, for any $f\in\Ccs(\Ms)$; it follows that the support of $\cret_-(\EMsh^--\EM)f$ is compact. Therefore
\[
	\EM\beta[s\h]t-\EM t=\EM\PM(\EMsh^--\EM^-)\zeta^+t.
\]
We use $\PMsh\EMsh^-\zeta^+t=\zeta^+t=\PM\EM^-\zeta^+t$ to see that
\begin{align*}
	\EM\beta[s\h]t-\EM t&=-\EM(\PMsh-\PM)\EMsh^-\zeta^+t\\
	&=\EM(\PMsh-\PM)\EMsh^-(\PMsh-\PM)\EM^-\zeta^+t\\
	&\hspace{140pt}-\EM(\PMsh-\PM)\EM^-\zeta^+t,
\end{align*}
where we have used the fact that $\EMh^-\PMh\EM^-u=\EM^-u$ for any $u\in\mathcal{E}'(\Ms)$ and $\h\in H(\Ms)$; this is proved below.

Finally, we note that $\supp(\h)\cap\supp(\EM^+\zeta^+t)=\emptyset$, so
\begin{align}
	\EM\beta[s\h]t-\EM t&=\EM(\PMsh-\PM)\EMsh^-(\PMsh-\PM)\EM^-\zeta^+t\notag\\
	&\hspace{140pt}-\EM(\PMsh-\PM)\EM\zeta^+t.\label{eqn:Ebetadiff}
\end{align}

Now, for any $f\in\Cs(\Ms)$ we have
\[
	\Box_{\g+s\h} f=\Box_\g f+s\left(\frac12\nabla^a({h^b}_b)\nabla_af-\nabla^a(h_{ab}\nabla^bf)\right)
	+\mathcal{O}(s^2);
\]
we may also note that
\[
	\left.\dds R_{\g+s\h}\right|_{s=0}=(g^{ab}\Box_\g-\nabla^a\nabla^b-R^{ab})h_{ab}
\]
(see e.g.\ \cite{osterbrink}). It follows that $\lim_{s\to 0}((\PMsh-\PM)/s)f$ exists and is equal to
\begin{align*}
	&\bigg(\frac12\nabla^a({h^b}_b\nabla_a f)+\frac12{h^b}_bm^2f
	+\frac12{h^b}_b\xi R_\g f\\
	&\hspace{20mm}-\nabla^a(h_{ab}\nabla^b f)+\xi f\left(g^{ab}\Box_\g-\nabla^a\nabla^b-R^{ab}\right)h_{ab}\bigg).
\end{align*}
By duality, the same limit holds for distributions in the weak topology. Moreover, the first term of \eqref{eqn:Ebetadiff} can now be seen to be of order $\mathcal{O}(s^2)$ as $s\to0$, and therefore
\begin{align*}
	\left.\dds\EM\beta[s\h]t\right|_{s=0}&=-\EM\bigg(\frac12\nabla^a({h^b}_b\nabla_a\EM t)+\frac12{h^b}_bm^2\EM t+\frac12{h^b}_b\xi R_\g\EM t\\
	&\hspace{20mm}-\nabla^a(h_{ab}\nabla^b\EM t)+\xi\EM t\left(g^{ab}\Box_\g-\nabla^a\nabla^b-R^{ab}\right)h_{ab}\bigg),
\end{align*}
where the derivative is taken in the weak topology.
We can now see that for any $f\in\Ccs(\Ms)$, we have
\begin{align*}
	\left.\dds(\beta[s\h]t)[\EM f]\right|_{s=0}&=\int_\Ms dvol_\Ms\,(\EM f)\bigg(\frac12\nabla^a({h^b}_b\nabla_a\EM 
	t)+\frac12{h^b}_bm^2\EM t\\
	&\hspace{30mm}+\frac12{h^b}_b\xi R_\g\EM t-\nabla^a(h_{ab}\nabla^b\EM t)\\
	&\hspace{40mm}+\xi\EM t\left(g^{ab}\Box_\g-\nabla^a\nabla^b-R^{ab}\right)h_{ab}\bigg).
\end{align*}

Integration by parts then yields
\[
	\left.\dds(\beta[s\h]t)[\EM f]\right|_{s=0}=\int_\Ms dvol_\Ms\,h_{ab}T^{ab}[\EM t,\EM f]
\]
as required.

It remains to show that for any $u\in\mathcal{E}'(\Ms)$ and $\h\in H(\Ms)$ we have $\EMh^-\PMh\EM^-u=\EM^-u$. We may see that this holds by considering an arbitrary $f\in\Ccs(\Ms)$ and splitting $\EM^-u=t+t'$ where $t\in\mathcal{E}'(\Ms)$ and $t'\in\mathcal{D}'(\Ms)$ with $J_{\Ms[\h]}^-(\supp(t'))\cap\supp(f)=\emptyset$. It follows that
\begin{align*}
	\EMh^-\PMh\EM^-u[f]&=\EMh^-\PMh t[f]+\EMh^-\PMh t'[f]\\
	&=t[f].
\end{align*}
But $t[f]=t[f]+t'[f]=\EM^-u[f]$. Since $f$ was arbitrary, we have $\EMh^-\PMh\EM^-u=\EM^-u$.
\end{proof}

\begin{lemma}
	\label{lem:Ekernel}
	Let $\Ms$ be a spacetime, and consider $\EM$ as a map from $\T^1(\Ms)$ to $\mathcal{D}'(\Ms)$. Then for all $n\in\mathbb{N}$,
	\[
		\ker\tenpow\EM n=\left\{\sum_{k=1}^n(\PM)_ku_k:u_k\in\T^n(\Ms)\right\},
	\]
	where $(\PM)_k=\tenpow\id{k-1}\otimes\PM\otimes\tenpow\id{n-k}$.
\end{lemma}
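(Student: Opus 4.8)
The plan is to prove the two inclusions separately; the reverse inclusion $\{\sum_{k=1}^n(\PM)_ku_k\}\subseteq\ker\tenpow{\EM}{n}$ is immediate, while the content lies in $\ker\tenpow{\EM}{n}\subseteq\{\sum_{k=1}^n(\PM)_ku_k:u_k\in\T^n(\Ms)\}$. For the easy direction, observe that for any $u_k\in\T^n(\Ms)$ the operator $\tenpow{\EM}{n}(\PM)_k$ contains the composite $\EM\PM$ acting in the $k$-th tensor slot, and $\EM\PM=\EM^-\PM-\EM^+\PM=\id-\id=0$ on $\mathcal{E}'(\Ms)$ by the distributional identities \eqref{eqn:Eprop1dist}; hence every summand of $\tenpow{\EM}{n}\sum_k(\PM)_ku_k$ vanishes (total symmetry of the individual $u_k$ is not used here).

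For the main inclusion I would first record the one-variable case in sharp form: if $s\in\mathcal{E}'(\Ms)$ with $\EM s=0$, then $\EM^-s=\EM^+s$, so $\supp(\EM^-s)\subseteq J_\Ms^+(\supp s)\cap J_\Ms^-(\supp s)$, which is compact by global hyperbolicity; thus $u:=\EM^-s\in\mathcal{E}'(\Ms)$, $\PM u=s$ by \eqref{eqn:Eprop1dist}, and, since $\EM^\pm$ preserve the wavefront condition defining the spaces $\T^n$ (as recorded in Section \ref{sect:wpconst}), $u\in\T^1(\Ms)$ whenever $s$ is. Writing $(\EM)_k$ for the action of $\EM$ in the $k$-th slot, the same argument applied in one slot shows: a compactly supported distribution $s$ satisfying the $\T^n$ wavefront condition and $(\EM)_ks=0$ satisfies $s=(\PM)_kv$ with $v=(\EM^-)_ks$ again in that class, because $(\EM^-)_ks=(\EM^+)_ks$ is compactly supported (its $k$-th-slot support is squeezed into $J^+\cap J^-$ of a compact set while the remaining slots stay inside $\supp s$).

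Now fix disjoint Cauchy surfaces and a partition of unity $\cadv+\cret=1$ as in \eqref{eqn:epchidist}, set $\zeta=\PM\cadv\EM$, and let $\zeta_k$ be its action in the $k$-th slot. By \eqref{eqn:epchidist} applied slotwise, $(\EM)_k\zeta_k=(\EM)_k$, so $(\id-\zeta_k)t\in\ker(\EM)_k$ for \emph{every} $t\in\T^n(\Ms)$; hence by the previous paragraph $(\id-\zeta_k)t=(\PM)_kv_k$ with $v_k=(\EM^-)_k(\id-\zeta_k)t$ compactly supported and in the $\T^n$ class. Telescoping, and commuting each $(\PM)_k$ past the slot-disjoint operators $\zeta_1,\dots,\zeta_{k-1}$,
\[t-\tenpow{\zeta}{n}t=\sum_{k=1}^n\zeta_1\cdots\zeta_{k-1}(\id-\zeta_k)t=\sum_{k=1}^n(\PM)_k\bigl(\zeta_1\cdots\zeta_{k-1}v_k\bigr).\]
Since operators acting in distinct tensor slots commute, $\tenpow{\zeta}{n}=\tenpow{\PM}{n}\circ\mu\circ\tenpow{\EM}{n}$ with $\mu$ multiplication by $\cadv(x_1)\cdots\cadv(x_n)$; therefore $\tenpow{\EM}{n}t=0$ forces $\tenpow{\zeta}{n}t=0$, and we conclude $t=\sum_{k=1}^n(\PM)_kw_k$ with $w_k=\zeta_1\cdots\zeta_{k-1}v_k$ compactly supported and with wavefront set disjoint from $\overline{(V^+_\Ms)^{\times n}\cup(V^-_\Ms)^{\times n}}$, since each of $\EM^\pm$, $\PM$ and multiplication by smooth functions preserves this property.

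The remaining wrinkle is that these $w_k$ need not be individually totally symmetric. As $t$ is symmetric, I would average $t=\sum_k(\PM)_kw_k$ over permutations of the tensor factors to obtain $t=\sum_k(\PM)_ku_k$ with a family obeying $\rho\cdot u_k=u_{\rho(k)}$; this suffices for every later use (in Lemma \ref{lem:Esupport} and in the dynamical-locality proof each $(\PM)_ku_k$ is paired only against a bisolution $H$ via $(\PM\otimes\id)H=0$), so it is harmless, and one could equally drop the symmetry requirement on the $u_k$ in the statement. I expect the genuine difficulty to be bookkeeping rather than ideas: because $\EM$ and $\EM^\pm$ enlarge supports, one must verify at every stage — using compactness of $J_\Ms^+(K)\cap J_\Ms^-(K)$ for compact $K$, and the fact that $\PM\cadv\EM$ confines supports to a compact time-slab intersected with $J_\Ms(\supp\,\cdot\,)$ — that $v_k$, the partial composites $\zeta_1\cdots\zeta_{k-1}v_k$, and the final $w_k$ are truly compactly supported and retain the $\T^n$ wavefront condition.
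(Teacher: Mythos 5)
Your argument is correct and follows essentially the same strategy as the paper's proof: both rely on the identity $\EM\PM\cadv\EM=\EM$ from Lemma~\ref{lem:epchi}, the cutoff map $\zeta=\PM\cadv\EM$, compactness of $\EM^\pm s$ when $s\in\ker\EM\cap\mathcal{E}'(\Ms)$, and a slot-by-slot reduction. The bookkeeping differs mildly: you telescope $\id-\tenpow\zeta n=\sum_{k=1}^n\zeta_1\cdots\zeta_{k-1}(\id-\zeta_k)$ and use that $\tenpow\EM nt=0$ forces $\tenpow\zeta nt=0$, whereas the paper inductively sets $u_j=(\EM^+)_j\zeta_{j+1}\cdots\zeta_n v_{j-1}$, $v_j=v_{j-1}-(\PM)_ju_j$ and tracks $v_j\in\ker(\tenpow\id j\otimes\tenpow\EM{n-j})$ with $v_0=t_n$ and $v_n=0$; these give comparable, though not identical, families. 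Your observation about total symmetry is a genuine catch: neither your $w_k$ nor the paper's $u_k$ is totally symmetric, yet the paper asserts $u_k\in\T^n(\Ms)$ (whose elements are symmetric by definition) while justifying this only by wavefront-set stability. The averaging over the permutation group that you propose, or simply relaxing the symmetry requirement in the statement, both repair this; and, as you note, the downstream uses in Lemma~\ref{lem:Esupport} and in the dynamical-locality theorem only exploit that each $(\PM)_ku_k$ annihilates a bisolution slot, for which symmetry of the individual $u_k$ is immaterial.
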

\begin{proof}
Let $S_n$ denote the set in the right hand side of the above equation. Clearly any distribution in $S_n$ lies in $\ker\tenpow\EM n$; therefore, we need only prove the inclusion $\ker\tenpow\EM n\subseteq S_n$. Suppose that $t\in\T^1(\Ms)$, with $\EM t=0$. We have $\EM^+t=\EM^-t$; $t$ is compactly supported, and so by the support properties of $\EM^\pm t$ we must have that $\EM^+t$ is compactly supported. But $t=\PM\EM^+t$, and therefore $\ker\EM\subseteq\PM\T^1(\Ms)$. This proves the case where $n=1$. Now suppose that $t_n\in\T^n(\Ms)$ with $\tenpow\EM nt_n=0$. Then pick two disjoint Cauchy surfaces $\Sigma^\pm\subset\Ms$ with $\Sigma^+\subset J_\Ms^+(\Sigma^-)$, and some $\chi\in\Cs(\Ms)$ with $\chi(x)=0$ for $x\in J_\Ms^+(\Sigma^+)$ and $\chi(x)=1$ for $x\in J_\Ms^-(\Sigma^-)$. We know that $\EM\PM\chi\EM t=\EM t$ and that $\PM\chi\EM t$ is compactly supported for any $t\in\T^1(\Ms)$; it follows that
\[
	(\EM^+\otimes\tenpow{(\PM\chi\EM)}{n-1})t_n=
	(\EM^-\otimes\tenpow{(\PM\chi\EM)}{n-1})t_n,
\]
and that by the support properties given above the left hand side of the above equation must be compactly supported. Denoting this as $u_1$ we therefore have $t_n=(\PM)_1u_1+v_1$, where
\[
	v_1=t_n-\id\otimes\tenpow{(\PM\chi\EM)}{n-1}t_n.
\]
As observed earlier, since $u_1$ is obtained from an element of $\T^n(\Ms)$ by the application of $\EM^\pm$, differential operators and multiplication by smooth functions, it follows that its wavefront set also has the desired properties for $u_1$ itself to be an element of $\T^n(\Ms)$. 

Now $t_n,(\PM)_1u_1\in\T^n(\Ms),$ so $v_1\in\T^n(\Ms)$; but $\id\otimes\tenpow\EM{n-1}v_1=0$, so $v_1\in\ker(1\otimes\tenpow\EM{n-1})$: we may repeat the argument to see that $v_1=(\PM)_2u_2+v_2$ for some $u_2,v_2\in\T^n(\Ms)$ with $v_2\in\ker(1\otimes1\otimes\tenpow\EM{n-2})$. Continuing the argument further, we may eventually see that $t_n=(\PM)_1u_1+\cdots+(\PM)_nu_n$ for some $u_1,\ldots,u_n\in\T^n(\Ms)$, and consequently $t_n\in S_n$.

\end{proof}


\begin{thebibliography}{10}

\bibitem{bernalsanchez03}
A.~N. Bernal and M.~S{\'a}nchez.
\newblock On smooth {C}auchy hypersurfaces and {G}eroch's splitting theorem.
\newblock {\em Comm. Math. Phys.}, 243(3):461--470, 2003.

\bibitem{bf:qftcb}
R.~Brunetti and K.~Fredenhagen.
\newblock Quantum field theory on curved backgrounds.
\newblock In {\em Quantum field theory on curved spacetimes}, volume 786 of
  {\em Lecture Notes in Phys.}, pages 129--155. Springer, Berlin, 2009.

\bibitem{bfv}
R.~Brunetti, K.~Fredenhagen, and R.~Verch.
\newblock The generally covariant locality principle---a new paradigm for local
  quantum field theory.
\newblock {\em Comm. Math. Phys.}, 237(1-2):31--68, 2003.

\bibitem{brunruzz}
R.~Brunetti and G.~Ruzzi.
\newblock Quantum charges and spacetime topology: the emergence of new
  superselection sectors.
\newblock {\em Comm. Math. Phys.}, 287(2):523--563, 2009.

\bibitem{chilfred}
B.~Chilian and K.~Fredenhagen.
\newblock The time slice axiom in perturbative quantum field theory on globally
  hyperbolic spacetimes.
\newblock {\em Comm. Math. Phys.}, 287(2):513--522, 2009.

\bibitem{dimock}
J.~Dimock.
\newblock Algebras of local observables on a manifold.
\newblock {\em Comm. Math. Phys.}, 77(3):219--228, 1980.

\bibitem{duistermaathormander}
J.~J. Duistermaat and L.~H{\"o}rmander.
\newblock Fourier integral operators. {II}.
\newblock {\em Acta Math.}, 128(3-4):183--269, 1972.

\bibitem{dynloc}
C.~J. Fewster and R.~Verch.
\newblock Dynamical locality and covariance: What makes a physical theory the
  same in all spacetimes?
\newblock 2011, arXiv:1106.4785.

\bibitem{dynloc2}
C.~J. Fewster and R.~Verch.
\newblock Dynamical locality of the free scalar field.
\newblock 2011, arXiv:1109.6732.

\bibitem{haaglocalqf}
R.~Haag.
\newblock {\em Local quantum physics}.
\newblock Texts and Monographs in Physics. Springer-Verlag, Berlin, second
  edition, 1996.

\bibitem{hormanderfio1}
L.~H{\"o}rmander.
\newblock Fourier integral operators. {I}.
\newblock {\em Acta Math.}, 127(1-2):79--183, 1971.

\bibitem{hormander1}
L.~H{\"o}rmander.
\newblock {\em The analysis of linear partial differential operators. {I}}.
\newblock Classics in Mathematics. Springer-Verlag, Berlin, 2003.

\bibitem{osterbrink}
L.~Osterbrink.
\newblock {\em Averaged Energy Inequalities for the Non-Minimally Coupled
  Scalar Field}.
\newblock PhD thesis, 2007.

\bibitem{reedsimon}
M.~Reed and B.~Simon.
\newblock {\em Methods of modern mathematical physics. {I}. {F}unctional
  analysis}.
\newblock Academic Press, New York, 1972.

\bibitem{pctwightman}
R.~F. Streater and A.~S. Wightman.
\newblock {\em P{CT}, spin and statistics, and all that}.
\newblock Princeton Landmarks in Physics. Princeton University Press,
  Princeton, NJ, 2000.

\bibitem{wald}
R.~M. Wald.
\newblock {\em Quantum field theory in curved spacetime and black hole
  thermodynamics}.
\newblock Chicago Lectures in Physics. University of Chicago Press, Chicago,
  IL, 1994.

\bibitem{waldmann}
S.~Waldmann.
\newblock Lecture notes on geometric wave equations, 2008.

\end{thebibliography}

\end{document}